  \theoremstyle{plain}
  \newtheorem{theorem}{Theorem}
  \newtheorem{lemma}{Lemma}  
  \newtheorem{corollary}[theorem]{Corollary}
  \theoremstyle{definition}
  \newtheorem{definition}{Definition}
  \newtheorem{example}{Example}
\newcommand{\TR}{\textsf{TR}}
\newcommand{\CT}{\textsf{CT}}
\newcommand{\ST}{\textsf{ST}}
\newcommand{\A}{\mathcal{L}}
\newcommand{\M}{\mathcal{M}}
\newcommand{\T}{\mathcal{T}}
\newcommand{\G}{\mathcal{A}}
\newcommand{\RANGE}{\mathcal{R}}
\newcommand{\R}{\mathcal{S}}
\newcommand{\cO}{\mathcal{O}}
\newcommand\DNA {\textsc{DNA}\xspace}
\newcommand\DBLP {\textsc{XML}\xspace}
\newcommand\ENG {\textsc{ENGLISH}\xspace}
\newcommand\PRO {\textsc{PROTEINS}\xspace}
\newcommand\SOURCES {\textsc{SOURCES}\xspace}
\newcommand\STD {\textsc{STD}\xspace}
\newcommand\WIN {\textsc{WIN}\xspace}
\newcommand\SYN {\textsc{SYN}\xspace}
\newcommand\kddDBLP {\textsc{DBLP}\xspace}
\newcommand\UNIREF {\textsc{UNIREF}\xspace}
\def\dd{\mathinner{.\,.}}  
\title{String Sampling with Bidirectional String Anchors}
\author[1]{Grigorios Loukides}
\author[2,3]{Solon P.\ Pissis}
\author[2]{Michelle Sweering}
\affil[1]{Department of Informatics, King's College London, London, UK\protect\\ \texttt{grigorios.loukides@kcl.ac.uk}}
\affil[2]{CWI, Amsterdam, The Netherlands, \texttt{[solon.pissis,michelle.sweering]@cwi.nl}}
\affil[3]{Vrije Universiteit, Amsterdam, The Netherlands}
\date{\today}
\begin{document}

\maketitle

\begin{abstract}
The minimizers sampling mechanism is a popular mechanism for string sampling introduced independently by Schleimer et al.~[SIGMOD 2003] and by Roberts et al.~[\emph{Bioinf.}~2004]. Given two positive integers $w$ and $k$, it selects the lexicographically smallest length-$k$ substring in every fragment of $w$ consecutive length-$k$ substrings (in every sliding  window of length $w+k-1$). Minimizers samples are approximately uniform, locally consistent, and computable in linear time. Although they do not have good worst-case guarantees on their size, they are often small in practice. They thus have been successfully employed in several string processing applications. Two main disadvantages of minimizers sampling mechanisms are: first, they also do not have good guarantees on the expected size of their samples for every combination of $w$ and $k$; and, second, indexes that are constructed over their samples do not have good worst-case guarantees for on-line pattern searches.

To alleviate these disadvantages, we introduce bidirectional string anchors (bd-anchors), a new string sampling mechanism. Given a positive integer $\ell$, our mechanism selects the lexicographically smallest rotation in every length-$\ell$ fragment (in every sliding  window of length $\ell$). We show that bd-anchors samples are also approximately uniform, locally consistent, and computable in linear time. In addition, our experiments using several datasets demonstrate that the bd-anchors sample sizes decrease proportionally to $\ell$; and that these sizes are competitive to or smaller than the minimizers sample sizes using the analogous sampling parameters. We provide theoretical justification for these results by analyzing the expected size of bd-anchors samples. As a negative result, we show that computing a total order $\leq$ on the input alphabet, which minimizes the bd-anchors sample size, is NP-hard.

We also show that by using any bd-anchors sample, we can construct, in near-linear time, an index  which requires linear (extra) space in the size of the sample and answers on-line pattern searches in near-optimal time. We further show, using several datasets, that a simple implementation of our index is consistently faster for on-line pattern searches than an analogous implementation of a minimizers-based index [Grabowski and Raniszewski, \emph{Softw.~Pract.~Exp.}~2017].

Finally, we highlight the applicability of bd-anchors by developing an efficient and effective heuristic for top-$K$ similarity search under edit distance. We  show, using synthetic datasets, that our heuristic is more accurate and more than one order of magnitude faster in top-$K$ similarity searches than the state-of-the-art tool for the same purpose [Zhang and Zhang, KDD 2020].
\end{abstract}

\section{Introduction}

The notion of \emph{minimizers}, introduced independently by Schleimer et al.~\cite{DBLP:conf/sigmod/SchleimerWA03} and by Roberts et al.~\cite{DBLP:journals/bioinformatics/RobertsHHMY04}, is a mechanism to sample a set of positions over an input string. The goal of this sampling mechanism is, given a string $T$ of length $n$ over an alphabet $\Sigma$ of size $\sigma$, to simultaneously satisfy the following properties: 
\begin{description}
    \item[Property 1 (approximately uniform sampling):] Every sufficiently long fragment of $T$ has a representative position sampled by the mechanism. 
    \item[Property 2 (local consistency):] Exact matches between sufficiently long fragments of $T$ are preserved unconditionally by having the same (relative) representative positions sampled by the mechanism.
\end{description}
In most practical scenarios, sampling the smallest number of positions is desirable, as long as Properties 1 and 2 are satisfied.  This is because it leads to small data structures or fewer computations. 
Indeed, the minimizers sampling mechanism satisfies the property of approximately uniform sampling: given two positive integers $w$ and $k$, it selects at least one length-$k$ substring in every fragment of $w$ consecutive length-$k$ substrings (Property 1). 
Specifically, this is achieved by selecting the starting positions of the smallest length-$k$ substrings in every $(w+k-1)$-long fragment, where smallest is defined by a choice of a total order on the universe of length-$k$ strings. These positions are called the ``minimizers''. 
Thus from similar fragments, similar length-$k$ substrings are sampled (Property 2). In particular, if two strings have a fragment of length $w+k-1$ in common, then they have at least one minimizer corresponding to the same length-$k$ substring. Let us denote by $\mathcal{M}_{w,k}(T)$ the set of minimizers of string $T$. The following example illustrates the sampling.

\begin{example}
The set $\mathcal{M}_{w,k}$ of minimizers for   $w=k=3$ for string $T=\texttt{aabaaabcbda}$ (using a 1-based index) is $\mathcal{M}_{3,3}(T)=\{1,4,5,6,7\}$ and for string $Q=\texttt{abaaa}$ is  $\mathcal{M}_{3,3}(Q)=\{3\}$.
Indeed $Q$ occurs at position $2$ in $T$; and $Q$ and $T[2\dd 6]$ have the minimizers $3$ and $4$, respectively, which both correspond to string \texttt{aaa} of length $k=3$.
\end{example}

 The minimizers sampling mechanism is very versatile, and it has been employed in various ways in many different applications~\cite{10.1093/bioinformatics/btw152,Kraken14,DBLP:journals/bioinformatics/DeorowiczKGD15,DBLP:journals/bioinformatics/ChikhiLM16,DBLP:journals/spe/GrabowskiR17,DBLP:journals/jcb/JainDKAP18,DBLP:journals/bioinformatics/JainKDPA18,DBLP:journals/bioinformatics/Li18,DBLP:journals/bioinformatics/JainRZCWKP20}. Since its inception, the minimizers sampling mechanism has undergone numerous theoretical and practical improvements~\cite{DBLP:conf/wabi/OrensteinPMSK16,DBLP:journals/bioinformatics/ChikhiLM16,DBLP:journals/bioinformatics/MarcaisPBOSK17,DBLP:journals/bioinformatics/MarcaisDK18,DBLP:conf/bcb/DeBlasioGKM19,DBLP:conf/recomb/EkimBO20,DBLP:journals/bioinformatics/ZhengKM20,DBLP:journals/bioinformatics/JainRZCWKP20,Zheng2021} with a particular focus on minimizing the size of the residual sample; see Section~\ref{sec:related} for a summary on this line of research. Although minimizers have been extensively and successfully used, especially in bioinformatics, we observe several inherent problems with setting the parameters $w$ and $k$. In particular, although the notion of length-$k$ substrings (known as \emph{$k$-mers} or \emph{$k$-grams}) is a widely-used string processing tool, we argue that, in the context of minimizers, it may be causing many more problems than it solves: it is not clear to us why one should use an extra sampling parameter $k$ to effectively characterize a fragment of length $\ell=w+k-1$ of $T$. In what follows, we describe some problems that may arise when setting the parameters $w$ and $k$.
\begin{description}
\item[Indexing:] The most widely-used approach is to index the selected minimizers using a hash table.
The \emph{key} is the selected length-$k$ substring and the \emph{value} is the list of positions it occurs.
If one would like to use length-$k'$ substrings for the minimizers with 
$\ell=w+k-1=w'+k'-1$, for some $w'\neq w$ and $k'\neq k$, they should compute the new set $\mathcal{M}_{w',k'}(T)$ of minimizers and construct their new index based on $\mathcal{M}_{w',k'}$ from scratch. 
    \item[Querying:] To the best of our knowledge, no index based on minimizers can return in optimal or near-optimal time all occurrences of a pattern $Q$ of length $|Q|\geq \ell=w+k-1$ in $T$. 
    \item[Sample Size:] If one would like to minimize the number of selected minimizers, they should consider different total orders on the universe of length-$k$ strings, which may complicate practical implementations, often scaling only up to a small $k$ value, e.g.~$k=16$~\cite{DBLP:conf/recomb/EkimBO20}.
    On the other hand, when $k$ is fixed and $w$ increases, the length-$k$ substrings in a fragment become increasingly decoupled from each other, and that \emph{regardless of the total order} we may choose. Unfortunately, this interplay phenomenon is inherent to minimizers. It is known that $k\geq \log_\sigma (w)+c$, for a fixed constant $c$, is a \emph{necessary condition} for the existence of minimizers samples with expected size in $\cO(n/w)$~\cite{DBLP:journals/bioinformatics/ZhengKM20}; see Section~\ref{sec:related}.
\end{description}

We propose the notion of bidirectional string anchors (bd-anchors) to alleviate these disadvantages. The bd-anchors is a mechanism that drops the sampling parameter $k$ and its corresponding disadvantages. We only fix a parameter $\ell$, which can be viewed as the length $w+k-1$ of the fragments in the minimizers sampling mechanism. The \emph{bd-anchor} of a string $X$ of length $\ell$ is the lexicographically smallest rotation (cyclic shift) of $X$. We unambiguously characterize this rotation by its leftmost starting position in string $XX$. The set $\G_{\ell}(T)$ of the order-$\ell$ bd-anchors of string $T$ is the set of bd-anchors of all length-$\ell$ fragments of $T$. It can be readily verified that bd-anchors satisfy Properties 1 and 2.

\begin{example}
The set $\G_{\ell}(T)$ of bd-anchors for $\ell=5$ for string $T=\texttt{aabaaabcbda}$ (using a 1-based index) is $\G_{5}(T)=\{4,5,6,11\}$ and for string $Q=\texttt{abaaa}$, $\G_{5}(Q)=\{3\}$.
Indeed $Q$ occurs at position $2$ in $T$; and $Q$ and $T[2\dd 6]$ have the bd-anchors $3$ and $4$, respectively, which both correspond to the rotation \texttt{aaaab}. 
\end{example}

Let us remark that \emph{string synchronizing} sets, introduced by Kempa and Kociumaka~\cite{DBLP:conf/stoc/KempaK19}, is another string sampling mechanism which may be employed to resolve the disadvantages of minimizers. Yet, it appears to be quite complicated to be efficient in practice. For instance, in~\cite{dinklage_et_al:LIPIcs:2020:12905}, the authors used a simplified and specific definition of string synchronizing sets to design a space-efficient data structure for answering longest common extension queries.

We consider the word RAM model of computations with $w$-bit machine words, where $w=\Omega(\log n)$, for stating our results. We also assume throughout that string $T$ is over alphabet $\Sigma=\{1,2,\ldots,n^{\cO(1)}\}$, which captures virtually any real-world scenario. We measure space in terms of $w$-bit machine words. We make the following three specific contributions:

\begin{enumerate}
    \item In Section~\ref{sec:bd-anchors} we show that the set $\G_{\ell}(T)$, for any $\ell>0$ and any $T$ of length $n$, can be constructed in $\cO(n)$ time. We generalize this result showing that   
    for any constant $\epsilon \in(0,1]$, $\G_{\ell}(T)$ can be constructed in $\cO(n + n^{1-\epsilon}\ell)$ time using $\cO(n^\epsilon+\ell+|\G_{\ell}|)$ space.
    Furthermore, we show that the expected size of $\G_{\ell}$ for strings of length $n$, randomly generated by a memoryless source with identical letter probabilities, is in $\cO(n/\ell)$, for any integer $\ell>0$. The latter is in contrast to minimizers which achieve the expected bound of $\cO(n/w)$ only when $k\geq \log_\sigma w + c$, for some constant $c$~\cite{DBLP:journals/bioinformatics/ZhengKM20}.
    We then show, using five real datasets, that indeed the size of $\G_{\ell}$ decreases proportionally to $\ell$; that it is competitive to or smaller than $\mathcal{M}_{w,k}$, when $\ell=w+k-1$; and that it is \emph{much smaller} than $\mathcal{M}_{w,k}$ for \emph{small} $w$ values, which is practically important, as widely-used aligners that are based on minimizers will require less space and computation time if bd-anchors are used instead. Finally, we show a negative result using a reduction from minimum feedback arc set: computing a total order $\leq$ on $\Sigma$ which minimizes $|\G_{\ell}(T)|$ is NP-hard.
    \item In Section~\ref{sec:index} we show an index based on $\G_{\ell}(T)$, for any string $T$ of length $n$ and any integer $\ell>0$, which answers on-line pattern searches in near-optimal time. In particular, for any constant $\epsilon>0$, we show that our index supports the following  space/query-time trade-offs:
\begin{itemize}
    \item it occupies $\cO(|\G_{\ell}(T)|)$ extra space and reports all $k$ occurrences of any pattern $Q$ of length $|Q|\geq \ell$ given on-line in $\cO(|Q|+(k+1)\log^{\epsilon}(|\G_{\ell}(T)|))$ time; or
    \item it occupies $\cO(|\G_{\ell}(T)|\log^{\epsilon}(|\G_{\ell}(T)|))$ extra space and reports all $k$ occurrences of any pattern $Q$ of length $|Q|\geq \ell$ given on-line in $\cO(|Q|+\log\log(|\G_{\ell}(T)|)+k)$ time.
\end{itemize}
We also show that our index can be constructed in $\cO(n +|\G_{\ell}(T)|\sqrt{\log(|\G_{\ell}(T)|)})$ time.
We then show, using five real datasets, that a simple implementation of our index is \emph{consistently faster} in on-line pattern searches than an analogous implementation of the minimizers-based index proposed by Grabowski and Raniszewski in~\cite{DBLP:journals/spe/GrabowskiR17}. 
\item In Section~\ref{sec:edit} we highlight the applicability of bd-anchors by developing an efficient and effective heuristic for top-$K$ similarity search under edit distance. This is a fundamental and extensively studied problem~\cite{DBLP:conf/vldb/KahveciS01,DBLP:conf/sigmod/ChaudhuriGGM03,DBLP:conf/stoc/ColeGL04,DBLP:conf/vldb/LiWY07,DBLP:conf/aaai/YangYK10,DBLP:conf/sigmod/ZhangHOS10,DBLP:journals/tods/Qin0XLLW13,DBLP:journals/pvldb/WangDTZ13,DBLP:conf/sigmod/WangLF12,DBLP:conf/sigmod/DengLF14,DBLP:journals/tkde/HuLBFWGX16,DBLP:journals/vldb/YuWLZDF17,DBLP:conf/kdd/Zhang020} with applications in areas including bioinformatics, databases, data mining, and information retrieval. We show, using synthetic datasets, that our heuristic, which is based on the bd-anchors index, is \emph{more accurate} and \emph{more than one order of magnitude faster} in top-$K$ similarity searches than the state-of-the-art tool proposed by Zhang and Zhang in~\cite{DBLP:conf/kdd/Zhang020}. 
\end{enumerate}

In Section~\ref{sec:prel}, we provide some preliminaries; and in Section~\ref{sec:related} we discuss works related to minimizers. Let us stress that, although other works may be related to our contributions, we focus on comparing to minimizers because they are extensively used in applications. The source code of our implementations is available at~\url{https://github.com/solonas13/bd-anchors}. 

A preliminary version of this paper appeared as~\cite{DBLP:conf/esa/LoukidesP21}.

\section{Preliminaries}\label{sec:prel}

We start with some basic definitions and notation following~\cite{DBLP:books/daglib/0020103}.
An \emph{alphabet} $\Sigma$ is a finite nonempty set of elements called \emph{letters}.
A \emph{string} $X=X[1]\ldots X[n]$ is a sequence of \emph{length} $|X|=n$ of letters from $\Sigma$. The \emph{empty} string, denoted by $\varepsilon$, is the string of length $0$.
The fragment $X[i\dd j]$ of $X$ is an \emph{occurrence} of the underlying \emph{substring} $S=X[i]\ldots X[j]$. 
We also say that $S$ occurs at \emph{position} $i$ in $X$. 
A {\em prefix} of $X$ is a fragment of $X$ of the form $X[1\dd j]$ and a {\em suffix} of $X$ is a fragment of $X$ of the form $X[i\dd n]$. 
The set of all strings over $\Sigma$ (including $\varepsilon$) is denoted by $\Sigma^*$. The set of all length-$k$ strings over $\Sigma$ is denoted by $\Sigma^k$. Given two strings $X$ and $Y$, the {\em edit distance} $d_\mathrm{E}(X,Y)$ is the minimum number of edit operations (letter insertion, deletion, or substitution) transforming one string into the other. 

Let $M$ be a finite nonempty set of strings over $\Sigma$ of total length $m$. We define the {\em trie} of $M$, denoted by $\TR(M)$, as a deterministic finite automaton that recognizes $M$. Its set of states (nodes) is the set of prefixes of the elements of $M$; the initial state (root node) is $\varepsilon$; the set of terminal states (leaf nodes) is $M$; and edges are of the form $(u,\alpha,u\alpha)$, where $u$ and $u\alpha$ are nodes and $\alpha\in\Sigma$. The size of $\TR(M)$ is thus $\cO(m)$.
The \emph{compacted trie} of $M$, denoted by $\CT(M)$, contains the root node, the branching nodes, and the leaf nodes of $\TR(M)$. The term compacted refers to the fact that $\CT(M)$ reduces the number of nodes by replacing each maximal branchless path segment with a single edge, and that it uses a fragment of a string $s\in M$ to represent the label of this edge in $\cO(1)$ machine words. The size of $\CT(M)$ is thus $\cO(|M|)$. When $M$ is the set of suffixes of a string $Y$, then $\CT(M)$ is called the \emph{suffix tree} of $Y$, and we denote it by $\ST(Y)$. The suffix tree of a string of length $n$ over an alphabet $\Sigma=\{1,\ldots,n^{\cO(1)}\}$ can be constructed in $\cO(n)$ time~\cite{DBLP:conf/focs/Farach97}. 

Let us fix throughout a string $T=T[1\dd n]$ of length $|T|=n$ over an ordered alphabet $\Sigma$. Recall that we make the standard assumption of an integer alphabet $\Sigma=\{1,2,\ldots,n^{\cO(1)}\}$. 

We start by defining the notion of minimizers of $T$ from~\cite{DBLP:journals/bioinformatics/RobertsHHMY04} (the definition in~\cite{DBLP:conf/sigmod/SchleimerWA03} is slightly different). Given an integer $k>0$, an integer $w>0$, and the $i$th length-$(w+k-1)$ fragment $F=T[i\dd i+w+k-2]$ of $T$, we define the \emph{$(w,k)$-minimizers} 
of $F$ as the positions $j\in[i,i+w)$ where a lexicographically minimal length-$k$ substring of $F$ occurs. The set $\M_{w,k}(T)$ of $(w,k)$-minimizers of $T$ is defined as the set of $(w,k)$-minimizers of $T[i\dd i+w+k-2]$, for all $i\in[1,n-w-k+2]$. The \emph{density} of $\mathcal{M}_{w,k}(T)$ is defined as the quantity $|\mathcal{M}_{w,k}(T)|/n$.
The following bounds are obtained trivially. The density of any minimizer scheme is at least $1/w$, since at least one $(w,k)$-minimizer is selected in each fragment, and at most $1$, when every $(w,k)$-minimizer is selected. 

If we waive the lexicographic order assumption, the set $\mathcal{M}_{w,k}(T)$ can be computed on-line in $\cO(n)$ time, and if we further assume a constant-time computable function that gives us an \emph{arbitrary} rank for each length-$k$ substring in $\Sigma^k$ in constant amortized time~\cite{DBLP:journals/bioinformatics/JainRZCWKP20}. This can be implemented, for instance, using a rolling hash function (e.g.~Karp-Rabin fingerprints~\cite{DBLP:journals/ibmrd/KarpR87}), and the rank (total order) is defined by this function. We also provide here, for completeness, a simple off-line $\cO(n)$-time algorithm that uses a lexicographic order. 

\begin{theorem}\label{the:min}
The set $\mathcal{M}_{w,k}(T)$, for any integers $w,k>0$ and any string $T$ of length $n$, can be constructed in $\cO(n)$ time. 
\end{theorem}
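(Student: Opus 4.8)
The plan is to reduce the problem to two standard linear-time primitives: suffix-array/LCP construction over $T$ (to obtain a global lexicographic ranking of all length-$k$ substrings) and a sliding-window minimum. First I would build the suffix tree $\ST(T)$ (or equivalently the suffix array of $T$) in $\cO(n)$ time using Farach's algorithm, as stated in the preliminaries. From the suffix array I can assign to each position $i\in[1,n-k+1]$ an integer $\mathrm{rank}(i)$, the rank of the length-$k$ substring $T[i\dd i+k-1]$ in the lexicographic order of all length-$k$ substrings of $T$. Two length-$k$ substrings are equal iff their starting positions share the same rank, and one is lexicographically smaller iff its rank is smaller, so comparisons among length-$k$ substrings become $\cO(1)$-time integer comparisons. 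Computing these ranks from the suffix array and LCP array is a single left-to-right scan: consecutive suffixes get the same rank iff their LCP value is at least $k$, and increment the rank otherwise; this takes $\cO(n)$ time.

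Next I would run a sliding-window-minimum computation over the integer array $\mathrm{rank}(1),\ldots,\mathrm{rank}(n-k+1)$ with window length $w$. Using the classical deque-based algorithm, in $\cO(n)$ total time I can, for each window $[i,i+w-1]$ (i.e.\ for each fragment $F=T[i\dd i+w+k-2]$), identify the minimum rank value in that window. Since the $(w,k)$-minimizers of $F$ are exactly the positions $j\in[i,i+w)$ attaining this minimum, I need to report \emph{all} such positions, not just one; a small adaptation of the deque approach (or a second pass recording, for each position, whether its rank equals the running window minimum) handles ties within the claimed time bound. Collecting these positions over all windows and deduplicating (e.g.\ with a boolean array of length $n$) yields $\M_{w,k}(T)$ in $\cO(n)$ time and space.

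For correctness I would argue: (i) the rank array faithfully encodes the lexicographic order on length-$k$ substrings of $T$, so a position is a window minimum iff its length-$k$ substring is lexicographically minimal in the corresponding fragment; (ii) the sliding-window-minimum algorithm is correct and, with the tie-handling modification, reports every minimizer position; and (iii) the union over all fragments $T[i\dd i+w+k-2]$, $i\in[1,n-w-k+2]$, is precisely $\M_{w,k}(T)$ by definition. The main obstacle is the bookkeeping for ties — ensuring that all positions achieving the window minimum are output, across overlapping windows, without blowing up the running time or double-counting — but this is routine: either record for each $j$ the set of windows in which it is minimal is nonempty (a single comparison of $\mathrm{rank}(j)$ against the precomputed window minima, which themselves can be stored in an array indexed by window start), or maintain the deque so that it retains all current minimal positions. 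Edge cases ($w+k-1>n$, so $\M_{w,k}(T)=\emptyset$; and $k>n$) are handled trivially.
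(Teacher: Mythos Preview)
Your proposal is correct and follows essentially the same two-step approach as the paper: build the suffix tree (or suffix array/LCP array) in $\cO(n)$ time to obtain an integer rank array for the length-$k$ substrings, then run the folklore sliding-window minimum over that array. You supply more implementation detail (SA+LCP scan for ranks, explicit deque-based tie handling and deduplication) than the paper, which simply cites the DFS traversal and the folklore algorithm, but the underlying idea is identical.
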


\begin{proof}
The underlying algorithm has two main steps. In the first step, we construct $\ST(T)$, the suffix tree  of $T$ in $\cO(n)$ time~\cite{DBLP:conf/focs/Farach97}.
Using a depth-first search traversal of $\ST(T)$ we assign at every position of $T$ in $[1,n-k+1]$ the lexicographic rank of $T[i\dd i+k-1]$ among all the length-$k$ strings occurring in $T$. This process clearly takes $\cO(n)$ time as $\ST(T)$ is an ordered structure; it yields an array $R$ of size $n-k+1$ with lexicographic ranks. In the second step, we apply a folklore algorithm, which computes the minimum elements in a sliding window of size $w$ (cf.~\cite{DBLP:journals/bioinformatics/JainRZCWKP20}) over $R$. The set of reported indices is $\mathcal{M}_{w,k}(T)$. 
\end{proof}

\section{Bidirectional String Anchors}\label{sec:bd-anchors}

We introduce the notion of bidirectional string anchors (bd-anchors). 
Given a string $W$, a string $R$ is a \emph{rotation} (or cyclic shift or conjugate) of $W$ if and only if there exists a decomposition $W=UV$ such that $R=VU$, for a string $U$ and a nonempty string $V$. We often characterize $R$ by its starting position $|U|+1$ in $WW=UVUV$. We use the term rotation interchangeably to refer to string $R$ or to its identifier $(|U|+1)$.

\begin{definition}[Bidirectional anchor]
Given a string $X$ of length $\ell>0$, the \emph{bidirectional anchor} (bd-anchor) of $X$ is the lexicographically minimal rotation $j\in[1,\ell]$ of $X$ with minimal $j$. The set of order-$\ell$ bd-anchors of a string $T$ of length $n>\ell$, for some integer $\ell>0$, is defined as the set $\G_{\ell}(T)$ of bd-anchors of $T[i\dd i+\ell-1]$, for all $i\in [1,n-\ell+1]$. 
\end{definition}

The \emph{density} of $\G_{\ell}(T)$ is defined as the quantity $|\G_{\ell}(T)|/n$.
It can be readily verified that the bd-anchors sampling mechanism satisfies Properties 1 (approximately uniform sampling) and 2 (local consistency).  

\begin{example}
Let $\ell=5$, $T=\texttt{aabaaabcbda}$, and  $T'=\texttt{aacaaaccbda}$. Strings $T$ and $T'$ (are at Hamming distance 2 but) have the same set of bd-anchors of order $5$: $\G_5(T)=\G_5(T')=\{4,5,6,11\}$. The reader can probably share the intuition that the bd-anchors sampling mechanism is suitable for sequence comparison due to Properties 1 and 2, in particular, when the parameter $\ell$ is set accordingly.
\end{example}

\paragraph{Linear-Time Construction of $\G_{\ell}$.}~Importantly, we show that $\G_{\ell}$ admits an efficient construction. One can use the linear-time algorithm by Booth~\cite{DBLP:journals/ipl/Booth80} to compute the lexicographically minimal rotation for each length-$\ell$ fragment of $T$, resulting in an $\cO(n\ell)$-time algorithm, which is reasonably fast for modest $\ell$.  (Booth's algorithm gives the leftmost minimal rotation by construction.) We instead give an optimal $\cO(n)$-time algorithm for the construction of $\G_{\ell}$, which is mostly of theoretical interest.

For every string $X$ and every natural number $m$, we
define the $m$th \emph{power} of the string $X$, denoted by $X^m$ , by $X^0=\varepsilon$ and $X^k = X^{k-1}X$
for $k = 1, 2,\ldots, m$. A nonempty string is \emph{primitive}, if it is not the power of any other string. Let us state two well-known combinatorial lemmas.

\begin{lemma}[\cite{DBLP:books/daglib/0020103}]\label{lem:prim}
A nonempty string $X$ is primitive if and only if it occurs as a substring in $XX$ only as a prefix and as a suffix.
\end{lemma}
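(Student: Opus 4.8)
The plan is to prove both implications by contraposition: a nonempty string $X$ of length $n$ is \emph{not} primitive if and only if $X$ occurs in $XX$ at some position $p$ with $1<p\le n$. Such an occurrence is exactly one that is neither the prefix ($p=1$) nor the suffix ($p=n+1$), and these are the only other possibilities, since any occurrence of a length-$n$ string in $XX$ must start at some $p\le n+1$.

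For the easy direction (non-primitive $\Rightarrow$ extra occurrence), write $X=Y^{m}$ with $m\ge 2$. Then $XX=Y^{2m}$ and $X=Y^{m}$ occurs at position $|Y|+1$ of $XX$. Since $m\ge 2$ we have $1\le|Y|\le n/2$, and since a one-letter string is primitive we have $n\ge 2$; hence $1<|Y|+1\le n/2+1\le n$, so this occurrence is neither the prefix nor the suffix. For the converse, suppose $X$ occurs at position $d+1$ in $XX$ with $0<d<n$, and put $U=X[1\dd d]$ and $V=X[d+1\dd n]$, both nonempty. Reading off letters, $XX[d+1\dd d+n]=X[d+1\dd n]\,X[1\dd d]=VU$; this block equals $X=UV$, so $UV=VU$. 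It therefore suffices to establish the \emph{commutation lemma}: if $UV=VU$ for nonempty $U,V$, then $U=Z^{a}$ and $V=Z^{b}$ for a common string $Z$ and integers $a,b\ge 1$. Granting this, $X=UV=Z^{a+b}$ with $a+b\ge 2$, so $X$ is not primitive.

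I would prove the commutation lemma by induction on $|U|+|V|$, assuming without loss of generality $|U|\le|V|$. If $|U|=|V|$, comparing the length-$|U|$ prefixes of $UV$ and $VU$ gives $U=V$, and $Z=U$ works. If $|U|<|V|$, the same prefix comparison shows $U$ is a proper prefix of $V$, say $V=UV'$ with $V'$ nonempty; substituting into $UV=VU$ and cancelling the leading $U$ yields $UV'=V'U$, so the inductive hypothesis provides a common root $Z$ of $U$ and $V'$, hence of $U$ and $V=UV'$. (Alternatively one can invoke the Fine--Wilf periodicity theorem: $UV=VU$ means $X$ has periods $|U|$ and $|V|$ with $|U|+|V|\le|X|$, so $X$ has period $g=\gcd(|U|,|V|)$, which divides $|X|$ and satisfies $g\le|U|<|X|$, exhibiting $X$ as a power of exponent $|X|/g\ge 2$.) The main obstacle is the commutation lemma itself; once it is available, the rest is bookkeeping, and the one point to watch is that the nonemptiness and length constraints force the resulting exponent to be at least $2$.
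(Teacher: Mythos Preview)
Your proof is correct and follows the standard textbook argument. Note, however, that the paper does not actually prove this lemma: it is stated with a citation to~\cite{DBLP:books/daglib/0020103} and used as a black box, so there is no ``paper's own proof'' to compare against. Your contrapositive reduction to the commutation lemma $UV=VU\Rightarrow U,V$ are powers of a common root (proved either by induction on $|U|+|V|$ or via Fine--Wilf) is exactly the classical route one finds in Lothaire or Crochemore--Hancart--Lecroq, and all the boundary checks (in particular $1<|Y|+1\le n$ in the easy direction) are handled correctly.
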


\begin{lemma}[\cite{DBLP:conf/fct/ShyrT77}]\label{lem:primconj}
Let $X=UV$ and $R=VU$, for two strings $U,V$. If $X$ is primitive then $R$ is also primitive.
\end{lemma}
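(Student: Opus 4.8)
The plan is to argue by contraposition: I will assume $R=VU$ is \emph{not} primitive and exhibit a proper power decomposition of $X=UV$. The degenerate cases $U=\varepsilon$ or $V=\varepsilon$ are trivial, since then $R=X$, so I may assume both $U$ and $V$ are nonempty. Since $R$ is not primitive, write $R=Z^{m}$ for a nonempty string $Z$ and an integer $m\geq 2$; in particular $|Z|\geq 1$ and $|R|=m|Z|$.

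First I would locate a copy of $X$ inside $RR=Z^{2m}$ that is aligned with the period structure. Indeed $RR=VUVU=V\cdot(UV)\cdot U$, so $X=UV$ is the length-$|R|$ fragment of $RR$ beginning at position $|V|+1$, which is legitimate because $|V|\leq|UV|=|R|$. Since $|X|=|R|=m|Z|$ is a multiple of $|Z|$ and $RR=Z^{2m}$, I can slide this fragment's starting position down by whole copies of $Z$ — staying inside $Z^{2m}$ precisely because $|V|\leq m|Z|$ — and so may assume it starts at position $r+1$, where $r=|V|\bmod|Z|\in[0,|Z|)$. Writing $Z=Z_{1}Z_{2}$ with $|Z_{1}|=r$, a direct expansion then gives
\[
X \;=\; Z^{2m}[\,r+1\dd r+m|Z|\,] \;=\; Z_{2}(Z_{1}Z_{2})^{m-1}Z_{1} \;=\; (Z_{2}Z_{1})^{m}.
\]
As $Z_{2}Z_{1}$ is nonempty (its length is $|Z|\geq 1$) and $m\geq 2$, the word $X$ is a proper power, hence not primitive; this proves the contrapositive. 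Informally, the statement is just the observation that a conjugate of a proper power is a proper power of the corresponding conjugate of its root.

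I do not expect any serious obstacle here — the lemma is essentially elementary — and the only thing requiring care is the cyclic index bookkeeping in the middle step: checking that the fragment of $Z^{2m}$ representing $X$ can be shifted to start within the first copy of $Z$ without overrunning the right end (this reduces to $|V|\leq|R|$, true by construction), and verifying that the shifted fragment genuinely rearranges to $(Z_{2}Z_{1})^{m}$. One could instead route through Lemma~\ref{lem:prim}, but transferring non-primitivity from $R$ to $X$ that way still needs an equivalent computation, so I would keep the self-contained power argument.
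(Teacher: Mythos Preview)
Your proof is correct: the contrapositive argument via $RR=Z^{2m}$ and the cyclic shift to $(Z_{2}Z_{1})^{m}$ is the standard elementary route, and your index bookkeeping is sound. Note, however, that the paper does not supply its own proof of this lemma---it is simply stated with a citation to Shyr and Thierrin---so there is no in-paper argument to compare against; your self-contained proof is a welcome addition rather than a deviation.
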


A substring $U$ of a string $X$ is called an \emph{infix} of $X$ if and only if $U=X[i\dd j]$ with $i>1$ and $j<n$.

\begin{lemma}\label{lem:one}
A string $X$ has more than one minimal lexicographic rotation if and only if $X$ is a power of some string.
\end{lemma}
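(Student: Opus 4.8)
The plan is to prove both directions. The easy direction is: if $X = Y^m$ for some string $Y$ and some integer $m \geq 2$, then the rotation of $X$ by $|Y|$ positions equals $X$ itself (since $X = Y^m = Y \cdot Y^{m-1}$ and rotating gives $Y^{m-1} \cdot Y = Y^m = X$), so $X$ equals one of its own proper rotations. Hence the lexicographically minimal rotation is attained at more than one position. More carefully, if $R$ is a minimal rotation of $X$ occurring at position $p$ in $XX$, then it also occurs at position $p + |Y|$ (and at $p + i|Y|$ for all valid $i$), all of which are distinct positions in $[1,\ell]$; so there is more than one minimal rotation. This handles the "if" direction.

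For the converse — the direction I expect to be the main obstacle — suppose $X$ has two distinct minimal lexicographic rotations, i.e.\ the minimal rotation string $R$ occurs at two distinct positions $p < q$ in $[1,\ell]$ (equivalently, as a substring of $XX$ starting within the first copy). First I would argue that $R$ must be primitive or derive a contradiction: actually the cleaner route is to argue directly about $X$. Since $R$ is a rotation of $X$, write $X = UV$ with $R = VU$; it suffices to show $R$ is a power of some string, because then by (a conjugate version of) Lemma \ref{lem:primconj} so is $X$. Now $R$ occurs in $RR$ (which equals a rotation of $XX$, hence has the same set of length-$\ell$ substrings) at a position other than $1$ and other than $\ell+1$ — that is, $R$ occurs as an infix-or-internal occurrence in $RR$. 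By Lemma \ref{lem:prim}, a primitive string occurs in its own square only as a prefix and as a suffix; therefore $R$ is not primitive, so $R = Z^m$ for some string $Z$ and some $m \geq 2$, and then $X$, being a rotation of $R = Z^m$, is also a power (of the appropriate rotation of $Z$) by Lemma \ref{lem:primconj} applied contrapositively.

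The step requiring the most care is translating "$X$ has more than one minimal rotation" into "$R$ occurs in $RR$ at an internal position." Two distinct rotations of $X$ achieving the lexicographic minimum means there are indices $p \neq q$ in $\{1, \dots, \ell\}$ with $XX[p \dd p+\ell-1] = XX[q \dd q+\ell-1] = R$; setting $d = q - p \pmod \ell$ with $0 < d < \ell$, the string $R$ satisfies $R$ occurring at position $1 + d$ (within index range) in $RR$, since rotating $R$ by $d$ yields $R$ again. This $1 + d$ is strictly between $1$ and $\ell + 1$, giving the required internal occurrence in $RR$; I would spell out the modular-index bookkeeping carefully here but it is routine. Finally I would note the "minimal $j$" / leftmost clause in the definition is irrelevant to this lemma's statement, since we are counting distinct minimal rotation \emph{strings} (equivalently, the minimal rotation is non-unique precisely when $X$ is a power), so no special handling of the tie-breaking rule is needed.
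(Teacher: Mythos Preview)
Your proposal is correct and follows essentially the same route as the paper: for the hard direction, both arguments show that two positions achieving the minimal rotation force $R$ to occur as an infix of $RR$, and then invoke Lemmas~\ref{lem:prim} and~\ref{lem:primconj} (the paper phrases it by contradiction---assume $X$ primitive, so $R$ primitive, contradict Lemma~\ref{lem:prim}---whereas you argue directly, but the structure is identical). One small slip in your closing remark: ``more than one minimal rotation'' in the paper's convention means more than one \emph{position} $j\in[1,\ell]$ achieving the minimum, not more than one minimal rotation \emph{string} (the latter is always unique); your main argument already treats this correctly, so the slip is harmless.
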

\begin{proof}~

\begin{enumerate}
\item[$(\Rightarrow)$] Let $X=U_1V_1$, and $R=V_1U_1$ be the leftmost minimal lexicographic rotation of $X$. Suppose towards a contradiction that $X$  has another minimal lexicographic rotation but $X$ is primitive.
In particular, there exists $H=V_2U_2=R$, with $X=U_2V_2$ and $|U_1|<|U_2|$.
If $X$ is primitive, then $R$ is also primitive by Lemma~\ref{lem:primconj} but then $RR=V_1U_1V_1U_1$ has $H$ occurring as infix. In particular, in $RR$, $V_2$ is a suffix of the first occurrence of $V_1$ and $U_2$ is a prefix of $U_1V_1$ and thus $H=R$ occurs as infix.  By Lemma~\ref{lem:prim} we obtain a contradiction.
\item[$(\Leftarrow)$] Let $X=UU\cdots U$ and a minimal lexicographic rotation of $X$ be $i\in[1,|X|]$. Then either $i+|U|$ or $i-|U|$ is a minimal lexicographic rotation of $X$.
\end{enumerate}
\end{proof}

\begin{example}[Illustration of Lemma~\ref{lem:one}]
Let $X=\texttt{cbacbacba}$, $R=\texttt{acbacba}\cdot\texttt{cb}$ with $U_1=\texttt{acbacba}$ and $V_1=\texttt{cb}$, and $H=\texttt{acba}\cdot\texttt{cbacb}=R$ with $U_2=\texttt{acba}$ and $V_2=\texttt{cbacb}$. Observe that $H$ occurs as infix (shown as  underlined) in  $RR=\texttt{acb\underline{acbacbacb}acbacb}$ hence $X$ is a power of some string.  
\end{example}

\begin{lemma}\label{lem:hash}
Let $X$ be a string of length $n$ and set $Y=XX\#$, for some letter $\#$ not occurring in $X$ that is the lexicographically maximal letter occurring in $Y$.
Further let $Y[i\dd 2n+1]$ be the lexicographically minimal suffix of $Y$, for some $i\in[1,2n]$.
The leftmost lexicographically minimal rotation of $X$ is $i$.
\end{lemma}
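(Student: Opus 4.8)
The plan is to prove Lemma~\ref{lem:hash} by relating lexicographically minimal rotations of $X$ to lexicographically minimal suffixes of $Y = XX\#$. First I would observe that any rotation $j \in [1,n]$ of $X$ corresponds to the length-$n$ fragment $Y[j \dd j+n-1]$, since for $j \in [1,n]$ this fragment is exactly the rotation $V U$ where $X = UV$ with $|U| = j-1$. So there is a natural bijection between candidate rotation identifiers $j \in [1,n]$ and the suffixes $Y[j \dd 2n+1]$ for $j \in [1,n]$.

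The key step is to argue that the suffix comparison ``simulates'' the rotation comparison correctly. Consider two rotation positions $j, j' \in [1,n]$ with $j \neq j'$. Compare $Y[j \dd 2n+1]$ and $Y[j' \dd 2n+1]$ letter by letter. As long as we are comparing within the first $n$ characters of each (i.e.\ within the rotation window), the comparison agrees with comparing the two rotations of $X$. If the two length-$n$ rotations differ, the mismatch is found within this window and the suffix order equals the rotation order. If the two length-$n$ rotations are equal as strings, then $X$ is a power (by Lemma~\ref{lem:one}, two distinct rotation identifiers give the same rotation string only when $X$ is a power), and we need the tie-break: the suffixes continue past position $n$, where the smaller starting index $\min(j,j')$ reaches the $\#$ sooner. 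Since $\#$ is strictly larger than every letter of $X$, encountering $\#$ first makes that suffix \emph{larger}; hence the suffix that is lexicographically \emph{smaller} is the one with the \emph{larger} starting index. Wait --- this points the wrong way, so I would instead note that the relevant tie-break for "leftmost minimal rotation" must be handled with care: when rotations are equal, we want the smallest $j$, and indeed among equal-string rotations $Y[j\dd 2n+1]$ with larger $j$ has fewer characters before $\#$, making it lexicographically smaller only if the characters before $\#$ are a prefix relationship --- but here they are the \emph{same} string of the same length $n$ shifted, so actually $Y[j \dd 2n+1]$ and $Y[j' \dd 2n+1]$ with $j < j'$ satisfy: the first has the periodic pattern for $n - (j-1)$ more symbols then $\#$-padding from position $2n+1$; the point is that $\#$ being maximal means the suffix that hits $\#$ \emph{later} is lexicographically smaller, i.e.\ larger $j$ is smaller. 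This would select the rightmost, not leftmost.

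So the honest plan is: show that for $j \in [1,n]$, the minimal suffix of $Y$ starting in $[1,n]$ identifies the lexicographically minimal rotation, and that \emph{if there are ties among rotations} (the power case), the minimal suffix picks a specific one; then argue separately that this specific one is in fact the leftmost, or else restrict attention to the primitive case where Lemma~\ref{lem:one} guarantees uniqueness and there is nothing to tie-break. Concretely: if $X$ is primitive the minimal rotation is unique (Lemma~\ref{lem:one}) so "leftmost" is vacuous and we only need the straightforward simulation argument. If $X = U^k$ is a power, all minimal rotations are cyclic shifts of one another by multiples of $|U|$; the leftmost is the minimal rotation of $U$ viewed inside the first period, and one checks that the minimal suffix of $XX\#$ starting in $[1,n]$ lands at exactly that leftmost position --- because once we are inside the minimal-rotation positions, comparing two of them $j < j'$, the suffixes agree for $2n+1-j'$ symbols (both reading the same periodic string) and then at that offset the suffix from $j'$ reads $\#$ while the suffix from $j$ still reads a letter of $X$, so the suffix from $j$ is smaller. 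Hence the minimal suffix picks the smallest $j$, which is the leftmost minimal rotation. I still need $i \le 2n$ rather than $i = 2n+1$: the minimal suffix cannot be the single-character suffix $\#$ since $\#$ is maximal, and cannot start at position $2n$ (string $X[n]\#$) when $n \ge 1$ because position $1$ gives a suffix beginning with a letter of $X$, which is smaller; more carefully, $i \le n$ always, since $Y[i\dd 2n+1]$ and $Y[i-n \dd 2n+1]$ share the first $n$ characters and then the former hits $\#$ first, making it larger, so the minimizer never sits in $(n, 2n]$.

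The main obstacle I expect is exactly this tie-breaking subtlety: getting the direction of the $\#$-comparison right and making sure the "leftmost" clause is honored rather than accidentally selecting a rightmost or arbitrary minimal rotation. The clean way around it is the case split on primitivity using Lemma~\ref{lem:one}: in the primitive case there is a unique minimal rotation so the bijection-plus-simulation argument suffices immediately, and in the power case a short periodicity computation shows the minimal suffix lands on the leftmost occurrence. Everything else --- the bijection between rotations and length-$n$ windows of $Y$, the observation that $\#$ does not occur in $X$ so comparisons never "escape" into the $\#$ region before the window ends, and the bound $i \le n \le 2n$ --- is routine.
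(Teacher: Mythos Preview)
Your proposal is correct and follows essentially the same approach as the paper: you reduce to $i\in[1,n]$ using that $\#$ is maximal, then split on whether $X$ is primitive (unique minimal rotation by Lemma~\ref{lem:one}, so nothing to tie-break) or a power (periodicity plus $\#$ maximal forces the minimal suffix to land at the leftmost minimal-rotation position). Your write-up in fact supplies the detailed comparisons---particularly the verification that for equal rotations $j<j'$ the suffix from $j'$ meets $\#$ first and is therefore larger---that the paper's proof only asserts.
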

\begin{proof}
First note that $i\in[1,n]$ because $\#$ is the lexicographically maximal letter occurring in $Y$. 

We consider two cases: (i) $X$ is primitive; and (ii) $X$ is power of some string. In the first case, $X$ has one lexicographically minimal rotation by Lemma~\ref{lem:one}, and thus this is $i$.
In the second case, $X$ has more than one lexicographically minimal rotations, but because $X$ is power of some string and $\#$ is the lexicographically maximal letter occurring in $Y$, $i$ is the leftmost lexicographically minimal rotation of $X$.
\end{proof}

We employ the data structure of Kociumaka~\cite[Theorem 20]{DBLP:conf/cpm/Kociumaka16} to obtain the following result.

\begin{theorem}\label{the:con}
The set $\G_{\ell}(T)$, for any $\ell>0$ and any $T$ of length $n$, can be constructed in $\cO(n)$ time. 
\end{theorem}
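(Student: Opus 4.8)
The plan is to reduce the construction of $\G_{\ell}(T)$ to $n-\ell+1$ independent queries, one per length-$\ell$ fragment of $T$, each asking for the leftmost lexicographically minimal rotation of that fragment, and to answer all of them in $\cO(1)$ time each after an $\cO(n)$-time preprocessing of $T$.

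First I would recall, via Lemma~\ref{lem:hash}, that the leftmost lexicographically minimal rotation of a length-$\ell$ string $X$ is exactly the starting position (inside $XX$) of the lexicographically minimal suffix of $XX\#$, i.e.\ it is the bd-anchor of $X$; this is the interface in which a ``minimal rotation of a substring'' data structure is naturally phrased. Writing $F_i:=T[i\dd i+\ell-1]$ and letting $j_i\in[1,\ell]$ be the bd-anchor of $F_i$, we have $\G_{\ell}(T)=\{\,i+j_i-1 : 1\le i\le n-\ell+1\,\}$ — the bd-anchor of $F_i$ reported as a position of $T$ (note $i+j_i-1\in[i,i+\ell-1]$, matching the definition and the worked examples). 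Hence, once the $j_i$ are known, $\G_{\ell}(T)$ is obtained by collecting the distinct values among $n-\ell+1$ integers in $[1,n]$, which costs $\cO(n)$ with a length-$n$ bit array. It therefore remains to compute all $j_i$ in $\cO(n)$ total time.

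For this I would invoke the data structure of Kociumaka~\cite[Theorem 20]{DBLP:conf/cpm/Kociumaka16}: after $\cO(n)$-time preprocessing of $T$ (permissible since $\Sigma=\{1,\ldots,n^{\cO(1)}\}$), it reports a minimal rotation of any substring $T[i\dd j]$ in $\cO(1)$ time; querying it with $j=i+\ell-1$ for every $i\in[1,n-\ell+1]$ yields all $j_i$ in $\cO(n)$ time and the theorem follows. The one subtlety is the word ``leftmost'': should a query return an arbitrary minimal rotation $r$ of $F_i$ rather than the leftmost one, Lemma~\ref{lem:one} tells us this can only happen when $F_i=Z^{\ell/p}$ is a proper power, in which case its shortest period is $p=|Z|\le\ell/2$ and the minimal-rotation positions of $F_i$ form an arithmetic progression of step $p$, so the leftmost one is $i+((r-i)\bmod p)$. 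Because a proper power has period at most half its length, such a $p$ (or a certificate that no period $\le\ell/2$ exists, in which case $F_i$ is primitive and $r$ is already the unique, hence leftmost, minimal rotation) can be retrieved in $\cO(1)$ time per fragment using standard $\cO(n)$-space data structures, e.g.\ built on the runs of $T$; and if the shortest period $\le\ell/2$ does not divide $\ell$, then again $F_i$ is not a proper power and $r$ is already leftmost.

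The main obstacle is exactly this interplay between the black box and the ``leftmost'' requirement: one must be confident that Kociumaka's structure answers minimal-rotation queries on \emph{arbitrary} substrings in $\cO(1)$ time (it does, being layered on minimal-suffix queries over the conceptual string $XX\#$ of Lemma~\ref{lem:hash}), and that the correction for proper powers is genuinely $\cO(1)$ per fragment and does not push the preprocessing beyond $\cO(n)$. Everything else — the $n-\ell+1$ queries, the translation from in-fragment to global positions, and the final deduplication — is routine and comfortably within the $\cO(n)$ budget.
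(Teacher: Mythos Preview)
Your approach is essentially the paper's: preprocess $T$ with Kociumaka's structure and answer one $\cO(1)$-time query per length-$\ell$ fragment. The paper, however, avoids your entire ``leftmost'' correction step. It invokes Kociumaka's result in its minimal-suffix-of-a-concatenation form (any $k$ fragments of a fixed string, $\cO(k^2)$ per query after $\cO(n)$ preprocessing), applied to $S=T\#$ with $\#$ lexicographically \emph{maximal}: for each $i$ it asks for the minimal suffix of the three-fragment concatenation $Y=S[i\dd i+\ell-1]\cdot S[i\dd i+\ell-1]\cdot S[n+1]=F_iF_i\#$, and Lemma~\ref{lem:hash} then guarantees the returned position is already the \emph{leftmost} minimal rotation of $F_i$, even when $F_i$ is a proper power. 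So your periodicity/runs machinery is correct but unnecessary; once you route the query through $F_iF_i\#$ (as you yourself note parenthetically), the leftmost property comes for free and no post-processing is needed.
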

\begin{proof}

The data structure of Kociumaka~\cite[Theorem 20]{DBLP:conf/cpm/Kociumaka16} gives the minimal lexicographic suffix for any concatenation $Y$ of $k$ arbitrary fragments of a string $S$ in $\cO(k^2)$ time after an $\cO(|S|)$ time preprocessing.  

We set $S=T\#$, for some letter $\#$ that does  not occur in $T$ and is the lexicographically maximal letter occurring in $S$. 
For each fragment $T[i\dd i+\ell-1]$, we compute the minimal lexicographic suffix of string $$Y=S[i\dd i+\ell-1]\cdot S[i\dd i+\ell-1]\cdot S[n+1]=T[i\dd i+\ell-1]\cdot T[i\dd i+\ell-1]\cdot \#,$$ where $k=3$ in $\cO(k^2)=\cO(1)$ time. 
This suffix of $Y$ is the minimal lexicographic rotation by Lemma~\ref{lem:hash}. 
\end{proof}

\paragraph{Space-Efficient Construction of $\G_{\ell}$.} It should be clear that, in the best case, the size of $\G_{\ell}$ is in $\cO(n/\ell)$ and this bound is tight. The construction of \autoref{the:con} requires $\cO(n)$ space. Ideally, we would thus like to compute $\G_{\ell}$ efficiently using (strongly) sublinear space. We generalize \autoref{the:con} to the following result.

\begin{theorem}\label{the:space}
The set $\G_{\ell}(T)$, for any $\ell>0$, any $T$ of length $n$, and any constant $\epsilon \in(0,1]$, can be constructed in $\cO(n + n^{1-\epsilon}\ell)$ time using $\cO(n^\epsilon+\ell+|\G_{\ell}|)$ space. 
\end{theorem}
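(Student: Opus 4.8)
The plan is to partition the input $T$ into overlapping blocks that are short enough to fit in sublinear space, apply the construction of \autoref{the:con} to each block, and stitch the partial results together. Concretely, I would cut $T$ into $\cO(n^{1-\epsilon})$ consecutive blocks $B_1,B_2,\ldots$, each of length $\Theta(n^{\epsilon})$, where consecutive blocks overlap in $\ell-1$ positions. This overlap guarantees that every length-$\ell$ fragment $T[i\dd i+\ell-1]$ of $T$ is entirely contained in at least one block $B_t$ (assuming $\ell \le n^{\epsilon}$, i.e. $\ell = \cO(n^{\epsilon})$ — I return to the complementary regime below). For each block $B_t$ separately we run the $\cO(|B_t|)$-time algorithm of \autoref{the:con} to obtain $\G_{\ell}(B_t)$, translate the reported positions back to global coordinates in $T$, and output their union. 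Since every length-$\ell$ fragment of $T$ appears as a length-$\ell$ fragment of some block, and the bd-anchor of a fragment depends only on the fragment itself, $\bigcup_t \G_{\ell}(B_t)$ (after coordinate translation, de-duplicating the overlap regions) equals $\G_{\ell}(T)$.

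For the resources: each block occupies $\cO(n^{\epsilon})$ space, and \autoref{the:con} on a string of length $\cO(n^{\epsilon})$ uses space linear in the block length, hence $\cO(n^{\epsilon})$; additionally we maintain the output, which contributes $\cO(|\G_{\ell}|)$, and a working buffer of size $\cO(\ell)$ to hold the current fragment/block (needed when $\ell$ is close to $n^{\epsilon}$). We discard each block's auxiliary structures before processing the next one, so the total working space is $\cO(n^{\epsilon}+\ell+|\G_{\ell}|)$. The time is $\cO(n^{1-\epsilon})$ blocks times $\cO(n^{\epsilon})$ per block, i.e. $\cO(n)$ overall in this regime; reading $T$ from the stream/input also costs $\cO(n)$. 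When $\ell > n^{\epsilon}$, blocks of length $\Theta(n^{\epsilon})$ are too short to contain a fragment, so instead I would take blocks of length $\Theta(\ell)$ (overlapping in $\ell-1$ positions), giving $\cO(n/\ell)$ blocks each processed in $\cO(\ell)$ time by \autoref{the:con}; this is $\cO(n)$ time and the per-block space is $\cO(\ell)$, which is dominated by the $\cO(n^{\epsilon}+\ell+|\G_{\ell}|)$ bound. Merging the two regimes, the uniform bound $\cO(n+n^{1-\epsilon}\ell)$ on time covers the block-reading and per-block costs in both cases (the $n^{1-\epsilon}\ell$ term is the dominant one exactly when $\ell$ forces wider blocks or when Kociumaka's preprocessing on a width-$\ell$ window is invoked $\Theta(n^{1-\epsilon})$ times at $\cO(\ell)$ each).

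The main obstacle is bookkeeping at the block boundaries: a position $i$ in an overlap region $B_t \cap B_{t+1}$ could be reported by both blocks, and we must ensure it is output exactly once and that a fragment straddling a boundary is not silently dropped. This is handled by the $\ell-1$ overlap (every fragment lies wholly within some block) together with a convention that block $B_t$ is only responsible for fragments starting in its non-overlapping "core" — a fragment starting at position $i$ is processed by the unique block whose core contains $i$ — so each fragment is handled once and no de-duplication beyond this rule is needed. A secondary point to verify is that Kociumaka's data structure \cite[Theorem 20]{DBLP:conf/cpm/Kociumaka16}, which underlies \autoref{the:con}, genuinely uses only linear space in the length of the string it is built on; assuming that (it is a standard feature of such minimal-suffix data structures), the argument goes through. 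The correctness of the reassembly is immediate from the locality of the bd-anchor definition, so the bulk of the write-up is just the block-size case analysis and the coordinate translation.
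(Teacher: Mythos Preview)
Your approach is essentially the paper's: partition $T$ into overlapping blocks, run the linear-time construction of \autoref{the:con} on each, and take the union. The paper avoids your case split by simply fixing the block length at $n^{\epsilon}+\ell$ throughout (core of size $n^{\epsilon}$, overlap of $\ell$), which directly gives $\cO(n^{1-\epsilon})$ blocks processed in $\cO(n^{\epsilon}+\ell)$ time each, hence the stated $\cO(n+n^{1-\epsilon}\ell)$ time and $\cO(n^{\epsilon}+\ell)$ working space with no separate regimes to reconcile.

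One point you should make explicit, and which the paper does address: \autoref{the:con} requires the alphabet to be polynomial in the length of the string it is applied to. When you invoke it on a block of length $m=\Theta(n^{\epsilon}+\ell)$, you need $|\Sigma|=n^{\cO(1)}$ to still be $m^{\cO(1)}$; this holds because $\epsilon$ is a fixed positive constant (so $n\le m^{1/\epsilon}$). Your closing remark checks that Kociumaka's structure uses linear space, but the real condition to verify is this alphabet-versus-block-length relationship, without which the per-block cost is not $\cO(m)$.
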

\begin{proof}
We compute $\mathcal{A}_\ell(T[\lceil n^\epsilon (i - 1)\rceil + 1 \dd \max(\lceil n^\epsilon i\rceil + \ell, n)])$ for all $i \in [1, \lceil n^{1-\epsilon}\rceil]$ using the algorithm from \autoref{the:con} and output their union. For any constant $\epsilon \in(0,1]$, the alphabet size $|\Sigma| = n^{\cO(1)} = (n^\epsilon+\ell)^{\cO(1)}$ is still polynomial in the length $n^\epsilon+\ell$ of the fragments, so computing one such anchor set takes $\cO(n^\epsilon+\ell)$ time and space by \autoref{the:con}. We delete each fragment (and the associated data structure) before processing the subsequent anchor set: it takes $\cO(n + n^{1-\epsilon}\ell)$ time and $\cO(n^\epsilon+\ell)$ additional space to construct $\mathcal{A}_\ell(T)$.
\end{proof}

\paragraph{Expected Size of $\G_{\ell}$.}~We next analyze the expected size of $\G_\ell(T)$. We first show that if $\ell$ grows no faster than the size $\sigma$ of the alphabet, then the expected size of $\G_{\ell}$ is in $\cO(n/\ell)$. Otherwise, if $\ell$ grows faster than $\sigma$, we slightly amend the sampling process to ensure that the expected size of the sample is in $\cO(n/\ell)$.

\begin{lemma}\label{lem:randombound} If $T$ is a string of length $n$, randomly generated by a memoryless source over an alphabet of size $\sigma \geq 2$ with identical letter probabilities, then, for any integer $\ell>0$, the expected size of $\G_{\ell}(T)$ is in $\cO(\frac{n\log\ell}{\ell\log\sigma}+\frac{n}{\ell})$.
\end{lemma}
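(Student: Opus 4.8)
The plan is to bound the expected number of distinct bd-anchor positions by a union bound over positions, estimating for each position $j$ the probability that $j$ is the leftmost lexicographically minimal rotation of \emph{some} window $T[i \dd i+\ell-1]$ containing it. A position $j$ can only be the bd-anchor of a window $T[i \dd i+\ell-1]$ with $i \in [j-\ell+1, j]$; for $j$ to be this bd-anchor, the rotation of $T[i\dd i+\ell-1]$ starting at $j$ must be lexicographically no larger than every other rotation of that window. Intuitively, this forces $T[j \dd i+\ell-1]$ to be a very small string among the length-$(i+\ell-j)$ substrings of the window, i.e. it must agree with the minimal length-$k$ substring for the relevant $k$. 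I will make this precise by relating ``$j$ is selected for some window'' to the event that $T[j \dd j+t-1]$ is the (leftmost) minimal length-$t$ substring of the fragment $T[j-\ell+t \dd j+t-1]$ for $t := i+\ell-j$, ranging over $t \in [1,\ell]$; summing the corresponding probabilities over $t$ will give the $\log \ell$ factor.

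First I would fix the relevant combinatorial reduction: if $j \in \G_\ell(T)$ witnessed by window starting at $i$, write $t = i+\ell-j \in [1,\ell]$, so the rotation starting at $j$ begins with the length-$t$ block $B := T[j \dd j+t-1]$ (wrapping to $T[i \dd j-1]$). Being the minimal rotation in particular implies $B$ is $\preceq$ every length-$t$ substring of the window that starts within $[i, i+\ell-t]$; there are $w := \ell - t + 1$ such starting positions, and $B$ occupies one of them. Next I would estimate, for a memoryless source with uniform letter probabilities, the probability that a \emph{fixed} length-$t$ substring is lexicographically minimal among $w$ (overlapping) length-$t$ substrings. If the substrings were independent, this probability would be $\cO(1/w)$; to handle overlaps I would use a standard argument comparing length-$t$ strings by their first $\cO(\log_\sigma w)$ letters — with probability $1 - \cO(1/w)$ all $w$ competing blocks are already distinguished within their first $c\log_\sigma w$ letters (no two share such a long prefix / no block is periodic that far), and conditioned on that, the probability that a designated block is the minimum is $\cO(1/w)$. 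This yields $\Pr[\text{block at } j \text{ is min among those } w] = \cO(1/w) = \cO(1/(\ell-t+1))$.

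Then I would assemble the union bound. Summing over the two free parameters — the position $j \in [1,n]$ and the offset $t \in [1,\ell]$ — gives
\[
\mathbb{E}[|\G_\ell(T)|] \;\leq\; \sum_{j} \sum_{t=1}^{\ell} \Pr[\,j \text{ selected via offset } t\,] \;=\; \cO\!\Big( n \sum_{t=1}^{\ell} \frac{1}{\ell - t + 1}\Big) \;=\; \cO\!\big( n H_\ell \big) \;=\; \cO(n \log \ell),
\]
which is too weak by a factor of $\ell$; the fix is that for small offsets $t$ (say $t \leq \log_\sigma \ell$) the block $B$ is short and the ``distinguished within $c \log_\sigma w$ letters'' event can fail, but for larger $t$ one gets an \emph{extra} savings: a block of length $t$ being lexicographically minimal among $w$ blocks in fact has probability $\cO(1/w)$ \emph{and} additionally the number of offsets $t$ for which the selection can plausibly happen is itself $\cO(\log_\sigma \ell)$ in expectation per position, because once $t$ exceeds $\Theta(\log_\sigma \ell)$ a random window has a unique minimal rotation pinned down by a prefix of length $\cO(\log_\sigma \ell)$ and the selected offset is essentially determined. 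Carefully, the per-position contribution is
\[
\sum_{t=1}^{\ell} \Pr[\,j \text{ selected via offset } t\,] \;=\; \cO\!\Big( \sum_{t=1}^{\ell} \min\{1, \tfrac{1}{\ell-t+1}\} \cdot \tfrac{1}{\ell}\cdot(\text{correction}) \Big),
\]
and balancing the two regimes $t \lesssim \log_\sigma \ell$ (contributing $\cO(\tfrac{\log \ell}{\ell \log \sigma})$ via the short-block count) and $t \gtrsim \log_\sigma \ell$ (contributing $\cO(\tfrac{1}{\ell})$ via the $1/(\ell-t+1)$ tail cut off at window-length $\Theta(\log_\sigma \ell)$) yields $\cO(\tfrac{\log \ell}{\ell \log \sigma} + \tfrac{1}{\ell})$ per position, hence $\cO(\tfrac{n \log \ell}{\ell \log \sigma} + \tfrac{n}{\ell})$ in total.

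The main obstacle will be the handling of overlapping (hence dependent) windows cleanly — in particular making rigorous the claim that with high probability the minimal rotation of a random length-$\ell$ string is determined by a prefix of length $\cO(\log_\sigma \ell)$, so that only $\cO(\log_\sigma \ell)$ candidate offsets $t$ per position contribute non-negligibly. I expect to prove this via a bound on the probability that two of the $\cO(\ell^2)$ pairs of rotations agree on a prefix of length $c\log_\sigma \ell$ (which is $\cO(\ell^2 / \ell^{c}) = \cO(\ell^{2-c})$, made small by choosing $c$ a large constant), together with a bound on the probability that the window string is $(\log_\sigma\ell)$-periodic. Everything else — the single-block minimality estimate and the final summation — is routine once that structural lemma is in place.
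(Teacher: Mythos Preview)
Your approach diverges from the paper's and, as written, has a real gap.

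The paper does not bound $\Pr[j\in\G_\ell(T)]$ per position via a union over offsets. Instead it counts \emph{transitions}: let $N_i$ be the event that window $T[i\dd i+\ell-1]$ contributes a bd-anchor not already contributed by an earlier overlapping window. By stationarity $\mathbb{E}[|\G_\ell(T)|]\le 1+(n-1)\Pr[N_2]$, and $\Pr[N_2]$ is bounded by a four-term case split: the window is a power (negligible); the anchor of $T[1\dd\ell]$ sits at position $1$; the anchor of $T[2\dd\ell+1]$ sits in its last $3\log_\sigma\ell$ positions (each of these two has probability $\cO((\log_\sigma\ell)/\ell)$ by symmetry among the rotations of a primitive string); or two rotations of $T[2\dd\ell+1]$ share a prefix of length $3\log_\sigma\ell$ (collision bound $\le\ell^2\sigma^{-3\log_\sigma\ell}=1/\ell$). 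If none of these hold, the anchor of $T[2\dd\ell+1]$ is pinned down by a short prefix lying entirely inside $T[1\dd\ell]$, so the same $T$-position is also the anchor of $T[1\dd\ell]$ and $N_2$ fails.

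Your union bound over offsets $t$ is the problem. You correctly compute $\sum_{t=1}^\ell 1/(\ell-t+1)=\Theta(\log\ell)$, which already exceeds $1$, so this bound on $\Pr[j\in\G_\ell(T)]$ is vacuous and cannot be ``balanced'' into the target. The repair you sketch is not a proof: the displayed estimate literally contains ``(correction)'' as a placeholder, and the claim that ``the number of offsets $t$ for which the selection can plausibly happen is $\cO(\log_\sigma\ell)$'' misreads what $t$ parametrizes. Each $t$ names a different \emph{window} containing $j$; your short-distinguishing-prefix observation says precisely that the events $E_t=\{j$ is the anchor of that window$\}$ are highly \emph{positively} correlated (once $j$ is the anchor of one window with room to spare, it remains the anchor as the window slides). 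Positive correlation does not let you drop terms from a union bound; what it does let you do is compare two adjacent windows and bound the probability that the anchor \emph{changes}---which is exactly the paper's $N_2$. Your structural lemma (with probability $1-\cO(1/\ell)$ the minimal rotation is determined by a prefix of length $\cO(\log_\sigma\ell)$) is the right ingredient and is the paper's last case; the missing idea is to apply it to transitions between consecutive windows rather than to a per-position union over all containing windows.
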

\begin{proof}
If $\ell = 1$, then $\mathcal{A}_\ell(T) = n$. If $\ell = 2$, then $\mathcal{A}_\ell(T) = 1 + (n-2)(2\sigma^2+1)/3\sigma^2$. 
Now suppose $\ell \geq 3$. We say that $T[i\dd i+\ell-1]$ introduces a new bd-anchor if there exists $j \in [1, \ell]$ such that $j$ is the bd-anchor of $T[i\dd i+\ell-1]$, but $j+k$ is not the bd-anchor of $T[i-k\dd i-k+\ell-1]$ for all $k \in [1, \max(\ell - j, i-1)]$. Let $N_i(T)$ denote the event that $T[i\dd i+\ell-1]$ introduces a new bd-anchor. Since the letters are independent identically distributed, the probability $\mathbb{P}[N_i(T)]$ only depends on and is non-increasing in the number of preceding overlapping length-$k$ substrings. Therefore
$$\mathbb{E}[|\mathcal{A}_\ell(T)|] = \mathbb{P}[N_1(T)]+\cdots+\mathbb{P}[N_{n-\ell + 1}(T)] \leq 1 +(n-1) \mathbb{P}[N_2(T)].$$
Let $p$ be the length of the shortest prefix of the lexicographically minimal rotation of $T[2\dd \ell+1]$ which is strictly smaller than the same length prefix of any other rotation of $T[2\dd \ell+1]$.

Note that
\begin{eqnarray}\mathbb{P}[N_2] &\leq& \mathbb{P}[T[1\dd\ell] \text{ or } T[2\dd \ell+1]\text{ is a power of some string}]\label{power}\\ &&+\ \mathbb{P}[T[1\dd \ell]\text{ is primitive with bd-anchor } 1]\label{bd1}\\
&&+\ \mathbb{P}[T[2\dd \ell+1] \text{ is primitive with bd-anchor} > \ell - 3 \log\ell /\log \sigma]\label{bdend}\\
&&+\ \mathbb{P}[T[2\dd \ell+1] \text{ is primitive and } p \geq 3 \log \ell / \log \sigma]\label{prefixrepeat}
\end{eqnarray}
To bound the probability in (\ref{power}), note that 
$$\mathbb{P}[\text{length-$\ell$ string is a power of some string}] \leq \sum_{d < \ell, d\mid \ell} \sigma^{-(\ell-d)} \leq \sum_{d \leq \ell/2} \sigma^{-(\ell-d)} = \sigma^{1-\ell/2}/(\sigma-1).$$
The probability in (\ref{bd1}) is bounded by $1/\ell$ since each letter of a primitive length-$\ell$ string is equally likely to be the anchor.
Similarly, the probability in (\ref{bdend}) is bounded by $(3\log\ell /\log \sigma + 1) /\ell$. Finally, the probability in (\ref{prefixrepeat}) is bounded by the probability that two prefixes of length $\lceil 3 \log \ell / \log \sigma \rceil$ of rotations of $T[2\dd\ell+1]$ are equal, which is at most $\ell^2 \cdot \sigma^{-3 \log \ell / \log \sigma} = 1 / \ell$.
It follows that
\begin{eqnarray*}\mathbb{P}[N_2] &\leq& 2\sigma^{1-\ell/2}/(\sigma-1)+1/\ell+ (3\log\ell /\log \sigma + 1) /\ell + 1/\ell\\
&=& \mathcal{O}\left(\frac{\log \ell}{\ell \log \sigma} +\frac{1}{\ell}\right)
\end{eqnarray*}
We conclude that for any $\ell > 0$ the expected size of $\G_{\ell}(T)$ is in $\cO(\frac{n\log\ell}{\ell\log\sigma}+\frac{n}{\ell})$.
\end{proof}


We define a reduced version of bd-anchors to ensure that the expected size of the sample is in $\cO(n/\ell)$.
\begin{definition}
Given a string $X$ of length $\ell>0$ and an integer $0\leq r\leq \ell-1$, we define the \emph{reduced bidirectional anchor} of $X$ as the lexicographically minimal rotation $j\in[1,\ell-r]$ of $X$ with minimal $j$. The set of order-$\ell$ reduced bd-anchors of a string $T$ of length $n>\ell$ is defined as the set $\G_{\ell}^{\text{red}}(T)$ of reduced bd-anchors of $T[i\dd i+\ell-1]$, for all $i\in [1,n-\ell+1]$. 
\end{definition}

\begin{lemma}\label{lem:reduced} If $T$ is a string of length $n$, randomly generated by a memoryless source over an alphabet of size $\sigma \geq 2$ with identical letter probabilities, then, for any integer $\ell>0$, the expected size of $\G_{\ell}^{\text{red}}(T)$ with $r=\lceil4 \log\ell /\log \sigma\rceil$ is in $\cO(n/\ell)$.
\end{lemma}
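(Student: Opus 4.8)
The plan is to mimic the structure of the proof of Lemma~\ref{lem:randombound}, keeping track only of the single term that was responsible for the extra $\log\ell/\log\sigma$ factor. As before, restrict to $\ell\geq 3$ (the cases $\ell\in\{1,2\}$ contribute $\cO(n/\ell)$ trivially), say that $T[i\dd i+\ell-1]$ \emph{introduces a new reduced bd-anchor} if its reduced bd-anchor $j$ satisfies $j+k$ is not the reduced bd-anchor of $T[i-k\dd i-k+\ell-1]$ for all $k\in[1,\max(\ell-r-j,\,i-1)]$, and let $N_i^{\text{red}}(T)$ be that event. By the i.i.d.\ assumption $\mathbb{P}[N_i^{\text{red}}(T)]$ is non-increasing in the number of preceding overlapping windows, so $\mathbb{E}[|\G_\ell^{\text{red}}(T)|]\leq 1+(n-1)\,\mathbb{P}[N_2^{\text{red}}(T)]$, and it suffices to show $\mathbb{P}[N_2^{\text{red}}]=\cO(1/\ell)$.

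Next I would decompose $\mathbb{P}[N_2^{\text{red}}]$ by the same four-event union as in (\ref{power})--(\ref{prefixrepeat}), but adapted to the reduced anchor: (a) $T[1\dd\ell]$ or $T[2\dd\ell+1]$ is a power of some string; (b) $T[1\dd\ell]$ is primitive with reduced bd-anchor $1$; (c) $T[2\dd\ell+1]$ is primitive with reduced bd-anchor $>\ell-r-3\log\ell/\log\sigma$; (d) $T[2\dd\ell+1]$ is primitive and the distinguishing-prefix length $p$ of its lexicographically minimal rotation among the first $\ell-r$ rotations satisfies $p\geq 3\log\ell/\log\sigma$. The key point is the estimate of (c): because the reduced anchor ranges only over the \emph{first} $\ell-r$ rotations, the probability that it exceeds $\ell-r-3\log\ell/\log\sigma$ is at most $(3\log\ell/\log\sigma+1)/(\ell-r)$. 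Crucially, if this event fails then the reduced anchor $j$ is $\le \ell-r-3\log\ell/\log\sigma$, and since $r=\lceil 4\log\ell/\log\sigma\rceil$ we have $\ell-j\ge r+3\log\ell/\log\sigma$; combined with $p<3\log\ell/\log\sigma$ from the complement of (d), the entire distinguishing prefix of length $p$ starting at position $j$ fits strictly inside $T[2\dd\ell+1]$, i.e.\ it is determined by letters that are all present (not wrapping around) — and moreover the window shifted left by one, $T[1\dd\ell]$, still contains this distinguishing prefix at position $j+1$ within its first $\ell-r$ rotations (here I use $j+1\le \ell-r$, which again follows from the bound on $j$ and the choice of $r$). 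Hence the reduced anchor of $T[1\dd\ell]$ is $j+1$, so no new anchor is introduced, contradiction. This argument replaces the "boundary'' slack: $r$ is chosen precisely so that whenever the distinguishing prefix is short (which happens with probability $1-\cO(1/\ell)$) and the anchor is not near the right end of the reduced range (again probability $1-\cO(1/\ell)$), the anchor propagates leftward.

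Putting the pieces together: (a) contributes $2\sigma^{1-\ell/2}/(\sigma-1)=\cO(1/\ell)$ exactly as before; (b) contributes at most $1/(\ell-r)=\cO(1/\ell)$; (c) contributes $(3\log\ell/\log\sigma+1)/(\ell-r)$; and (d) contributes at most $\ell^2\cdot\sigma^{-3\log\ell/\log\sigma}=1/\ell$ by a union bound over the at most $\binom{\ell}{2}$ pairs of rotations. The only term that is not already $\cO(1/\ell)$ is (c), which is $\Theta(\log\ell/(\ell\log\sigma))$ — but by the propagation argument of the previous paragraph, on the complement of (a), (b), (d) the event $N_2^{\text{red}}$ \emph{implies} (c), so $\mathbb{P}[N_2^{\text{red}}]\le \mathbb{P}[(a)]+\mathbb{P}[(b)]+\mathbb{P}[(d)]+\mathbb{P}[(c)\text{ and } N_2^{\text{red}}]$; wait — more carefully, $N_2^{\text{red}}\subseteq (a)\cup(b)\cup(c)\cup(d)$, and I will show that on the complement of $(a)\cup(b)\cup(d)$ the event (c) \emph{fails implies} $N_2^{\text{red}}$ fails, so that $N_2^{\text{red}}\cap\overline{(a)}\cap\overline{(b)}\cap\overline{(d)}\subseteq(c)$ but this still only gives $\cO(\log\ell/\log\sigma\cdot1/\ell)$. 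The actual gain must come from a sharper analysis: I would instead bound (c) by noting that if the reduced anchor lies in the top $3\log\ell/\log\sigma$ positions of the reduced range \emph{and} it is genuinely new, then the suffix $T[\ell-r-3\log\ell/\log\sigma\dd\ell+1]$ of length $r+3\log\ell/\log\sigma+1\ge 4\log\ell/\log\sigma$ must itself be lexicographically smaller than the relevant competing rotations up to length $3\log\ell/\log\sigma$, forcing a length-$3\log\ell/\log\sigma$ coincidence, which has probability $\le \ell\cdot\sigma^{-3\log\ell/\log\sigma}=1/\ell$.

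The main obstacle is exactly this last point: making rigorous why the reduction parameter $r=\lceil4\log\ell/\log\sigma\rceil$ is enough to absorb the "right-boundary'' term into a coincidence-of-prefixes event of probability $\cO(1/\ell)$ rather than $\cO(\log\ell/(\ell\log\sigma))$. The intuition is that a new reduced bd-anchor near position $\ell-r$ can only fail to propagate from the previous window if the length-$r$ suffix we discarded "hides'' a smaller rotation — and with $r\approx 4\log\ell/\log\sigma$ this hiding event requires an unlikely $\approx 3\log\ell/\log\sigma$-letter repeat, which is what kills the extra logarithmic factor. All other estimates are the routine ones already carried out in Lemma~\ref{lem:randombound}.
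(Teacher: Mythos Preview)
Your overall strategy---reduce to bounding $\mathbb{P}[N_2^{\text{red}}]$ and decompose into four events---is exactly what the paper does, but you have chosen the wrong thresholds in events (c) and (d), and this is why you get stuck. The paper does \emph{not} keep the prefix threshold at $3\log\ell/\log\sigma$; instead it sets the threshold in (d) equal to $r$ itself, and correspondingly the boundary event (c) collapses to a \emph{single} position rather than a range. Concretely, the paper's decomposition is
\begin{eqnarray*}
\mathbb{P}[N^{\text{red}}_2] &\leq& \mathbb{P}[T[1\dd\ell]\text{ or }T[2\dd\ell+1]\text{ is a power}] \\
&& +\ \mathbb{P}[T[1\dd\ell]\text{ primitive with reduced bd-anchor }1] \\
&& +\ \mathbb{P}[T[2\dd\ell+1]\text{ primitive with reduced bd-anchor }\ell-r] \\
&& +\ \mathbb{P}[T[2\dd\ell+1]\text{ primitive and }p^{\text{red}}\geq r],
\end{eqnarray*}
where $p^{\text{red}}$ is the distinguishing-prefix length among rotations in $[1,\ell-r]$. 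The four terms are then bounded by $2\sigma^{1-\ell/2}/(\sigma-1)$, $1/(\ell-r)$, $1/(\ell-r)$, and $\ell^2/\sigma^r$ respectively. With $r=\lceil 4\log\ell/\log\sigma\rceil$ the last term is at most $\ell^2/\ell^4=1/\ell^2$, and each of the others is $\cO(1/\ell)$, giving $\mathbb{P}[N_2^{\text{red}}]=2/\ell+o(1/\ell)$ directly.

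The reason this works is precisely the propagation argument you sketched, but with the right parameters it needs no extra slack: if the reduced anchor $j$ of $T[2\dd\ell+1]$ is in $[1,\ell-r-1]$ and $p^{\text{red}}<r$, then the distinguishing prefix of length $<r$ starting at $j$ lies entirely within positions $[j,\ell-1]$ of the window (no wrap-around), and shifting left by one keeps $j+1\leq\ell-r$ in the admissible range, so the reduced anchor of $T[1\dd\ell]$ is $j+1$. Thus $N_2^{\text{red}}$ forces one of the four listed events. Your version, with a $3\log\ell/\log\sigma$ threshold for $p$ and a $3\log\ell/\log\sigma$-wide boundary strip, leaves a $\Theta(\log\ell/(\ell\log\sigma))$ term that you then try to kill with an ad hoc coincidence argument; that extra step is unnecessary and, as you yourself flag, not made rigorous. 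Simply move the threshold to $r$ and the obstacle disappears.
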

\begin{proof}
If $\ell \in \{1,2\}$, then $\G_{\ell}^{\text{red}}(T) \leq n$.
Now suppose $\ell \geq 3$. Analogously to $N$ in \autoref{lem:randombound}, we denote the event that $T[i\dd i+\ell-1]$ introduces a new reduced bd-anchor by $N^{\text{red}}_i(T)$. Again we find
$$\mathbb{E}\left[\left|\mathcal{A}_\ell^{\text{red}}(T)\right|\right] = \mathbb{P}[N^{\text{red}}_1(T)]+\cdots+\mathbb{P}[N^{\text{red}}_{n-\ell + 1}(T)] \leq 1 +(n-\ell) \mathbb{P}[N^{\text{red}}_2(T)].$$
Let $p^{\text{red}}$ be the length of the shortest prefix of the lexicographically minimal rotation $j_1 \in [1, \ell -r]$ of $T[2\dd \ell+1]$ which is strictly smaller than the same length prefix of any other rotation $j_2 \in [1, \ell -r] \setminus \{j_1\}$.
Using a similar proof to that of \autoref{lem:randombound} we find  that
\begin{eqnarray*}\mathbb{P}[N^{\text{red}}_2(T)] &\leq& \mathbb{P}[T[1\dd\ell] \text{ or } T[2\dd \ell+1]\text{ is a power of some string}]\\ &&+\ \mathbb{P}[T[1\dd \ell]\text{ is primitive with bd-anchor } 1]\\&&+\ \mathbb{P}[T[2\dd \ell+1]\text{ is primitive with bd-anchor } \ell - r+1]\\
&&+\ \mathbb{P}[T[2\dd \ell+1] \text{ is primitive and } p^{\text{red}} \geq r]\\
&\leq& 2\sigma^{1-\ell/2}/(\sigma-1)+1/(\ell-r) + 1/(\ell-r) + \ell^2 / \sigma^r = 2/\ell + o\left(1/\ell\right).
\end{eqnarray*}
We conclude that for any $\ell > 0$ the expected size of $\G^{\text{red}}_{\ell}(T)$ is in $\cO(n/\ell)$.
\end{proof}

In particular, if $\ell=\cO(\sigma)$, we employ the sampling mechanism underlying $\G_{\ell}(T)$, otherwise ($\ell=\Omega(\sigma)$) we employ the sampling mechanism underlying $\G_{\ell}^{\text{red}}(T)$ with $r=\lceil4 \log\ell /\log \sigma\rceil$ to ensure that the expected density of the residual sampling is always in $\cO(n/\ell)$.

Constructing $\G_{\ell}^{\text{red}}(T)$ in $\cO(n)$ time requires a trivial modification in Theorem~\ref{the:con}. For each fragment $T[i\dd i+\ell-1]$, instead of computing the minimal lexicographic suffix of string $$Y=S[i\dd i+\ell-1]\cdot S[i\dd i+\ell-1]\cdot S[n+1]=T[i\dd i+\ell-1]\cdot T[i\dd i+\ell-1]\cdot \#,$$ in $\cO(1)$ time, we compute the minimal lexicographic suffix of string $$Y^{\text{red}}=S[i\dd i+\ell-1]\cdot S[i\dd i+\ell-1-r]\cdot S[n+1]=T[i\dd i+\ell-1]\cdot T[i\dd i+\ell-1-r]\cdot \#,$$ in $\cO(1)$ time. We then directly obtain the trade-off in Theorem~\ref{the:space} for constructing $\G_{\ell}^{\text{red}}(T)$.

\paragraph{Density Evaluation.}~We compare the density of bd-anchors and reduced bd-anchors, denoted by BDA and rBDA, respectively, to the density of minimizers, for different values of $w$ and $k$ such that $\ell=w+k-1$. This is a fair comparison because $\ell=w+k-1$ is the length of the fragments considered by both mechanisms.  We implemented bd-anchors, the standard minimizers mechanism from~\cite{DBLP:journals/bioinformatics/RobertsHHMY04}, and the minimizers mechanism with robust winnowing  from~\cite{DBLP:conf/sigmod/SchleimerWA03}. The standard minimizers and those with robust winnowing are referred to as \STD and \WIN, respectively. 

For bd-anchors, we used Booth's algorithm, which is easy to implement and reasonably fast. For minimizers, we used Karp-Rabin fingerprints~\cite{DBLP:journals/ibmrd/KarpR87}. (Note that such ``random'' minimizers tend to perform \emph{even better} than the ones based on lexicographic total order in terms of density~\cite{DBLP:journals/bioinformatics/ZhengKM20}.) For the reduced version of bd-anchors, we used $r=\lceil 3 \log\ell / \log\sigma \rceil$, because the $r$ value suggested by Lemma~\ref{lem:reduced} is relatively large for the small $\ell$ values tested; e.g.~for $\ell=15$, $\lceil 3 \log\ell / \log\sigma \rceil=8$.   
Throughout, we do not evaluate construction times, as all implementations are reasonably fast, and we make the standard assumption that preprocessing is only required once. We used five string datasets from the popular Pizza \& Chili corpus~\cite{pizza} (see Table~\ref{tab:data} for the datasets characteristics). 
All implementations referred to in this paper have been written in \texttt{C++} and compiled at optimization level \texttt{-O3}. All experiments reported in this paper were conducted using a single core of an AMD Opteron 6386 SE 2.8GHz CPU and 252GB RAM running GNU/Linux.

\begin{table}[!t]
\centering 
\begin{tabular}{|c|| c | c|}
\hline {\bf Dataset}  & {\bf Length}  & {\bf Alphabet} \tabularnewline
~ & $n$ & {\bf size}  $|\Sigma|$ 
\tabularnewline\hline\hline
 \DNA &  200,000,000 & 4  \tabularnewline \hline
  \DBLP  & 200,000,000 & 95 \tabularnewline \hline
  \ENG  & 200,000,000 & 224   \tabularnewline \hline
  \PRO  & 200,000,000 & 27  \tabularnewline \hline
  \SOURCES & 200,000,000 & 229  \tabularnewline \hline
  \end{tabular}
\caption{Datasets characteristics.}
  \label{tab:data}
\end{table} 

\begin{figure}[!t]\centering
    \begin{subfigure}[b]{0.6\textwidth}
    \includegraphics[width=\linewidth]{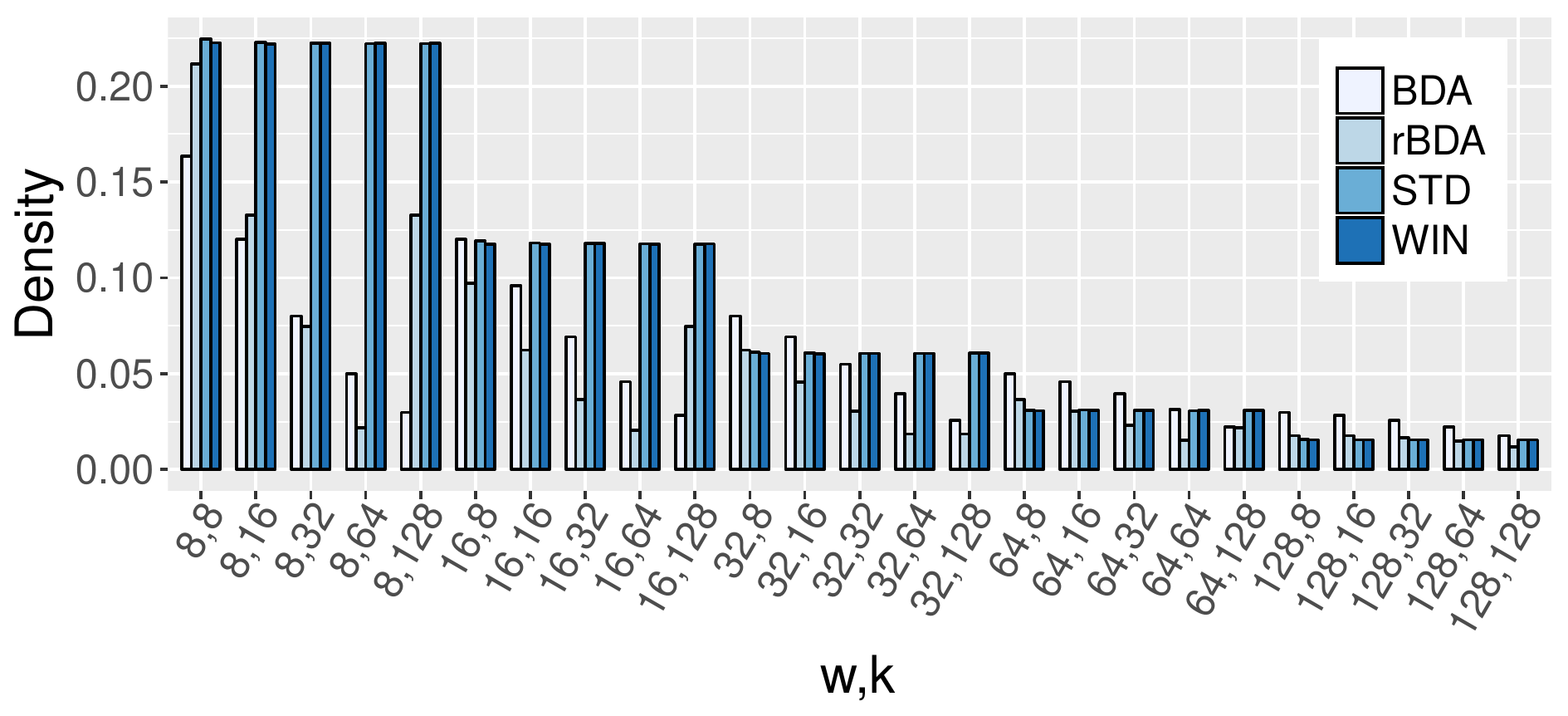}
    \caption{DNA}
    \label{DNA}
    \end{subfigure}\\
    \begin{subfigure}[b]{0.6\textwidth}
    \includegraphics[width=\linewidth]{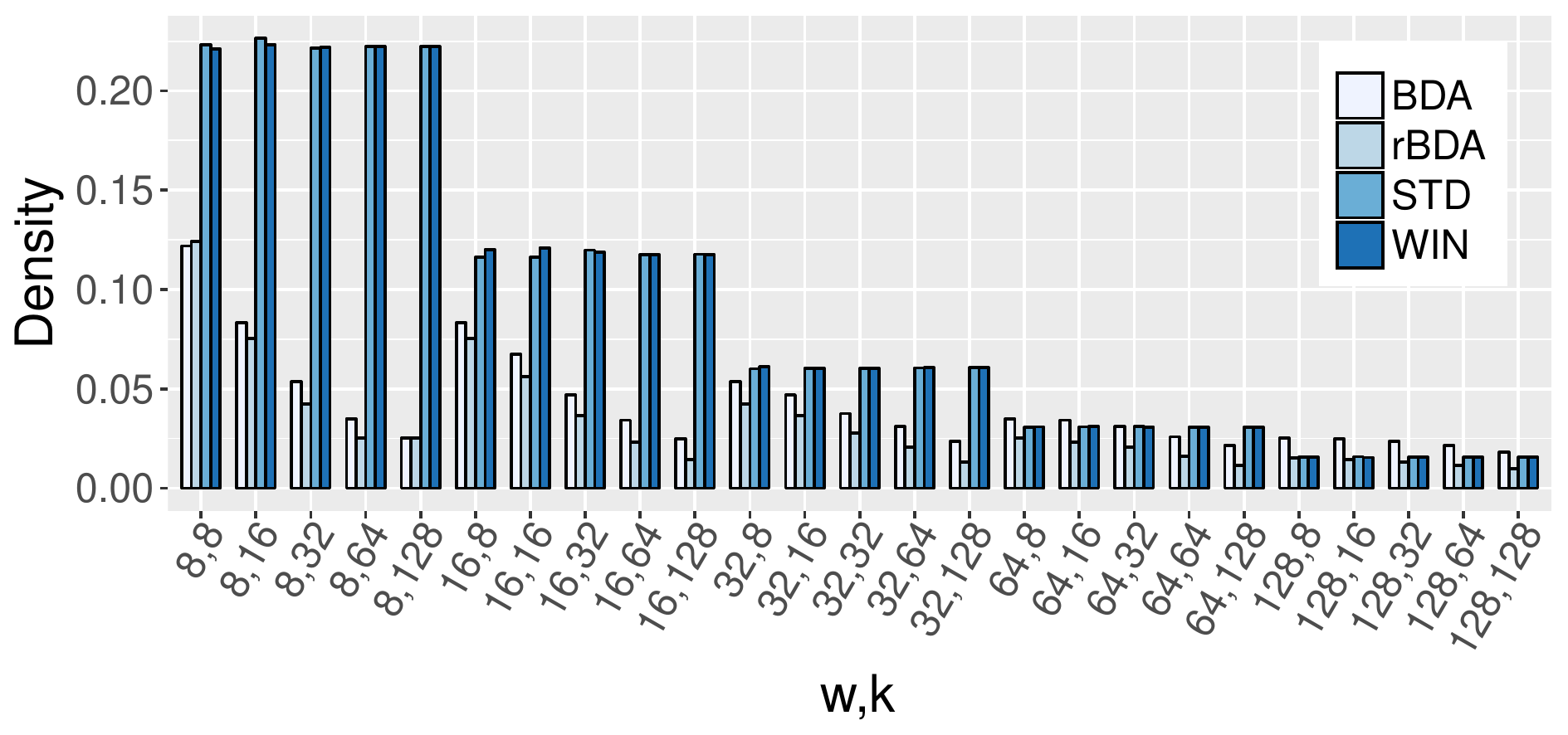}
    \caption{\DBLP}
    \label{DBLP}
    \end{subfigure}\\
        \caption{Density vs.~$w,k$ for $\ell=w+k-1$ and the first two datasets of Table~\ref{tab:data}.}\label{fig:experiment1a}
\end{figure}

\begin{figure}\centering
\begin{subfigure}[b]{0.6\textwidth}
    \includegraphics[width=\linewidth]{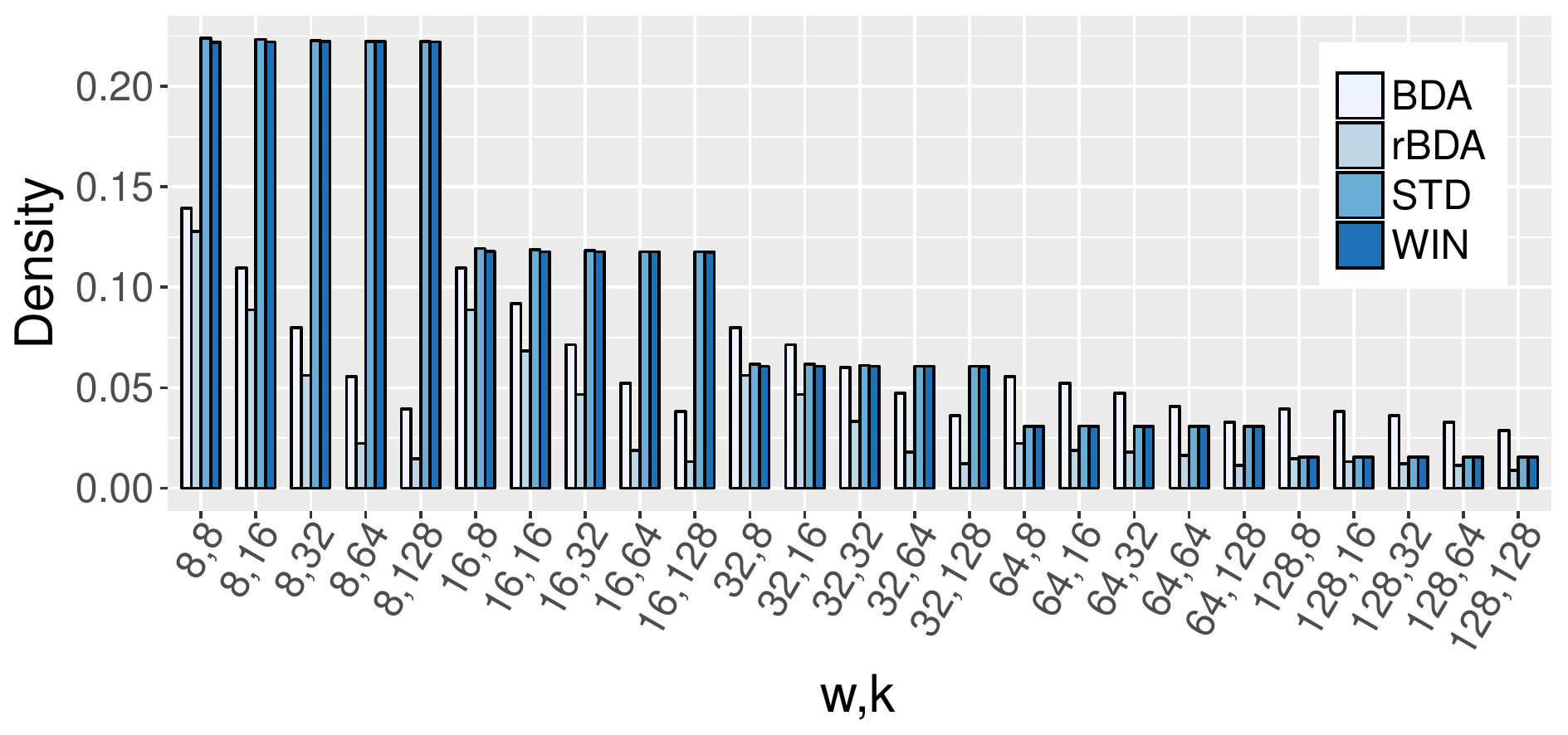}
    \caption{ENGLISH}\label{ENG}
    \end{subfigure}\\
    
    \hspace{+0.1mm}
    \begin{subfigure}[b]{0.6\textwidth}
    \includegraphics[width=\linewidth]{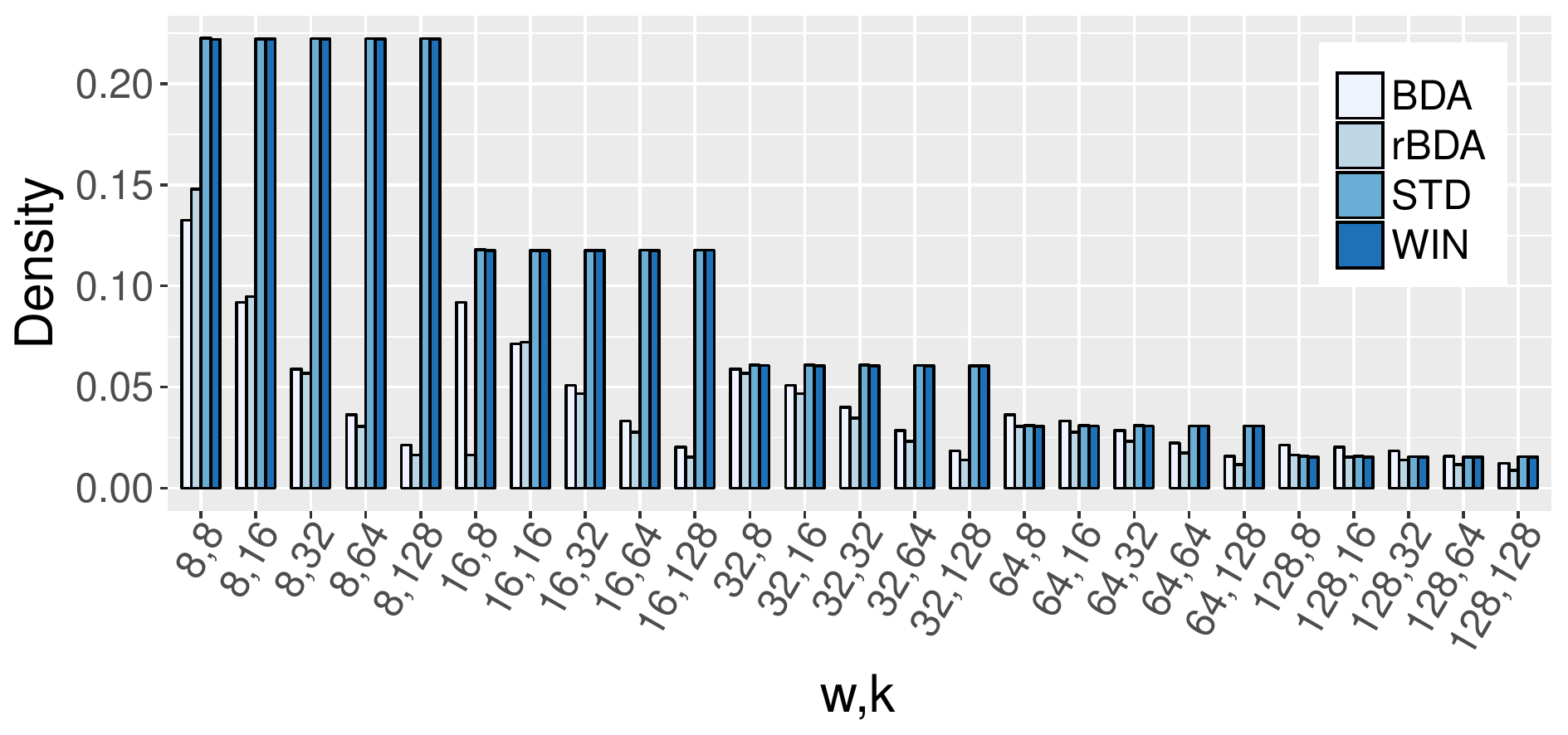}
    \caption{PROTEINS}\label{PROT}
    \end{subfigure}
    \\
    \begin{subfigure}[b]{0.6\textwidth}
    \includegraphics[width=\linewidth]{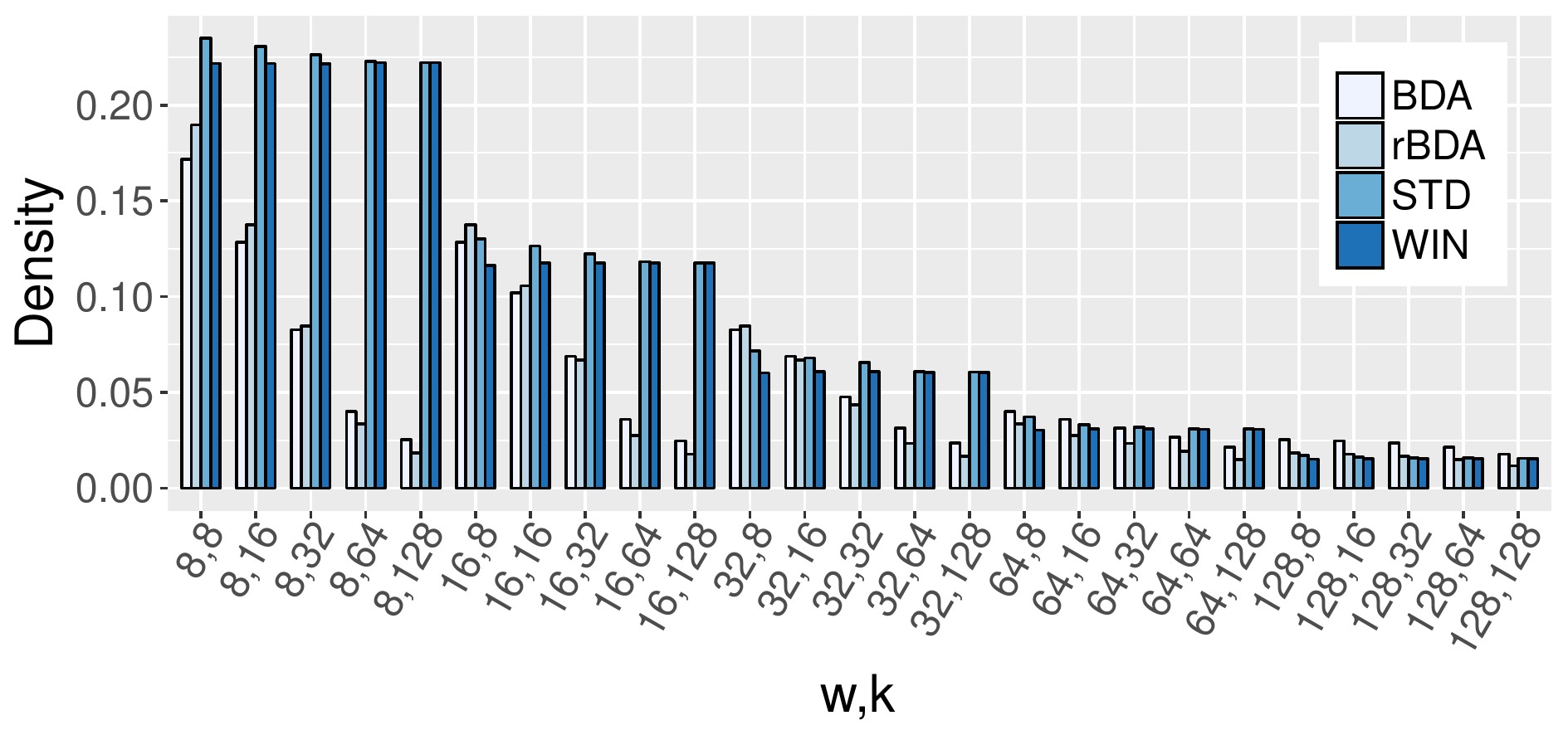}
    \caption{SOURCES}\label{SOURCES}
    \end{subfigure}
    
    \caption{Density vs.~$w,k$ for $\ell=w+k-1$ and the last three datasets of Table~\ref{tab:data}.}\label{fig:experiment1b}
\end{figure}

As can be seen by the results depicted in Figures~\ref{fig:experiment1a} and~\ref{fig:experiment1b}, the density of both BDA and rBDA is either significantly smaller than or competitive to the STD and WIN minimizers density, especially for small $w$. This is useful because a lower density results in smaller indexes and less computation (see Section~\ref{sec:index}), and because small $w$ is of practical interest (see Section~\ref{sec:edit}). For instance, the widely-used long-read aligner \textsf{Minimap2}~\cite{DBLP:journals/bioinformatics/Li18} stores the selected minimizers of a reference genome in a hash table to find exact matches as anchors for seed-and-extend alignment. The parameters $w$ and $k$ are set based on the required sensitivity of the alignment, and thus $w$ and $k$ cannot be too large for high sensitivity. Thus, a lower sampling density reduces the size of the hash table, as well as the computation time, by lowering the average number of selected minimizers to consider when performing an alignment. Furthermore, although the datasets are not uniformly random, rBDA performs better than BDA as $\ell$ grows, as suggested by Lemmas~\ref{lem:randombound} and \ref{lem:reduced}. 

There exists a long line of research on improving the density of minimizers in special regimes (see Section~\ref{sec:related} for details). We stress that most of these algorithms are designed, implemented, or  optimized, \emph{only for the DNA alphabet}. We have tested against two state-of-the-art tools employing such algorithms: \textsf{Miniception}~\cite{DBLP:journals/bioinformatics/ZhengKM20} and \textsf{PASHA}~\cite{DBLP:conf/recomb/EkimBO20}. The former did not give better results than STD or WIN for the tested values of $w$ and $k$; and the latter does not scale beyond $k=16$ or with large alphabets.  We have thus omitted these results.

We next report the average number (AVG) of bd-anchors of order $\ell\in\{4,8,12,16\}$ over all strings of length $n=20$ (see Table~\ref{tab:avg1}) and over all strings of length $n=32$ (see Table~\ref{tab:avg2}), both over a binary alphabet. Notably, the results show that AVG always lies in $[n/\ell,2n/\ell]$ even if not using the reduced version of bd-anchors (see Lemma~\ref{lem:reduced}). As expected by Lemma~\ref{lem:randombound}, the analogous AVG values using a ternary alphabet (not reported here) were always lower than the corresponding ones with a binary alphabet.

\begin{table}[!ht]
\begin{subtable}{.45\linewidth}\centering
\begin{tabular}{|c||c|c|c|c|}\hline
$(n,\ell)$ & $(20,4)$ & $(20,8)$ & $(20,12)$ & $(20,16)$\\\hline\hline
\cline{1-5}
$n/\ell$ & 5 & 2.5 & 1.66 & 1.25 \\\cline{1-5}
\textbf{AVG}       &8.53 & 4.37 & 2.77 & 1.76 \\ \cline{1-5}
$2n/\ell$ &16 & 8 & 5.33 & 4 \\
\cline{1-5}
\end{tabular}\caption{}\label{tab:avg1}
\end{subtable}\hspace{+7mm}
\begin{subtable}{.45\linewidth}\centering
\begin{tabular}{|c||c|c|c|c|}\hline
$(n,\ell)$ & $(32,4)$ & $(32,8)$ & $(32,12)$ & $(32,16)$\\\hline\hline
\cline{1-5}
$n/\ell$ & 8 & 4 & 2,66 & 2 \\\cline{1-5}
\textbf{AVG} &14.16 & 7.67 & 5.26 & 3.85 \\ \cline{1-5}
$2n/\ell$ &16 & 8 & 5.33 & 4 \\
\cline{1-5}
\end{tabular}\caption{}\label{tab:avg2}
\end{subtable}
\caption{Average number of bd-anchors for varying $\ell$ and: (a) $n=20$ and (b) $n=32$.}\label{tab:comparison}
\end{table}

\paragraph{Minimizing the Size of $\G_{\ell}$ is NP-hard.}~The number of bd-anchors depends on the lexicographic total order defined on $\Sigma$. We now prove that finding the total order which minimizes the number of bd-anchors is NP-hard using a reduction from minimum feedback arc set~\cite{DBLP:conf/coco/Karp72}. Let us start be defining this problem. Given a directed graph $G(V,E)$, a \emph{feedback arc set} in $G$ is a subset of $E$ that contains at least one edge from every cycle in $G$. Removing these edges from $G$ breaks all of the cycles, producing a directed acyclic graph. In the \emph{minimum feedback arc set} problem, we are given $G(V,E)$, and we are asked to compute a smallest feedback arc set in $G$. 
The decision version of the problem takes an additional parameter $k$ as input, and it asks whether all cycles can be broken by removing at most $k$ edges from $E$. The decision version is NP-complete~\cite{DBLP:conf/coco/Karp72} and the optimization version is APX-hard~\cite{DBLP:conf/stoc/DinurS02}.

\begin{theorem}
Let $T$ be a string of length $n$ over alphabet $\Sigma$.
Further let $\ell>0$ be an integer.
Computing a total order $\leq$ on $\Sigma$ which minimizes $|\mathcal{A}_\ell(T)|$ is NP-hard.
\end{theorem}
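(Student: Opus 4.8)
The plan is to reduce from \emph{minimum feedback arc set} (MFAS), using the classical equivalence that, for a digraph $G=(V,E)$, the minimum size of a feedback arc set equals $\min_{\pi}|\{(a,b)\in E:\pi(a)>\pi(b)\}|$ over all linear orders $\pi$ of $V$: the backward edges of any $\pi$ form a feedback arc set, and conversely a topological order of $G$ minus a minimum feedback arc set has no other backward edge. So I would build, in polynomial time, a string $T$ over an alphabet $\Sigma\supseteq V$ and fix $\ell$ (I will take $\ell=4$) so that the minimum of $|\mathcal{A}_\ell(T)|$ over all total orders on $\Sigma$ equals $2N+\mathrm{MFAS}(G)$, where $N$ is an explicitly computable quantity. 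A minimizing order then hands us $\mathrm{MFAS}(G)$, and since the decision version of MFAS is NP-complete~\cite{DBLP:conf/coco/Karp72}, NP-hardness follows.

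First I would add two auxiliary letters $\natural,\#$ to $\Sigma$ and augment $G$ with $\natural$ as a source joined by $M=|E|+1$ parallel arcs to every other vertex, and $\#$ as a sink reached by $M$ parallel arcs from every other vertex; this leaves $\mathrm{MFAS}(G)$ unchanged (the new vertices lie on no cycle) and, as I check at the end, forces $\natural$ to be the smallest and $\#$ the largest letter in any minimizing order. Then $T$ is the concatenation, over all arcs of the augmented multigraph (with multiplicity), of a length-$8$ gadget: the gadget for an arc $(a,b)$ with tail $a$ and head $b$ is $\natural\,\#\,b\,\#\,a\,\#\,\#\,\#$. For a gadget occupying positions $[8i-7,8i]$ of $T$, and assuming $\natural$ minimal and $\#$ maximal, I would verify by inspecting the eight window types (indexed by residue mod $8$) that: (i) the longest run of $\#$'s anywhere in $T$ is $3<\ell$, so no length-$4$ window equals $\#^4$ and hence no window anchors at a $\#$-position; (ii) the window $[\#,\#,\#,\natural]$ ending at position $8i-7$ and the window $[a,\#,\#,\#]$ beginning at position $8i-3$ each anchor at their unique non-$\#$ position, so positions $8i-7$ and $8i-3$ are always selected; and (iii) the ``comparison'' window $[\#,b,\#,a]$ at positions $[8i-6,8i-3]$ and its right-shift $[b,\#,a,\#]$ both anchor at $8i-3$ if $a<b$ and at $8i-5$ if $a>b$, while the other two windows through position $8i-5$ always anchor at $8i-7$. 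Consequently each gadget contributes exactly the positions $\{8i-7,8i-3\}$ plus the position $8i-5$ iff $a>b$; that is, one extra anchor precisely when its arc is backward. (Gadgets of forcing arcs, where $a$ equals $\natural$ or $b$ equals $\#$, are easily checked to behave the same and are always forward.)

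Summing over all $N$ gadgets gives $|\mathcal{A}_\ell(T)|=2N+B$, where $B$ is the number of backward arcs under the chosen order. Any order other than those placing $\natural$ at the bottom and $\#$ at the top makes at least $M=|E|+1$ forcing arcs backward, so $B\ge|E|+1$; an order with $\natural$ at the bottom and $\#$ at the top makes every forcing arc forward and, choosing the induced order on $V$ optimally, achieves $B=\mathrm{MFAS}(G)\le|E|$. Hence the minimum of $|\mathcal{A}_\ell(T)|$ is exactly $2N+\mathrm{MFAS}(G)$, which completes the reduction; note that $|T|=8N$ and $|\Sigma|=|V|+2$ are polynomial, and $T$ contains no self-loop arcs (removed upfront, as they always lie in any feedback arc set). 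I expect the main obstacle to be the single-gadget claim (iii): since every internal position of $T$ lies in exactly $\ell=4$ windows, proving that position $8i-5$ stays unselected when $a<b$ means controlling all four of those windows simultaneously, and the asymmetric layout — a lone $\natural$ immediately left of $b$, runs of $\#$ to its right — is tuned exactly so that this holds while the comparison window can still ``read'' whether $a<b$. The forcing argument and the bookkeeping for the boundary gadgets and the gadgets of forcing arcs are then routine.
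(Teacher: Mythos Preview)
Your reduction from minimum feedback arc set with $\ell=4$ is the same high-level route as the paper's, but your execution has a genuine gap in the lower-bound direction. Your entire gadget analysis---claims (i)--(iii) and the resulting identity $|\mathcal{A}_4(T)|=2N+B$---is carried out \emph{assuming $\natural$ is minimal and $\#$ is maximal}. You then invoke that identity for \emph{every} order to argue that orders with $\natural$ or $\#$ out of place yield $B\ge M$ and hence too many anchors. But for such orders the identity has not been established: once $\#$ is not maximal, claim (i) fails outright (e.g.\ if $\#<a$ then the window $a\#\#\#$ anchors at a $\#$-position), the per-gadget count is no longer ``$2$ plus $1$ if backward,'' and the link between $B$ and $|\mathcal{A}_4(T)|$ disappears. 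So the sentence ``any order other than those placing $\natural$ at the bottom and $\#$ at the top makes at least $M$ forcing arcs backward, so $B\ge|E|+1$'' is true as a statement about $B$ but says nothing about $|\mathcal{A}_4(T)|$ for those orders. Closing this requires a separate argument---essentially a case analysis over the relative position of $\natural$ and $\#$ in the order, showing that every ``bad'' order still produces at least $2N+|E|+1$ anchors---which is not the routine bookkeeping you anticipate.

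The paper sidesteps this difficulty entirely by using no auxiliary letters: its gadget for an arc $(a,b)$ is $(aabab)^{2|A|+1}$, whose anchor count depends only on whether $a<b$ or $a>b$, so the per-gadget analysis is automatically valid for every total order. The price is that boundary effects between adjacent gadgets cannot be controlled exactly, but repeating each block $2|A|+1$ times makes the gap between the two cases (roughly $2|A|$ anchors per arc) dominate the $\pm|A|$ total boundary slack. Your auxiliary-letter design buys an exact per-gadget count at the cost of needing the forcing step; the paper's repetition trick buys order-independence at the cost of an inexact count. If you want to keep your gadget, you must actually prove the lower bound for non-extremal placements of $\natural$ and $\#$; otherwise, the paper's approach is both simpler and complete.
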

\begin{proof}
Let $G = (\Sigma, A)$ be any instance of the minimum feedback arc set problem. We will construct a string $S \in \Sigma^*$ in polynomial time in the size of $G$ such that finding a total order $\leq$ on $\Sigma$ which minimizes $\mathcal{A}_4(S)$ corresponds to finding a minimum feedback arc set in $G$.

We start with an empty string $S$. For each edge $(a, b) \in A$, we append $(aabab)^{2|A|+1}$ to $S$. Observe that, if $a < b$, all the
$a$'s of $(aabab)^{2|A|+1}$ and none of its $b$'s are order-4 bd-anchors, except possibly the first $a$ and last $b$ depending on the preceding and subsequent letters in $S$. Thus there are $6|A|+2$ to $6|A|+4$ order-$4$ bd-anchors in $(aabab)^{2|A|+1}$. If on the other hand $a > b$, we analogously find that there will be $4|A|+1$ to $4|A|+3$ order-$4$ bd-anchors in $(aabab)^{2|A|+1}$.

Let $d_\leq$ be the number of edges $(a, b) \in A$ such that $a < b$. The total number of order-$4$ bd-anchors in $S$ is $$|\mathcal{A}_4(S)|=|A|\cdot2\cdot (2|A|+1) + d_\leq \cdot (2|A|+1)+ \epsilon,$$ with $\epsilon \in [-|A|, |A|]$. Therefore minimizing the total number of order-$4$ bd-anchors in $S$ is equivalent to finding an order $\leq$ on the set $\Sigma$ of vertices of $G$ which minimizes $d_\leq$.

Note that if we delete all edges $(a, b) \in A$ such that $a < b$, then the residual graph is acyclic. Moreover, for each acyclic graph there exists an order on the vertices such that $a > b$ for all $(a, b) \in A$. Therefore the minimal $d_\leq$ equals the size of the minimum feedback arc set.

We conclude that, since finding the size of the minimum feedback arc set is NP-hard, so is finding a total order $\leq$ on $\Sigma$ which minimizes the total number of order-4 bd-anchors.
\end{proof}

\section{Indexing Using Bidirectional Anchors}\label{sec:index}

Before presenting our index, let us start with a basic definition that is central to our querying process.

\begin{definition}[$(\alpha,\beta)$-hit]
Given an order-$\ell$ bd-anchor $j_Q\in \G_{\ell}(Q)$, for some integer $\ell>0$, of a query string $Q$, two integers $\alpha> 0,\beta> 0$, with $\alpha+\beta\geq \ell+1$, and an order-$\ell$ bd-anchor $j_T \in \G_{\ell}(T)$ of a target string $T$, the ordered pair $(j_Q,j_T)$ is called an \emph{$(\alpha,\beta)$-hit} if and only if $T[j_T-\alpha+1\dd j_T]=Q[j_Q-\alpha+1 \dd j_Q]$ and $T[j_T\dd j_T+\beta-1]=Q[j_Q\dd j_Q+\beta-1]$.
\end{definition}

Intuitively, the parameters $\alpha$ and $\beta$ let us choose a fragment of $Q$ that is anchored at $j_Q$.

\begin{example}
Let $T=\texttt{aabaaabcbda}$, $Q=\texttt{aacabaaaae}$, and $\ell=5$. Consider that we would like to find the common fragment $Q[4\dd 8]=T[2\dd 6]=\texttt{abaaa}$. 
We know that the bd-anchor of order $5$ corresponding to $Q[4\dd 8]$ is $6\in\G_5(Q)$, and thus to find it we set $\alpha=3$ and $\beta=3$. The ordered pair $(6,4)$ is a $(3,3)$-hit because for $4\in\G_5(T)$, we have: $T[4-3+1\dd 4]=Q[6-3+1 \dd 6]=\texttt{aba}$ and $T[4\dd 4+3-1]=Q[6 \dd 6+3-1]=\texttt{aaa}$.
\label{example4}
\end{example}

We would like to construct a data structure over $T$, which is based on $\G_{\ell}(T)$, such that, when we are given an order-$\ell$ bd-anchor $j_Q$ over $Q$ as an on-line query, together with parameters $\alpha$ and $\beta$, we can report all $(\alpha,\beta)$-hits efficiently. To this end, we present an efficient data structure, denoted by $\mathcal{I}_{\ell}(T)$, which is constructed on top of $T$, and answers $(\alpha,\beta)$-hit queries in near-optimal time. 
We prove the following result.

\begin{theorem}\label{the:main}
Given a string $T$ of length $n$ and an integer $\ell>0$, the $\mathcal{I}_{\ell}(T)$ index can be constructed in $\cO(n +|\G_{\ell}(T)|\sqrt{\log(|\G_{\ell}(T)|)})$ time. For any constant $\epsilon>0$, $\mathcal{I}_{\ell}(T)$:
\begin{itemize}
    \item  occupies $\cO(|\G_{\ell}(T)|)$ extra space and reports all $k$ $(\alpha,\beta)$-hits in $\cO(\alpha+\beta+(k+1)\log^{\epsilon}(|\G_{\ell}(T)|))$ time; or
    \item  occupies $\cO(|\G_{\ell}(T)|\log^{\epsilon}(|\G_{\ell}(T)|))$ extra space and reports all $k$ $(\alpha,\beta)$-hits in $\cO(\alpha+\beta+\log\log(|\G_{\ell}(T)|)+k)$ time.
\end{itemize}
\end{theorem}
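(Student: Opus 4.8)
The plan is to build $\mathcal{I}_{\ell}(T)$ as a combination of two compacted tries — one indexing the strings to the right of each bd-anchor and one indexing the reversed strings to the left — glued together through a 2D range reporting structure, so that an $(\alpha,\beta)$-hit query becomes a downward search in each trie followed by an orthogonal range query. Concretely, for each $j_T\in\G_{\ell}(T)$ let $\mathsf{suf}(j_T)=T[j_T\dd n]$ and $\mathsf{pre}(j_T)=(T[1\dd j_T])^{R}$ (the reverse of the prefix ending at $j_T$). First I would construct $\mathrm{ST}(T)$ and $\mathrm{ST}(T^R)$ in $\cO(n)$ time by Farach's algorithm, and extract from them the compacted tries $\CT_{\mathrm{suf}}$ of $\{\mathsf{suf}(j_T):j_T\in\G_{\ell}(T)\}$ and $\CT_{\mathrm{pre}}$ of $\{\mathsf{pre}(j_T):j_T\in\G_{\ell}(T)\}$; these have size $\cO(|\G_{\ell}(T)|)$ and are obtained by marking the relevant leaves and taking the induced topology. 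Sorting the leaves of $\CT_{\mathrm{suf}}$ and of $\CT_{\mathrm{pre}}$ gives each bd-anchor $j_T$ a pair of coordinates $(x_{j_T},y_{j_T})\in[1,|\G_{\ell}(T)|]^2$; I store the point set $\{(x_{j_T},y_{j_T})\}$ in a 2D range reporting structure.

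For a query $(j_Q,\alpha,\beta)$ the two pattern strings are $P_{\mathrm{suf}}=Q[j_Q\dd j_Q+\beta-1]$ and $P_{\mathrm{pre}}=(Q[j_Q-\alpha+1\dd j_Q])^{R}$. The key observation is that $(j_Q,j_T)$ is an $(\alpha,\beta)$-hit iff $P_{\mathrm{suf}}$ is a prefix of $\mathsf{suf}(j_T)$ and $P_{\mathrm{pre}}$ is a prefix of $\mathsf{pre}(j_T)$; since $\alpha+\beta\geq \ell+1$ and $j_T$ is an order-$\ell$ bd-anchor, both $\mathsf{suf}(j_T)$ and $\mathsf{pre}(j_T)$ have length $\geq\ell$ wherever they need to, so no boundary issues arise. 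Searching $P_{\mathrm{suf}}$ in $\CT_{\mathrm{suf}}$ and $P_{\mathrm{pre}}$ in $\CT_{\mathrm{pre}}$ locates (at most) one node in each, whose subtrees correspond to a contiguous $x$-range $[x_1,x_2]$ and a contiguous $y$-range $[y_1,y_2]$; the $(\alpha,\beta)$-hits are exactly the points of our set inside $[x_1,x_2]\times[y_1,y_2]$. The trie searches take $\cO(\alpha+\beta)$ time using the edge labels (we can afford to compare characters directly, and use a deterministic dictionary or perfect hashing on the $\cO(|\G_\ell|)$ edges per node-degree to branch in $\cO(1)$; since $\Sigma=\{1,\dots,n^{\cO(1)}\}$ this costs $\cO(n)$ at construction). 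The range reporting then contributes the stated overhead: plugging in Chan–Larsen–Pătrașcu-style structures gives either $\cO(|\G_\ell|)$ space with $\cO((k+1)\log^{\epsilon}|\G_\ell|)$ query time, or $\cO(|\G_\ell|\log^{\epsilon}|\G_\ell|)$ space with $\cO(\log\log|\G_\ell|+k)$ query time, and both are buildable in $\cO(|\G_\ell|\sqrt{\log|\G_\ell|})$ time (sorting the coordinates, which are $\cO(\log|\G_\ell|)$-bit integers, is the bottleneck via Han's or the signature-sort bound). Adding the $\cO(n)$ for the suffix trees yields the claimed construction time.

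The step I expect to be the main obstacle — and which I would spell out carefully — is establishing the exact correspondence between $(\alpha,\beta)$-hits and the orthogonal range, in particular arguing that the prefix-of-$\mathsf{suf}(j_T)$ condition and the prefix-of-$\mathsf{pre}(j_T)$ condition each translate into a \emph{contiguous} interval of leaves in the respective compacted trie (true because descendants of a trie node form a contiguous block in the sorted leaf order) and that the two conditions are \emph{independent}, so their conjunction is a rectangle rather than something more complicated. A secondary subtlety is handling the case where $P_{\mathrm{suf}}$ (or $P_{\mathrm{pre}}$) ends in the middle of a compacted edge, or where the search falls off the trie entirely (no hits) — routine, but worth stating. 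Everything else (the two suffix-tree constructions, extracting the induced compacted tries, the leaf-sorting, and invoking the off-the-shelf range structures) is standard and I would cite it rather than reprove it.
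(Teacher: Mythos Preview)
Your proposal is correct and follows essentially the same approach as the paper: two compacted tries (left-reversed and right) over the bd-anchor positions, leaf ranks as coordinates, and a 2D range reporting structure in the style of Chan--Larsen--P\u{a}tra\c{s}cu (and Gao et al.\ for the second trade-off), exactly as in the paper's construction and querying paragraphs. One small inaccuracy: the $\sqrt{\log|\G_\ell|}$ term in the construction time is not due to sorting the coordinates (they lie in $[1,|\G_\ell|]$ and sort in linear time) but to the construction of the range reporting structure itself; the paper cites Belazzougui--Puglisi and Gao et al.\ for those bounds, and you should too rather than attributing it to integer sorting.
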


Let us denote by $\overleftarrow{X}=X[|X|]\ldots X[1]$ the \emph{reversal} of string $X$. We now describe our data structure.

\paragraph{Construction of $\mathcal{I}_{\ell}(T)$.}~Given $\G_{\ell}(T)$, we construct two sets $\R^{L}_{\ell}(T)$ and $\R^{R}_{\ell}(T)$ of strings; conceptually, the reversed suffixes going \emph{left} from $j$ to $1$, and the suffixes going \emph{right} from $j$ to $n$, for all $j$ in $\G_{\ell}(T)$. In particular, for the bd-anchor $j$, we construct two strings: $\overleftarrow{T[1\dd j]}\in \R^{L}_{\ell}(T)$ and $T[j\dd n]\in \R^{R}_{\ell}(T)$. Note that, $|\R^{L}_{\ell}(T)|= |\R^{R}_{\ell}(T)|=|\G_{\ell}(T)|$, since for every bd-anchor in $\G_{\ell}(T)$ we have a distinct string in $\R^{L}_{\ell}(T)$ and in $\R^{R}_{\ell}(T)$. 

We construct two \emph{compacted tries} $\T^{L}_{\ell}(T)$ and $\T^{R}_{\ell}(T)$ over $\R^{L}_{\ell}(T)$ and $\R^{R}_{\ell}(T)$, respectively, to index all strings. Every string is concatenated with some special letter $\$$ not occurring in $T$, which is lexicographically minimal, to make $\R^{L}_{\ell}(T)$ and $\R^{R}_{\ell}(T)$ prefix-free (this is standard for conceptual convenience).
The leaf nodes of the compacted tries are \emph{labeled} with the corresponding $j$: there is a one-to-one correspondence between a leaf node and a bd-anchor $j$. In $\cO(|\G_{\ell}(T)|)$ time, we also enhance the nodes of the tries with a perfect static dictionary~\cite{Fredman:1984:SST:828.1884} to ensure constant-time retrieval of edges by the first letter of their label. Let $\A^{L}_{\ell}(T)$  
denote the list of the leaf labels of $\T^{L}_{\ell}(T)$ 
as they are visited using a depth-first search traversal. $\A^{L}_{\ell}(T)$  
corresponds to the (labels of the) lexicographically sorted list of $\R^{L}_{\ell}(T)$ 
in increasing order. For each node $u$ 
in $\T^{L}_{\ell}(T)$,  
we also store the corresponding interval $[x_u,y_u]$ over $\A^{L}_{\ell}(T)$.  
Analogously for $R$,  $\A^{R}_{\ell}(T)$  denotes the list of the leaf labels of $\T^{R}_{\ell}(T)$ as they are visited using a depth-first search traversal and corresponds to the (labels of the) lexicographically sorted list of $\R^{R}_{\ell}(T)$ in increasing order. For each node $v$ in $\T^{R}_{\ell}(T)$, we also store the corresponding interval $[x_v,y_v]$ over $\A^{R}_{\ell}(T)$. 

The total size occupied by the tries is $\Theta(|\G_{\ell}(T)|)$ because they are compacted: we label the edges with intervals over $[1,n]$ from $T$.

We also construct a 2D range reporting data structure over the following points in set $\RANGE_\ell(T)$: \[(x,y)\in \RANGE_\ell(T) \iff \A^{L}_{\ell}(T)[x]=\A^{R}_{\ell}(T)[y].\] Note that $|\RANGE_\ell(T)|=|\G_{\ell}(T)|$ because the set of leaf labels stored in both tries is precisely the set $\G_{\ell}(T)$. Let us remark that the idea of employing 2D range reporting for bidirectional pattern searches has been introduced by Amir et al.~\cite{DBLP:journals/jal/AmirKLLLR00} for text indexing and dictionary matching with one error; see also~\cite{DBLP:conf/latin/MakinenN06}.

This completes the construction of $\mathcal{I}_{\ell}(T)$. We next explain how we can query $\mathcal{I}_{\ell}(T)$.

\paragraph{Querying.}~Given a bd-anchor $j_Q$ over a string $Q$ as an on-line query and parameters $\alpha,\beta>0$, we spell $\overleftarrow{Q[j_Q-\alpha+1\dd j_Q]}$ in $\T^{L}_{\ell}(T)$ and $Q[j_Q\dd j_Q+\beta-1]$ in $\T^{R}_{\ell}(T)$ starting from the root nodes. If any of the two strings is not spelled fully, we return no $(\alpha,\beta)$-hits.
If both strings are fully spelled, we arrive at node $u$ in $\T^{L}_{\ell}(T)$ (resp.~$v$ in $\T^{R}_{\ell}(T)$), which corresponds to an interval over $\A^{L}_{\ell}(T)$ stored in $u$ (resp.~$\A^{R}_{\ell}(T)$ in $v$). We obtain the two intervals $[x_u,y_u]$ and $[x_v,y_v]$ forming a rectangle and ask the corresponding 2D range reporting query. It can be readily verified that this query returns all $(\alpha,\beta)$-hits.

\begin{example}
Let $T=\texttt{aabaaabcbda}$ and $\G_5(T)=\{4,5,6,11\}$. 
We have the following strings in $\R^L(T)$: $\overleftarrow{T[1\dd 4]}=\texttt{abaa}$; $\overleftarrow{T[1\dd 5]}=\texttt{aabaa}$; $\overleftarrow{T[1\dd 6]}=\texttt{aaabaa}$; and $\overleftarrow{T[1\dd 11]}=\texttt{adbcbaaabaa}$.
We have the following strings in $\R^R(T)$: $T[4 \dd 11]=\texttt{aaabcbda}$; $T[5 \dd 11]=\texttt{aabcbda}$; $T[6 \dd 11]=\texttt{abcbda}$; $T[11 \dd 11]=\texttt{a}$.
Inspect Figure~\ref{fig:matching}.
\end{example}

\begin{figure}[t]
     \centering
     \begin{subfigure}[b]{0.45\textwidth}
         \centering
         \includegraphics[width=6cm]{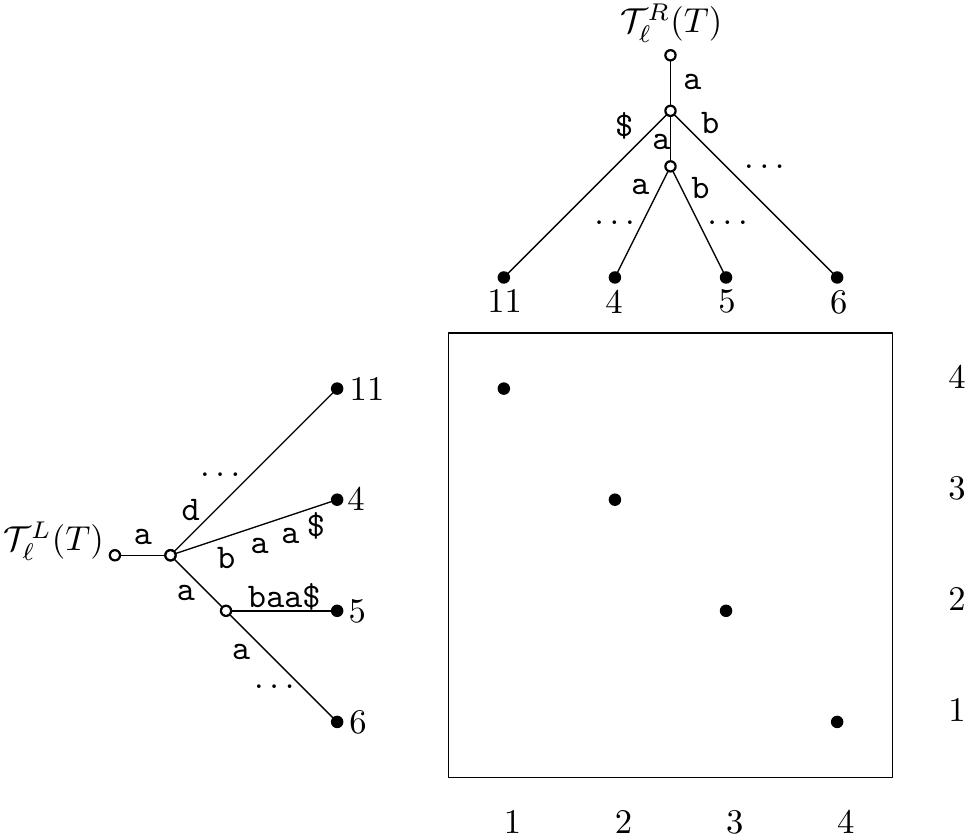}
         \caption{The $\mathcal{I}_{\ell}(T)$ index}
         \label{fig:ds}
     \end{subfigure}
     \hfill
     \begin{subfigure}[b]{0.45\textwidth}
         \centering
         \includegraphics[width=6cm]{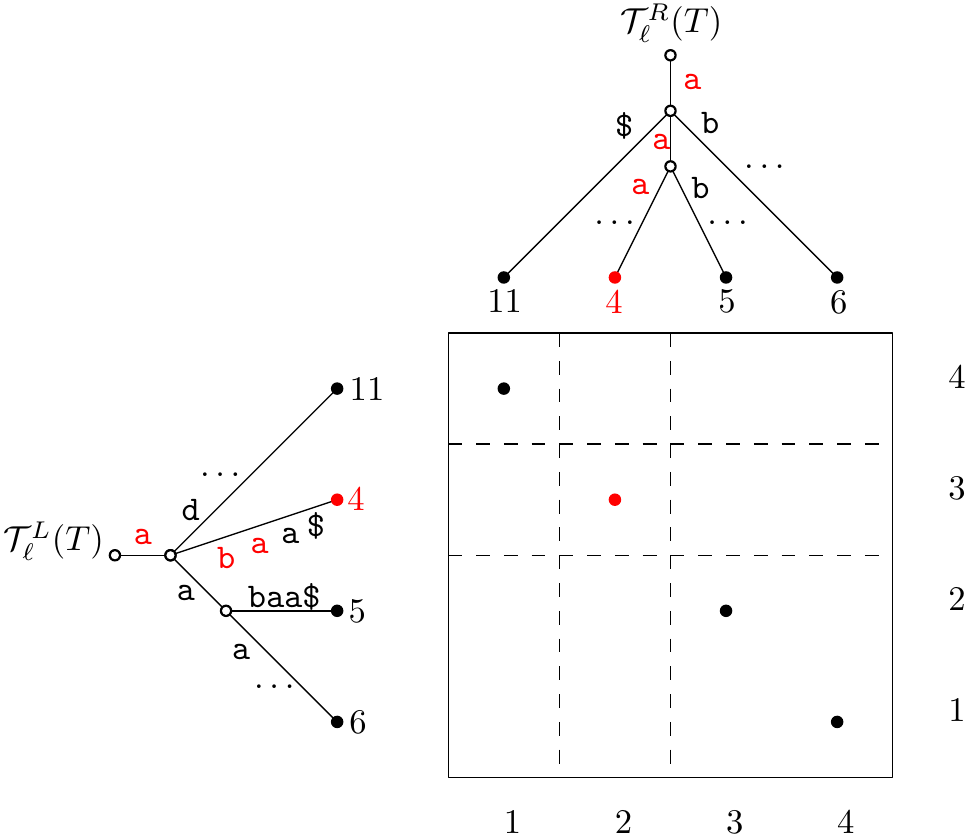}
         \caption{Querying \texttt{abaaa}}
         \label{fig:q}
     \end{subfigure}
         \caption{Let $T=\texttt{aabaaabcbda}$ and $\ell=5$. Further let $Q=\texttt{aacabaaaae}$, the bd-anchor $6\in\G_5(Q)$ of order $5$ corresponding to $Q[4\dd 8]$, $\alpha=3$ and $\beta=3$. The figure illustrates the $\mathcal{I}_{\ell}(T)$ index and how we find that $Q[4\dd 8]=T[2\dd 5]=\texttt{abaaa}$: the fragment $T[2\dd 5]$ is anchored at position $4$.}
       \label{fig:matching}
\end{figure}

\begin{proof}[Proof of Theorem~\ref{the:main}]
We use the $\cO(n)$-time algorithm underlying Theorem~\ref{the:con} to construct $\G_{\ell}(T)$.
We use the $\cO(n)$-time algorithm from~\cite{DBLP:journals/iandc/BartonK0PR20,sorting} to construct the compacted tries from $\G_{\ell}(T)$.
We extract the $|\G_{\ell}(T)|$ points $(x,y)\in \RANGE_\ell(T)$ using the compacted tries in $\cO(|\G_{\ell}(T)|)$ time.
For the first trade-off  of the statement, we use the $\cO(|\G_{\ell}(T)|\sqrt{\log(|\G_{\ell}(T)|)})$-time algorithm from~\cite{DBLP:conf/soda/BelazzouguiP16} to construct the 2D range reporting data structure over $\RANGE_\ell(T)$ from~\cite{DBLP:conf/compgeom/ChanLP11}. For the second trade-off, we use the $\cO(|\G_{\ell}(T)|\sqrt{\log(|\G_{\ell}(T)|)})$-time algorithm from~\cite{gao_et_al:LIPIcs:2020:12920} to construct the 2D range reporting data structure over $\RANGE_\ell(T)$ from the same paper. \end{proof}

We obtain the following corollary for the fundamental problem of \emph{text indexing}~\cite{DBLP:conf/focs/Weiner73,DBLP:journals/siamcomp/ManberM93,DBLP:conf/focs/Farach97,DBLP:journals/jacm/KarkkainenSB06,DBLP:journals/jacm/FerraginaM05,DBLP:journals/siamcomp/GrossiV05,DBLP:journals/siamcomp/HonSS09,DBLP:conf/stoc/Belazzougui14,DBLP:journals/algorithmica/0001KL15,DBLP:conf/soda/MunroNN17,DBLP:conf/stoc/KempaK19,DBLP:journals/jacm/GagieNP20}.

\begin{corollary}\label{the:indexing}
Given $\mathcal{I}_\ell(T)$ constructed for some integer $\ell>0$ and some constant $\epsilon>0$ over string $T$, we can report all $k$ occurrences of any pattern $Q$, $|Q|\geq \ell$, in $T$ in time:
\begin{itemize}
    \item $\cO(|Q|+(k+1)\log^{\epsilon}(|\G_{\ell}(T)|))$ when $\mathcal{I}_\ell(T)$ occupies $\cO(|\G_{\ell}(T)|)$ extra space; or
    \item $\cO(|Q|+\log\log(|\G_{\ell}(T)|)+k)$ when $\mathcal{I}_\ell(T)$ occupies $\cO(|\G_{\ell}(T)|\log^{\epsilon}(|\G_{\ell}(T)|))$ extra space.
\end{itemize}
\end{corollary}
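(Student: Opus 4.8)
The plan is to reduce the corollary to Theorem~\ref{the:main} by showing that full-text pattern searching can be simulated by a single $(\alpha,\beta)$-hit query. The key observation is that any pattern $Q$ with $|Q|\geq\ell$ contains at least one length-$\ell$ fragment, so $\G_{\ell}(Q)$ is nonempty; moreover, by Property~2 (local consistency) of bd-anchors, every occurrence of $Q$ in $T$ must ``carry'' the bd-anchors of $Q$ at the corresponding positions. Concretely, first I would compute a bd-anchor $j_Q\in\G_{\ell}(Q)$ of some length-$\ell$ fragment of $Q$ in $\cO(|Q|)$ time using Booth's algorithm or the construction underlying Theorem~\ref{the:con}; say this fragment is $Q[s\dd s+\ell-1]$ with $j_Q\in[s,s+\ell-1]$. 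Setting $\alpha=j_Q$ and $\beta=|Q|-j_Q+1$, we have $\alpha+\beta=|Q|+1\geq\ell+1$, and the strings $Q[j_Q-\alpha+1\dd j_Q]=Q[1\dd j_Q]$ and $Q[j_Q\dd j_Q+\beta-1]=Q[j_Q\dd |Q|]$ together cover all of $Q$.

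The second step is to argue the correspondence between occurrences of $Q$ in $T$ and $(\alpha,\beta)$-hits. If $Q$ occurs at position $i$ in $T$, i.e.\ $T[i\dd i+|Q|-1]=Q$, then the length-$\ell$ fragment $T[i+s-1\dd i+s+\ell-2]$ equals $Q[s\dd s+\ell-1]$, so its bd-anchor is at the same relative offset, namely $j_T := i+j_Q-1 \in \G_{\ell}(T)$; and one checks directly that $T[j_T-\alpha+1\dd j_T]=Q[1\dd j_Q]$ and $T[j_T\dd j_T+\beta-1]=Q[j_Q\dd|Q|]$, so $(j_Q,j_T)$ is an $(\alpha,\beta)$-hit. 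Conversely, if $(j_Q,j_T)$ is an $(\alpha,\beta)$-hit, then $T[j_T-\alpha+1\dd j_T]=Q[1\dd j_Q]$ and $T[j_T\dd j_T+\beta-1]=Q[j_Q\dd|Q|]$ concatenate (overlapping exactly at position $j_Q$) to give $T[j_T-j_Q+1\dd j_T-j_Q+|Q|]=Q$, hence $Q$ occurs at $i:=j_T-j_Q+1$. Thus the map $i\mapsto j_T=i+j_Q-1$ is a bijection between occurrences of $Q$ in $T$ and $(\alpha,\beta)$-hits, so the number $k$ of hits equals the number of occurrences, and each occurrence position is recovered in $\cO(1)$ time from the corresponding hit.

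Combining these, we query $\mathcal{I}_{\ell}(T)$ with $j_Q$, $\alpha=j_Q$, and $\beta=|Q|-j_Q+1$; by Theorem~\ref{the:main} this reports all $k$ hits in $\cO(\alpha+\beta+(k+1)\log^{\epsilon}(|\G_{\ell}(T)|))=\cO(|Q|+(k+1)\log^{\epsilon}(|\G_{\ell}(T)|))$ time in the first trade-off, and in $\cO(|Q|+\log\log(|\G_{\ell}(T)|)+k)$ time in the second, where we used $\alpha+\beta=|Q|+1$. Adding the $\cO(|Q|)$ time to compute $j_Q$ does not change the bound, and translating each hit to an occurrence position adds only $\cO(k)$.

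I do not expect any serious obstacle here; the only point requiring a little care is the boundary bookkeeping in the overlap at position $j_Q$ and the verification that the two spelled strings in the query genuinely reconstruct $Q$ — this is where one should be careful that $\alpha$ and $\beta$ are chosen so that the fragment anchored at $j_Q$ is exactly $Q[1\dd|Q|]$ rather than a proper sub-fragment. Once that is pinned down, the corollary is immediate from Theorem~\ref{the:main}.
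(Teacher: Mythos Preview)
Your proposal is correct and follows essentially the same route as the paper: pick the bd-anchor $j$ of a length-$\ell$ fragment of $Q$ (the paper fixes this fragment to be the prefix $Q[1\dd\ell]$, which is exactly what your choice $\alpha=j_Q$ amounts to anyway), set $\alpha=j$ and $\beta=|Q|-j+1$, and invoke Theorem~\ref{the:main}. Your write-up is more explicit than the paper's about the bijection between occurrences and $(\alpha,\beta)$-hits via local consistency, but the underlying argument is identical.
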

\begin{proof}
Every occurrence of $Q$ in $T$ is prefixed by string $P=Q[1\dd \ell]$.
We first compute the bd-anchor of $P$ in $\cO(\ell)$ time using Booth's algorithm. Let this bd-anchor be $j$. We set $\alpha=j$ and $\beta=|Q|-j+1$. The result follows by applying Theorem~\ref{the:main}.
\end{proof}

\paragraph{Querying Multiple Fragments.}
In the case of approximate pattern matching, we may want to query multiple length-$\ell$ fragments of a string $Q$ given as an on-line query, and not only its length-$\ell$ prefix. We show that such an operation can be done efficiently using the bd-anchors of $Q$ and the $\mathcal{I}_\ell(T)$ index.

\begin{corollary}
Given $\mathcal{I}_\ell(T)$ constructed for some $\ell>0$ and some constant $\epsilon>0$ over $T$, for any sequence (not necessarily consecutive) of $d>0$ length-$\ell$ fragments of a pattern $Q$, $|Q|\geq \ell$, corresponding to the same order-$\ell$ bd-anchor of $Q$, we can report all $k_d$ occurrences of all $d$ fragments in $T$ in time:
\begin{itemize}
    \item $\cO(\ell+(d+k_d)\log^{\epsilon}(|\G_{\ell}(T)|))$ when $\mathcal{I}_\ell(T)$ occupies $\cO(|\G_{\ell}(T)|)$ space; or
    \item $\cO(\ell+d\log\log(|\G_{\ell}(T)|)+k_d)$ when $\mathcal{I}_\ell(T)$ occupies $\cO(|\G_{\ell}(T)|\log^{\epsilon}(|\G_{\ell}(T)|))$ space.
\end{itemize}
\end{corollary}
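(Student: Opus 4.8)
The plan is to reduce the whole task to $d$ two-dimensional range-reporting queries over $\RANGE_\ell(T)$ that all piggyback on a single $\cO(\ell)$-time walk in each of the two compacted tries $\T^L_\ell(T)$ and $\T^R_\ell(T)$. Let $j_Q$ be the common order-$\ell$ bd-anchor of the $d$ fragments; writing the $t$-th fragment as $Q[j_Q-\alpha_t+1\dd j_Q+\beta_t-1]$ we have $\alpha_t,\beta_t\geq 1$ and $\alpha_t+\beta_t=\ell+1$, with $\alpha_t\leq\min(\ell,j_Q)$ and $\beta_t\leq\min(\ell,|Q|-j_Q+1)$. (If $j_Q$ and the offsets $\alpha_t$ are not supplied with the query, I would first recover $j_Q$ by running Booth's algorithm on one of the fragments in $\cO(\ell)$ time and then set $\alpha_t$ from the fragment's starting position.) The first step is to record the bijection between the occurrences of the $t$-th fragment in $T$ and the $(\alpha_t,\beta_t)$-hits $(j_Q,j_T)$: if the fragment occurs at position $i$, then $T[i\dd i+\ell-1]$ is a length-$\ell$ fragment of $T$ whose bd-anchor lies at relative position $\alpha_t$, hence $j_T:=i+\alpha_t-1\in\G_\ell(T)$ is a leaf of both tries and $(j_Q,j_T)$ is an $(\alpha_t,\beta_t)$-hit; conversely every $(\alpha_t,\beta_t)$-hit $(j_Q,j_T)$ gives the occurrence $j_T-\alpha_t+1$. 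So it suffices to answer the $d$ hit-queries, and the point is not to re-pay $\Theta(\ell)$ per query for spelling inside the tries.

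The key step exploits that the $d$ left-strings $\overleftarrow{Q[j_Q-\alpha_t+1\dd j_Q]}$ are nested prefixes of one another ordered by $\alpha_t$, and symmetrically for the right-strings $Q[j_Q\dd j_Q+\beta_t-1]$. I would therefore spell a prefix of $\overleftarrow{Q[1\dd j_Q]}$ of length up to $\ell$ in $\T^L_\ell(T)$ once, navigating edges in constant time via the perfect static dictionaries stored in the nodes, and maintain along the walk the interval $[x,y]$ over $\A^L_\ell(T)$ of the node at, or immediately below, the currently matched point of the compacted trie --- this is precisely the set of leaves whose left-string has the already-matched string as a prefix; it is initialized to the root interval, updated to the child's interval whenever a new edge toward a child is entered, and left unchanged while running along the interior of an edge. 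Storing this interval at every matched depth yields an array $\mathrm{LInt}[1\dd A']$, where $A'\leq\ell$ is the depth at which the walk halts (by exhausting $Q$ to the left, by reaching the end of a trie path, or by a mismatch). Symmetrically, spelling a prefix of $Q[j_Q\dd |Q|]$ of length up to $\ell$ in $\T^R_\ell(T)$ produces $\mathrm{RInt}[1\dd B']$. Both walks together cost $\cO(\ell)$.

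Finally, for each $t\in[1,d]$: if $\alpha_t>A'$ or $\beta_t>B'$ then the $t$-th fragment has no occurrence in $T$ (an occurrence would, by the paragraph above, put into the tries a leaf $j_T$ realizing both the length-$\alpha_t$ left-prefix and the length-$\beta_t$ right-prefix), and nothing is output; otherwise I issue a single 2D range-reporting query on the rectangle $\mathrm{LInt}[\alpha_t]\times\mathrm{RInt}[\beta_t]$ over $\RANGE_\ell(T)$, which, exactly as in the proof of Theorem~\ref{the:main}, returns precisely the $(\alpha_t,\beta_t)$-hits, i.e.\ the $k_t$ occurrences of the $t$-th fragment (reported as positions $j_T-\alpha_t+1$). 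Summing the per-query cost of Theorem~\ref{the:main} over $t$ gives $\sum_t\cO((k_t+1)\log^{\epsilon}(|\G_\ell(T)|))=\cO((d+k_d)\log^{\epsilon}(|\G_\ell(T)|))$ in the first trade-off and $\sum_t\cO(\log\log(|\G_\ell(T)|)+k_t)=\cO(d\log\log(|\G_\ell(T)|)+k_d)$ in the second, with $k_d=\sum_t k_t$; adding the $\cO(\ell)$ of the shared walks gives the claimed bounds. I expect the main obstacle to be the bookkeeping of the compacted-trie walk: pinning down that the ``interval at matched depth $g$'' is well-defined and correct even when the matched point falls strictly inside an edge, and that it coincides with the node interval that the query in Theorem~\ref{the:main} uses. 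Everything else is a routine summation.
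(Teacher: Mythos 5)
Your proposal is correct and follows essentially the same route as the paper: both exploit that the $d$ left-strings (resp.\ right-strings) anchored at $j_Q$ are nested prefixes of one another, so a single $\cO(\ell)$-time spelling in each compacted trie yields all $d$ rectangles, after which $d$ separate 2D range-reporting queries give the stated bounds by direct summation. Your write-up is merely more explicit about the locus/interval bookkeeping inside edges and the hit-to-occurrence bijection, which the paper leaves implicit.
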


\begin{proof}
Let the order-$\ell$ bd-anchor over $Q$ be $j_Q$ and the corresponding parameters be $(\alpha_1,\beta_1),\cdots,(\alpha_d,\beta_d)$, with $\alpha_i+\beta_i=\ell+1$. Observe that $\alpha_i>\alpha_{i+1}$ and $\beta_i<\beta_{i+1}$. Starting from $j_Q$, the string $\overleftarrow{Q[j_Q-\alpha_i+1\dd j_Q]}$ we spell for fragment $i$ is the prefix of $\overleftarrow{Q[j_Q-\alpha_{i-1}+1\dd j_Q]}$ for fragment $i-1$. The analogous property holds for the other direction: the string $Q[j_Q\dd j_Q+\beta_i]$ we spell for fragment $i$ is the prefix of $Q[j_Q\dd j_Q+\beta_{i+1}]$ for fragment $i+1$. Thus it takes only $\cO(\ell)$ time to construct all $d$ rectangles. Finally, we ask the $d$ corresponding 2D range reporting queries to obtain all $k_d$ occurrences in the claimed time complexities.
\end{proof}

\paragraph{Index Evaluation.}~Consider a hash table with the following (key, value) pairs: the \emph{key} is the hash value $h(S)$ of a length-$k$ string $S$; and the \emph{value} (satellite data) is a list of occurrences of $S$ in $T$. It should be clear that such a hash table indexing the minimizers of $T$ does not perform well for on-line pattern searches of \emph{arbitrary length} because it would need to verify the remaining prefix and suffix of the pattern using letter comparisons \emph{for all occurrences} of a minimizer in $T$. We thus opted for comparing our index to the one of~\cite{DBLP:journals/spe/GrabowskiR17}, which addresses this specific problem by sampling the suffix  array~\cite{DBLP:journals/siamcomp/ManberM93} with minimizers to reduce the number of letter comparisons during verification.

To ensure a fair comparison, we have implemented the basic index from~\cite{DBLP:journals/spe/GrabowskiR17}; we denote it by \textsf{GR Index}. We used Karp-Rabin~\cite{DBLP:journals/ibmrd/KarpR87} fingerprints for computing the minimizers of $T$. We also used the array-based version of the suffix tree that consists of the suffix array (SA) and the longest common prefix (LCP) array~\cite{DBLP:journals/siamcomp/ManberM93}; SA was constructed using SDSL~\cite{DBLP:conf/wea/GogBMP14} and the LCP array using the Kasai et al.~\cite{DBLP:conf/cpm/KasaiLAAP01} algorithm. 
 
We sampled the SA using the minimizers. 
Given a pattern $Q$, we searched $Q[j\dd |Q|]$ starting with the minimizer $Q[j\dd j+k-1]$ using the Manber and Myers~\cite{DBLP:journals/siamcomp/ManberM93} algorithm on the sampled SA. For verifying the remaining prefix $Q[1\dd j-1]$ of $Q$, we used letter comparisons, as described in~\cite{DBLP:journals/spe/GrabowskiR17}.
The space complexity of this implementation is $\cO(n)$ and the extra space for the index is $\cO(|\M_{w,k}(T)|)$. The query time is not bounded. We have implemented two versions of our index. We used Booth's algorithm for computing the bd-anchors of $T$. We used SDSL for SA construction and the Kasai et al.~algorithm for LCP array construction. We sampled the SA using the bd-anchors thus constructing $\A^{L}_{\ell}(T)$ and $\A^{R}_{\ell}(T)$. Then, the two versions of  our index are:
\begin{enumerate}
    \item \textbf{\textsf{BDA Index v1}}: Let $j$ be the bd-anchor of $Q[1\dd \ell]$.
    For $\overleftarrow{Q[1\dd j]}$ (resp.~$Q[j\dd |Q|]$) we used the Manber and Myers algorithm for searching over $\A^{L}_{\ell}(T)$ (resp.~$\A^{R}_{\ell}(T)$). We used range trees~\cite{DBLP:books/lib/BergCKO08} implemented in CGAL~\cite{cgal:eb-21a} for 2D range reporting as per the described querying process. The space complexity of this implementation is $\cO(n+|\G_{\ell}(T)|\log(|\G_{\ell}(T)|))$ and the extra space for the index is $\cO(|\G_{\ell}(T)|\log(|\G_{\ell}(T)|))$.
    The query time is $\cO(|Q| + \log^2(|\G_{\ell}(T)|) + k)$, where $k$ is the total number of occurrences of $Q$ in $T$. 
    \item \textbf{\textsf{BDA Index v2}}: Let $j$ be the  bd-anchor of $Q[1\dd \ell]$. If $|Q|-j+1\geq j$ (resp.~$|Q|-j+1< j$), we search for $Q[j\dd |Q|]$ (resp.~$\overleftarrow{Q[1\dd j]}$) using the Manber and Myers algorithm on $\A^{R}_{\ell}(T)$ (resp.~$\A^{L}_{\ell}(T)$). For verifying the remaining part of the pattern we used letter comparisons. The space complexity of this implementation is $\cO(n)$ and the extra space for the index is $\cO(|\G_{\ell}(T)|)$. The query time is not bounded.
\end{enumerate}

For each of the five real datasets of Table~\ref{tab:data} and each query string length $\ell$, we randomly extracted 500,000 substrings from the text and treated each substring as a query, following~\cite{DBLP:journals/spe/GrabowskiR17}. 
We plot the average query time in Figure~\ref{fig:experiment2-time}. As can be seen, \textsf{BDA Index v2} consistently outperforms \textsf{GR Index} across all datasets and all $\ell$ values. The better performance of \textsf{BDA Index v2} is due to two theoretical reasons. First, the verification strategy exploits the fact that the index is \emph{bidirectional} to apply the Manber and Myers algorithm to the largest part of the pattern, which results in fewer letter comparisons. Second, bd-anchors generally have smaller density compared to minimizers; see Figure~\ref{fig:experiment2-density}. We also plot the peak memory usage in Figure~\ref{fig:experiment2-mem}. As can be seen, \textsf{BDA Index v2} requires a similar amount of memory to \textsf{GR Index}. 

\begin{figure}[t]
    \begin{subfigure}[b]{0.45\textwidth}
    \includegraphics[width=0.9\linewidth]{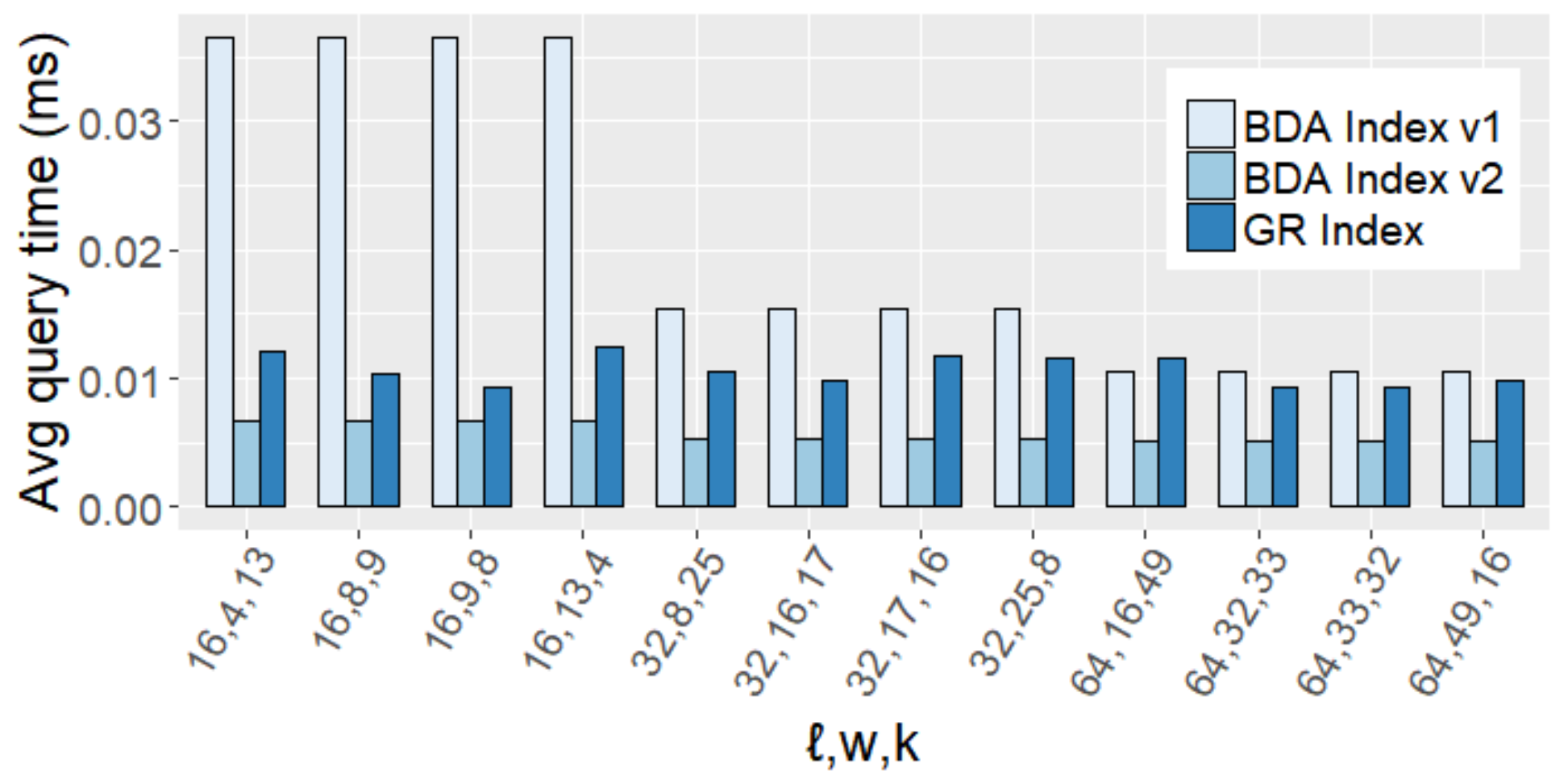}
    \caption{DNA}
    \end{subfigure}\hspace{+1mm}
    \begin{subfigure}[b]{0.45\textwidth}
    \includegraphics[width=0.9\linewidth]{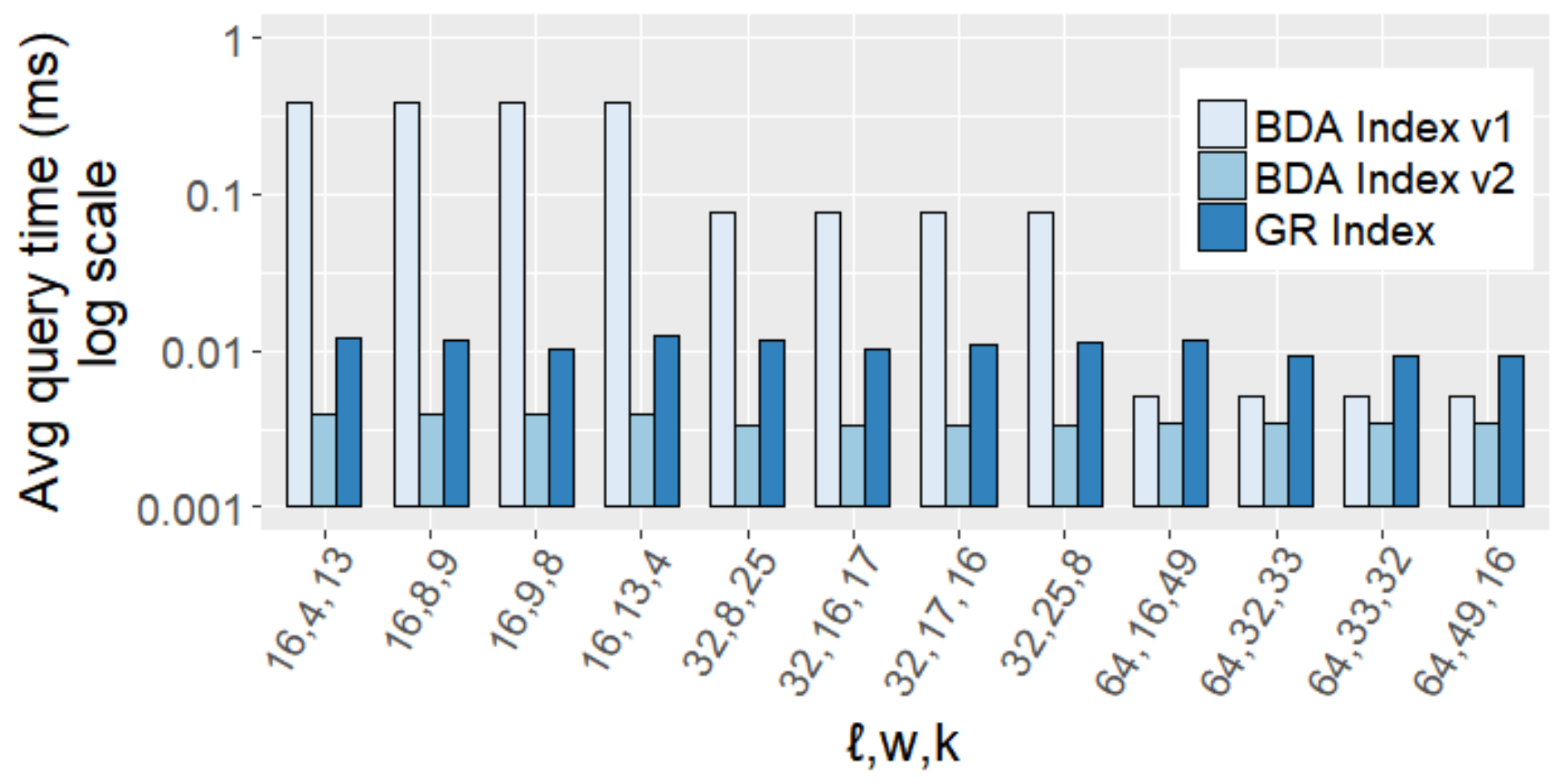}
    \caption{\DBLP}
    \end{subfigure}
    \begin{subfigure}[b]{0.45\textwidth}
    \includegraphics[width=0.9\linewidth]{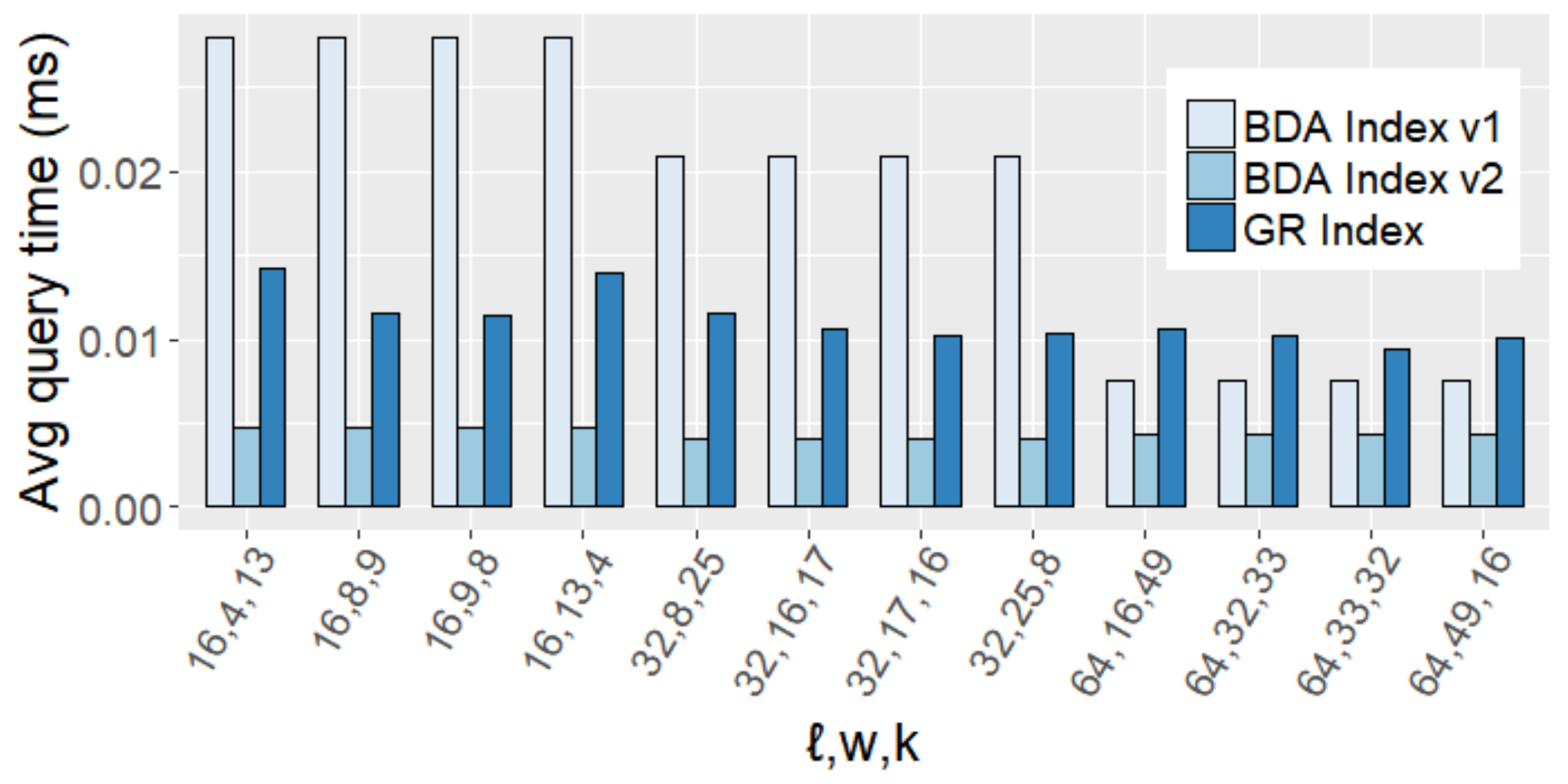}
    \caption{ENGLISH}
    \end{subfigure}
    \begin{subfigure}[b]{0.45\textwidth}
    \includegraphics[width=0.9\linewidth]{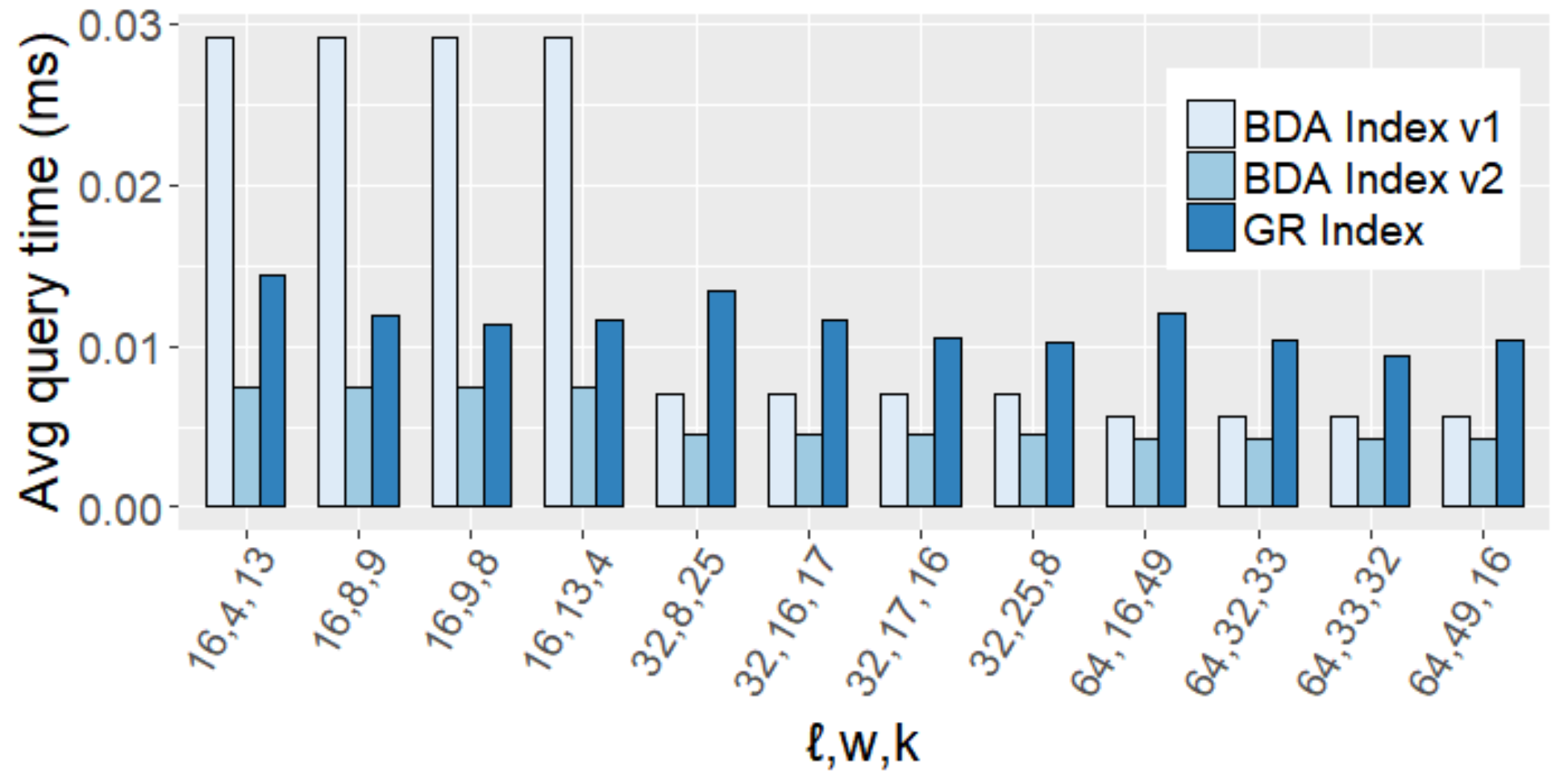}
    \caption{PROTEINS}
    \end{subfigure}
    \centering
    \begin{subfigure}[b]{0.45\textwidth}
    \includegraphics[width=0.9\linewidth]{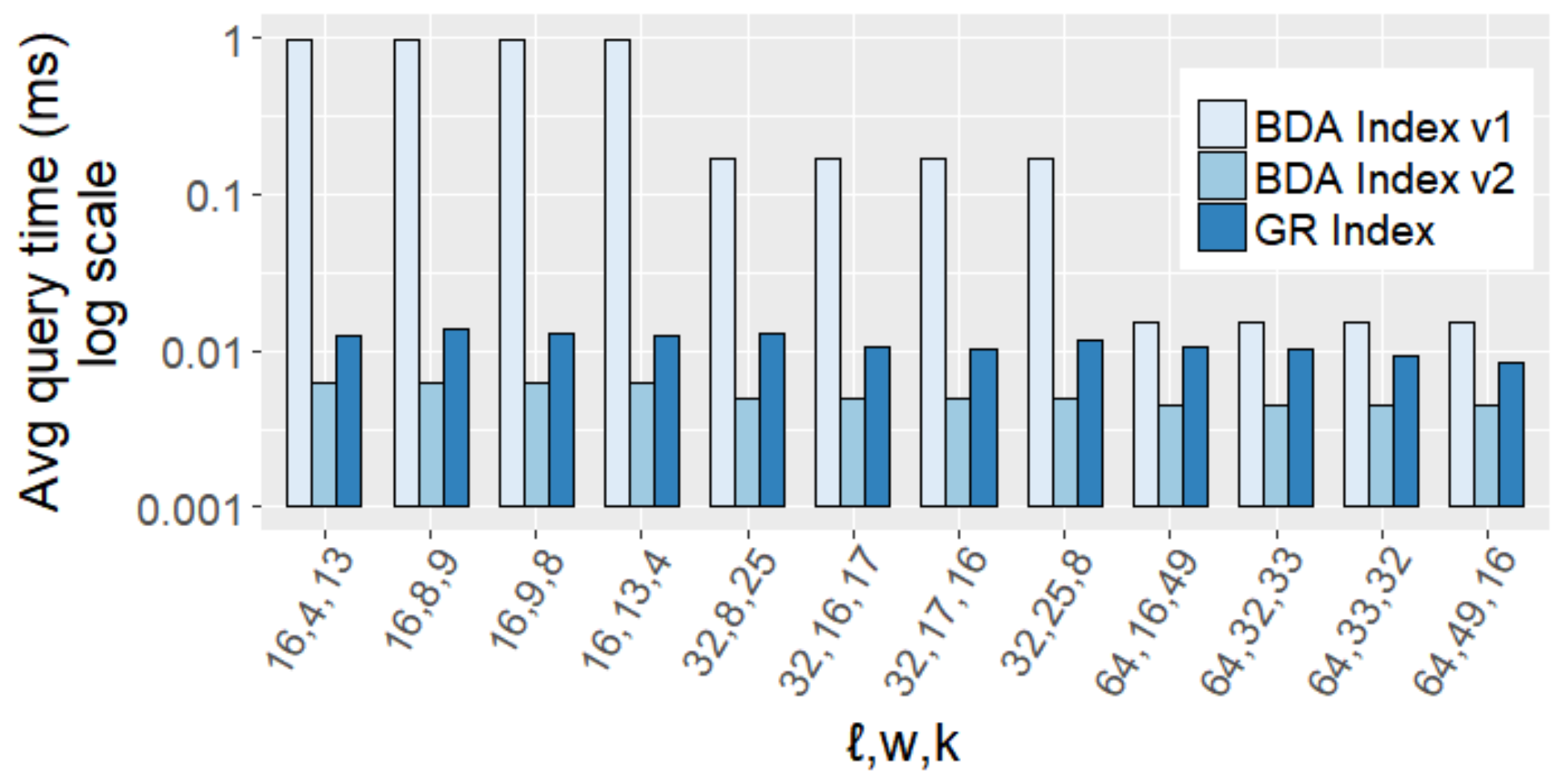}
    \caption{SOURCES}
    \end{subfigure}
    \caption{Average query time (ms) vs.~$w,k$ for $\ell=w+k-1$ and the datasets of Table~\ref{tab:data}.}\label{fig:experiment2-time}
    \end{figure}

\textsf{BDA Index v1} was slower than \textsf{GR Index} for small $\ell$ but faster for large $\ell$ in three out of five datasets used and had by far the highest memory usage. Let us stress that the inefficiency of \textsf{BDA Index v1} is not due to inefficiency in the query time or space of our algorithm. It is merely because  the range tree implementation of CGAL, which is a standard off-the-shelf library, is unfortunately inefficient in terms of both query time and memory usage; see also~\cite{alenex19,fisikopoulos}. 

\begin{figure}[t]
     \begin{subfigure}[b]{0.453\textwidth}
     \includegraphics[width=1\linewidth]{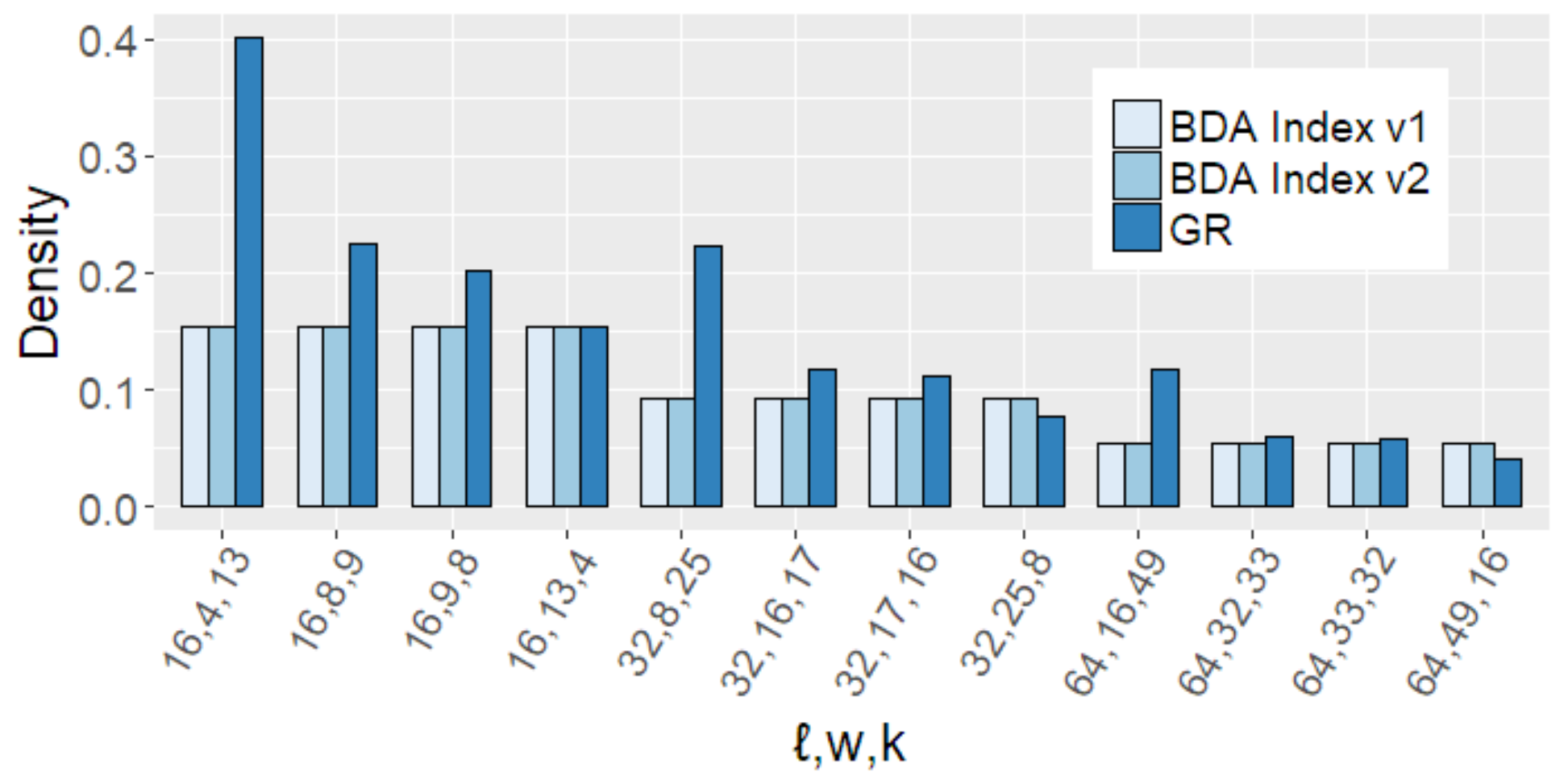}
     \caption{DNA}
     \end{subfigure}\hspace{+2mm}
     \begin{subfigure}[b]{0.453\textwidth}
     \includegraphics[width=1\linewidth]{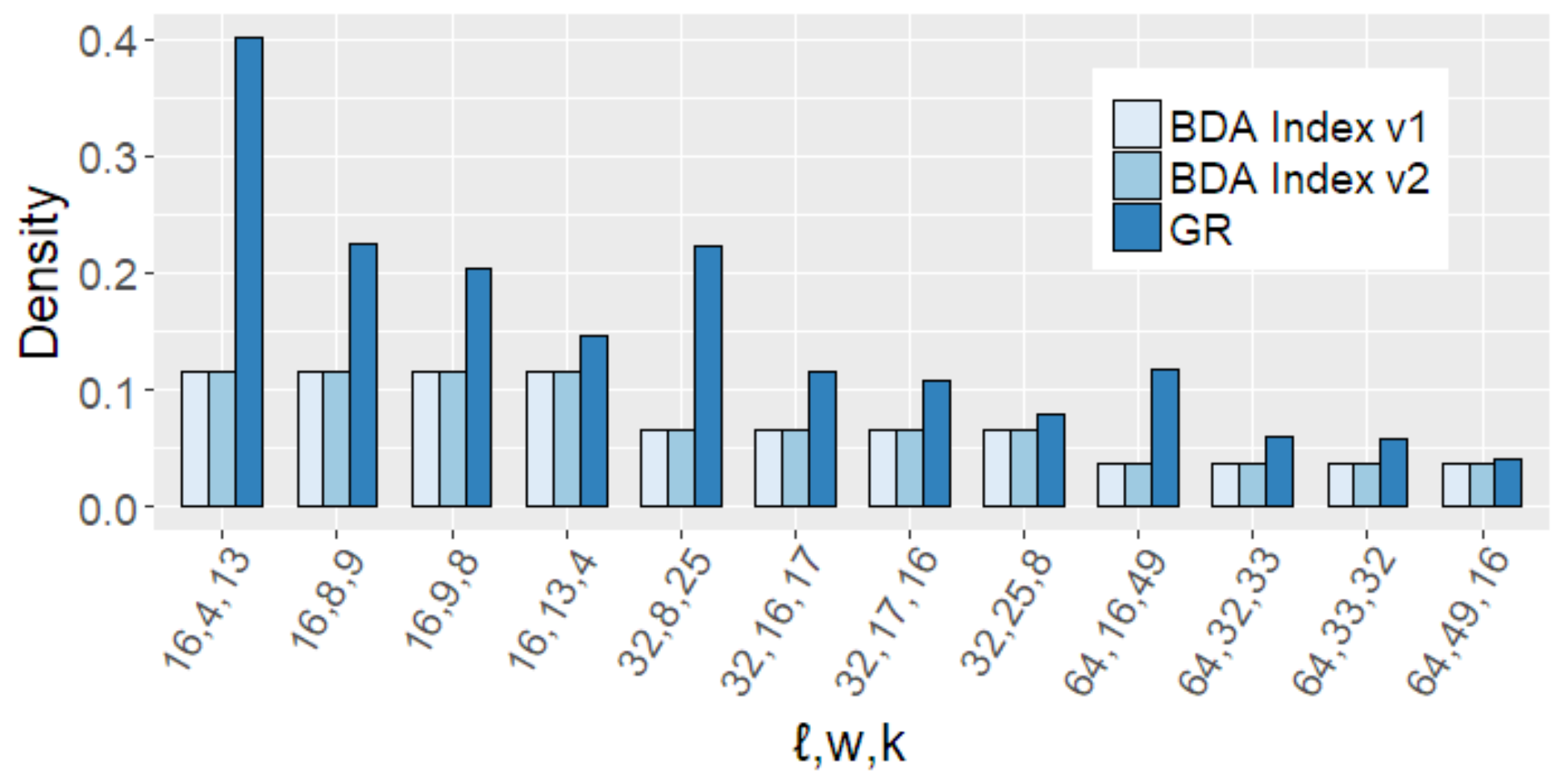}
     \caption{XML}
     \end{subfigure}\\
     \begin{subfigure}[b]{0.453\textwidth}
     \includegraphics[width=1\linewidth]{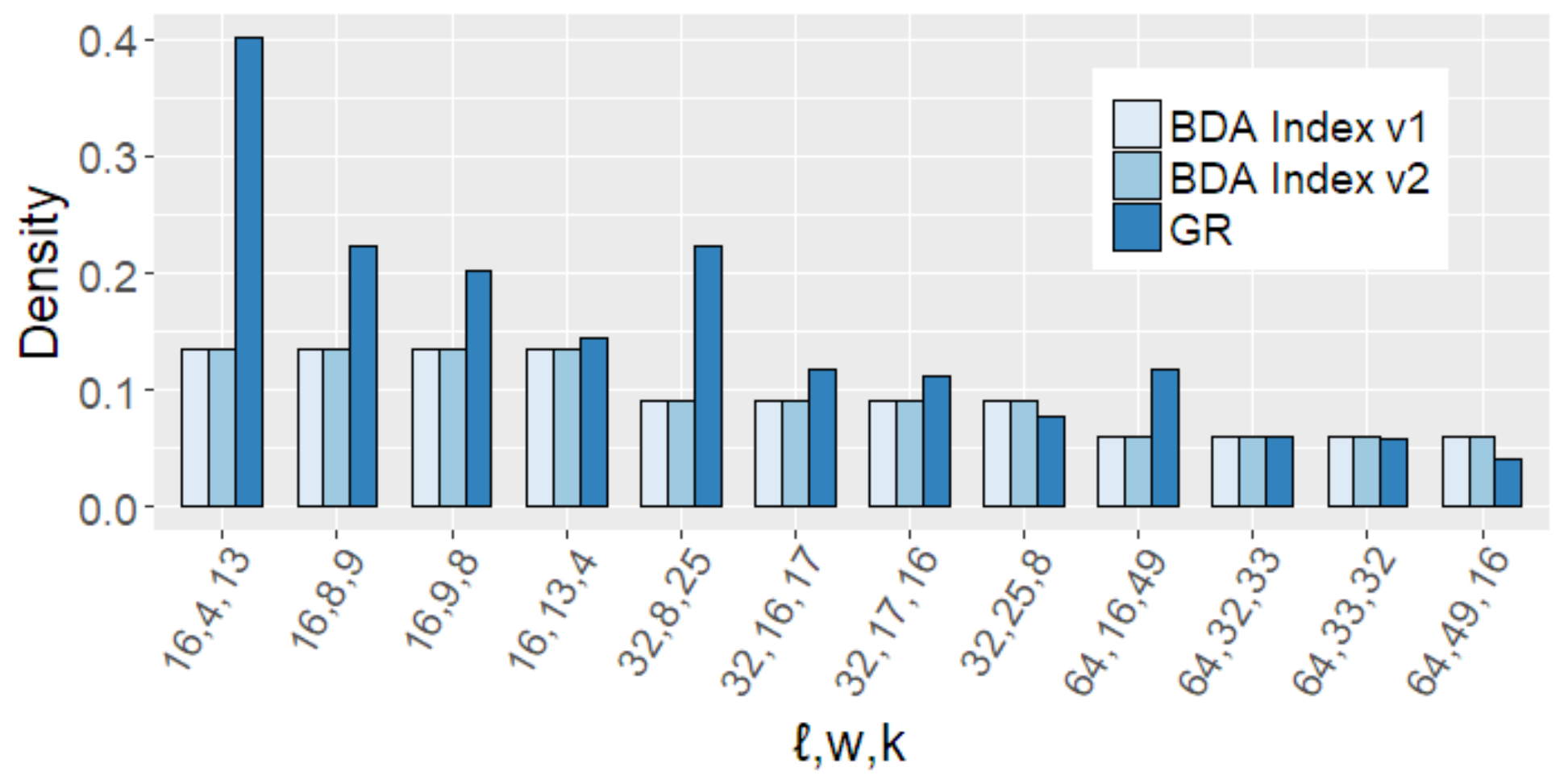}
     \caption{ENGLISH}
     \end{subfigure}\hspace{+2mm}
     \begin{subfigure}[b]{0.453\textwidth}
     \includegraphics[width=1\linewidth]{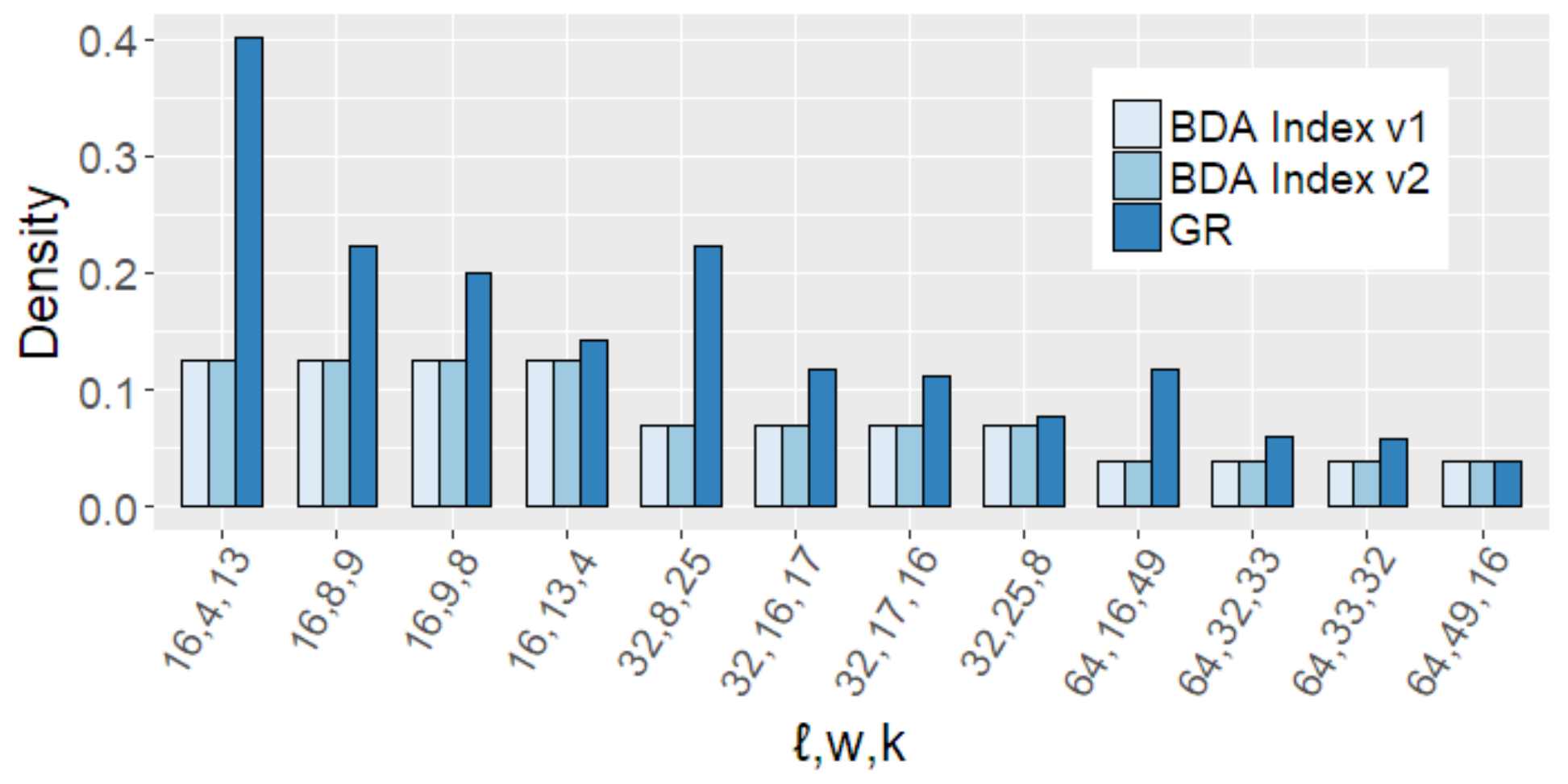}
     \caption{PROTEINS}
     \end{subfigure}\\\centering
     \begin{subfigure}[b]{0.453\textwidth}
     \includegraphics[width=1\linewidth]{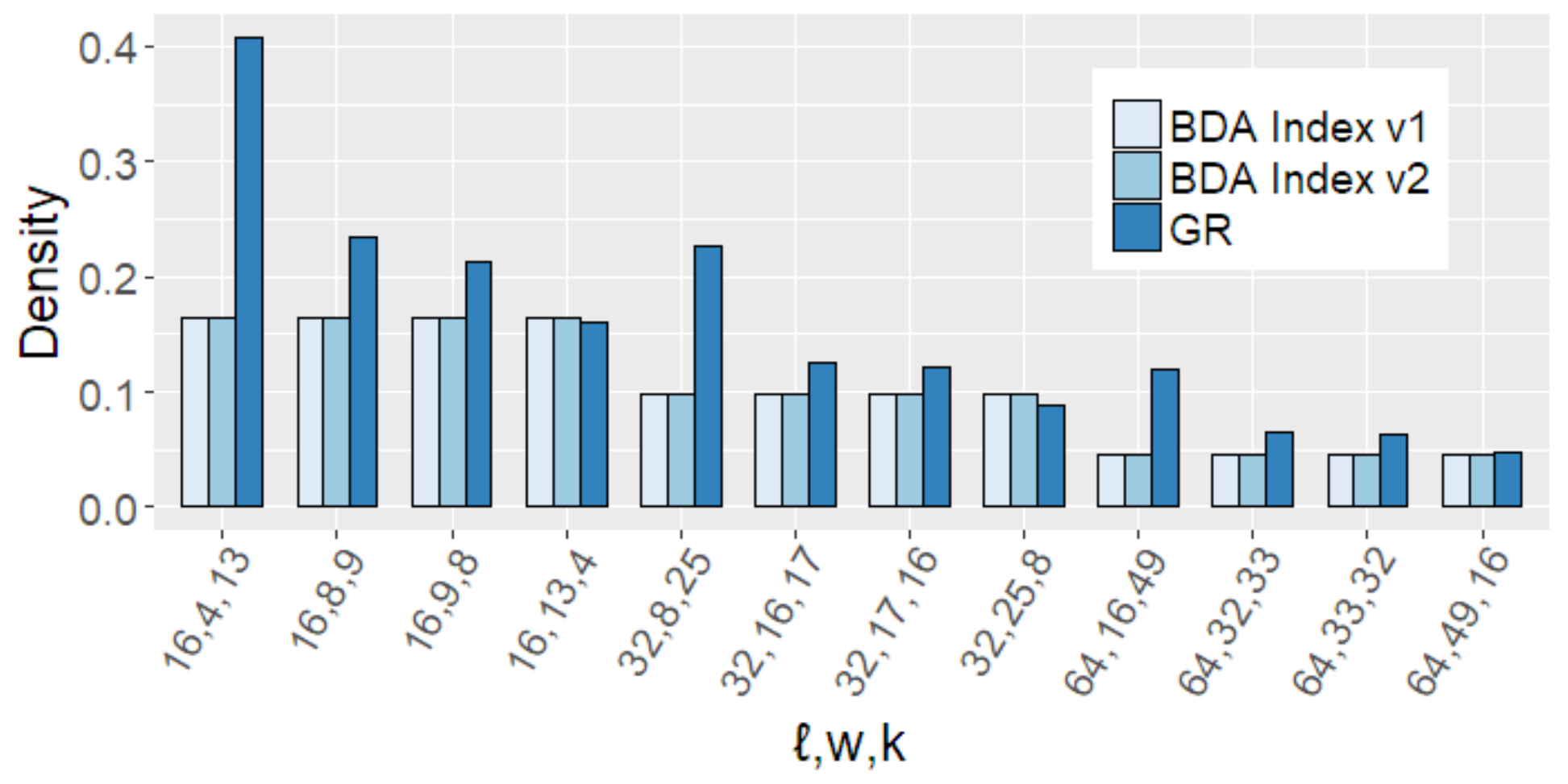}
     \caption{SOURCES}
     
     \end{subfigure}
     \caption{Density vs.~$w,k$ for $\ell=w+k-1$ and the datasets of Table~\ref{tab:data}.}\label{fig:experiment2-density}
     \end{figure}
    
    \begin{figure}[t]
     \begin{subfigure}[b]{0.453\textwidth}
     \includegraphics[width=1\linewidth]{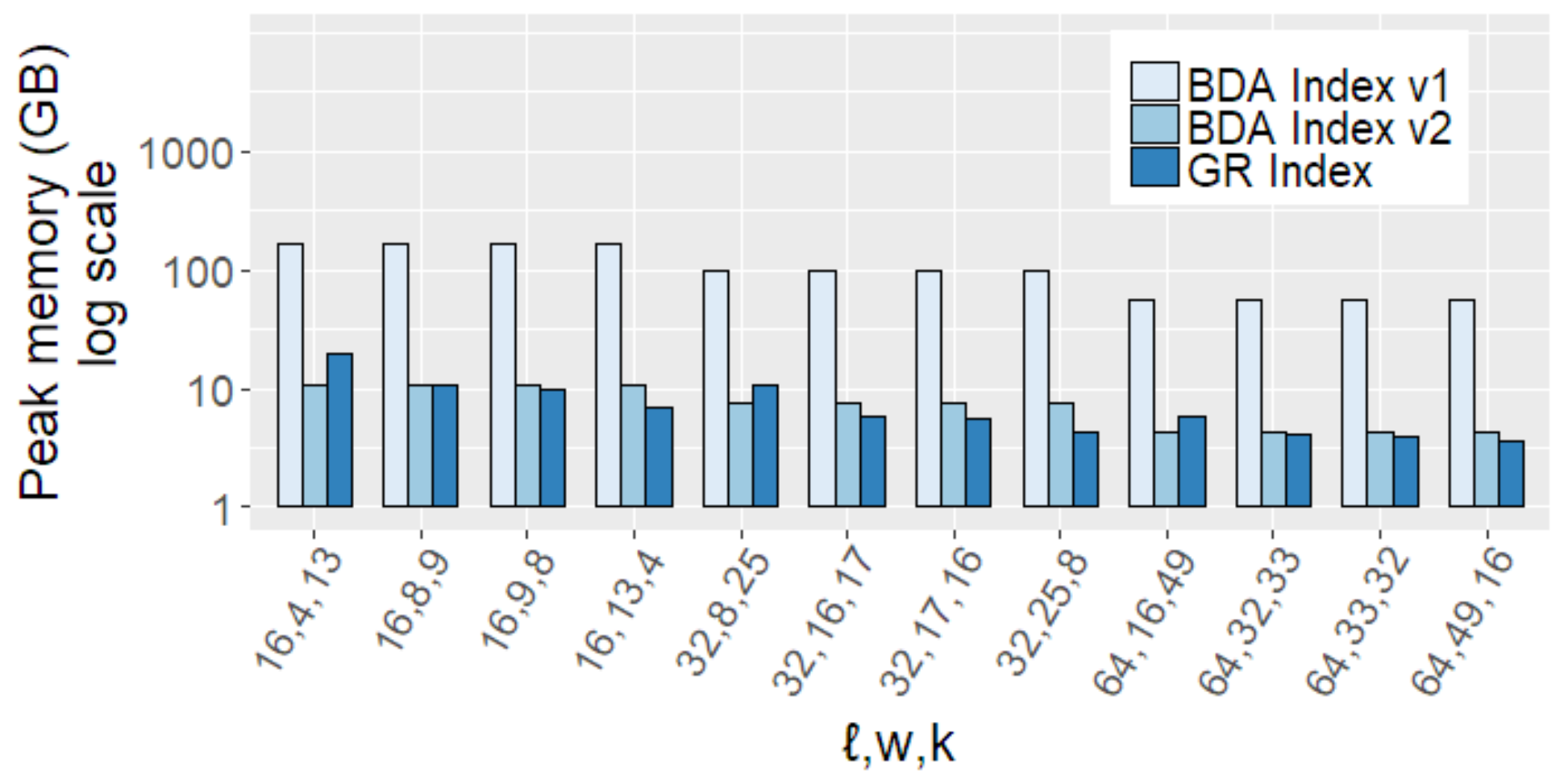}
     \caption{DNA}
     \end{subfigure}\hspace{+2mm}
     \begin{subfigure}[b]{0.453\textwidth}
     \includegraphics[width=1\linewidth]{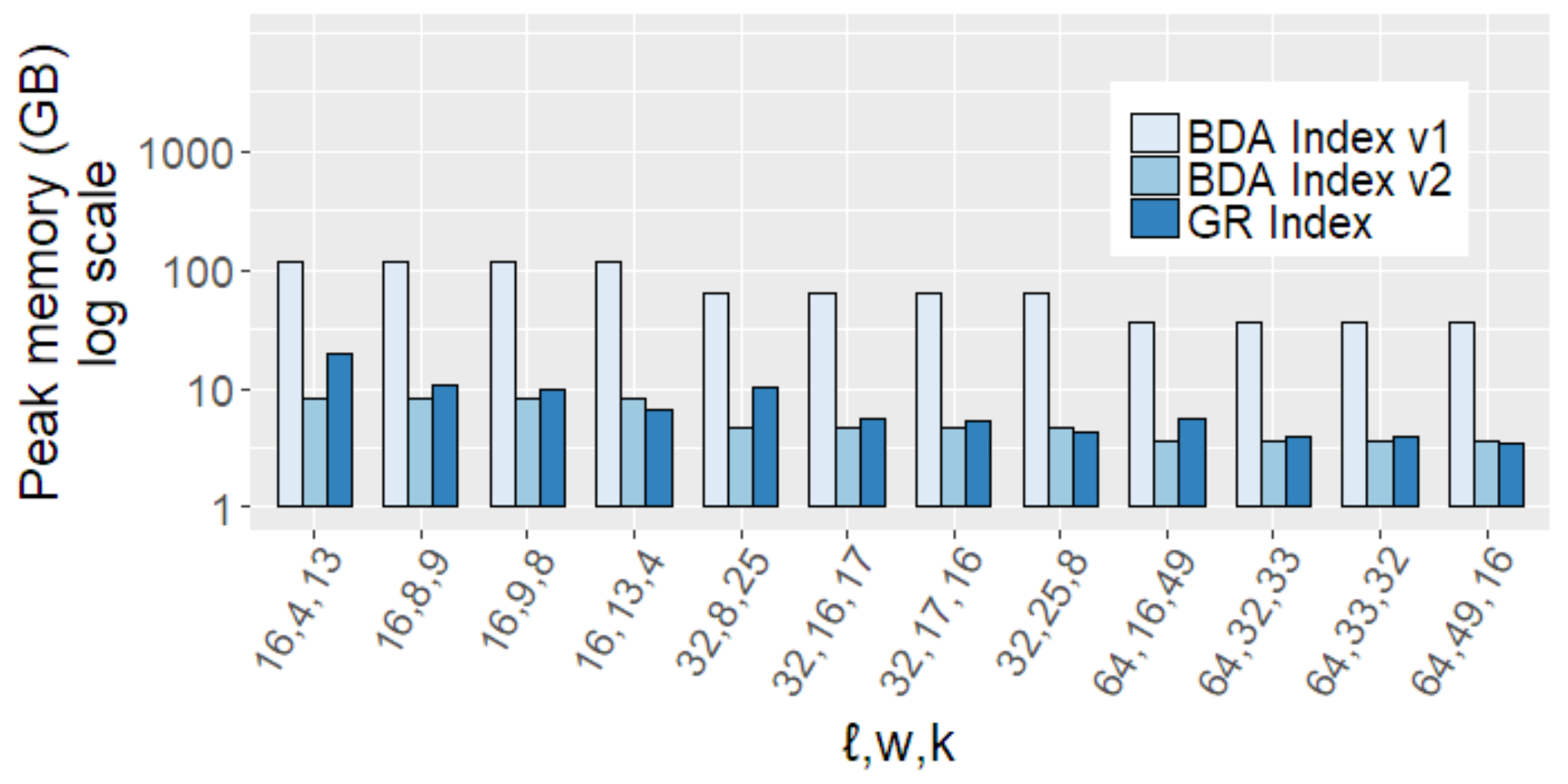}
     \caption{\DBLP}
     \end{subfigure}\\
\begin{subfigure}[b]{0.453\textwidth}
     \includegraphics[width=1\linewidth]{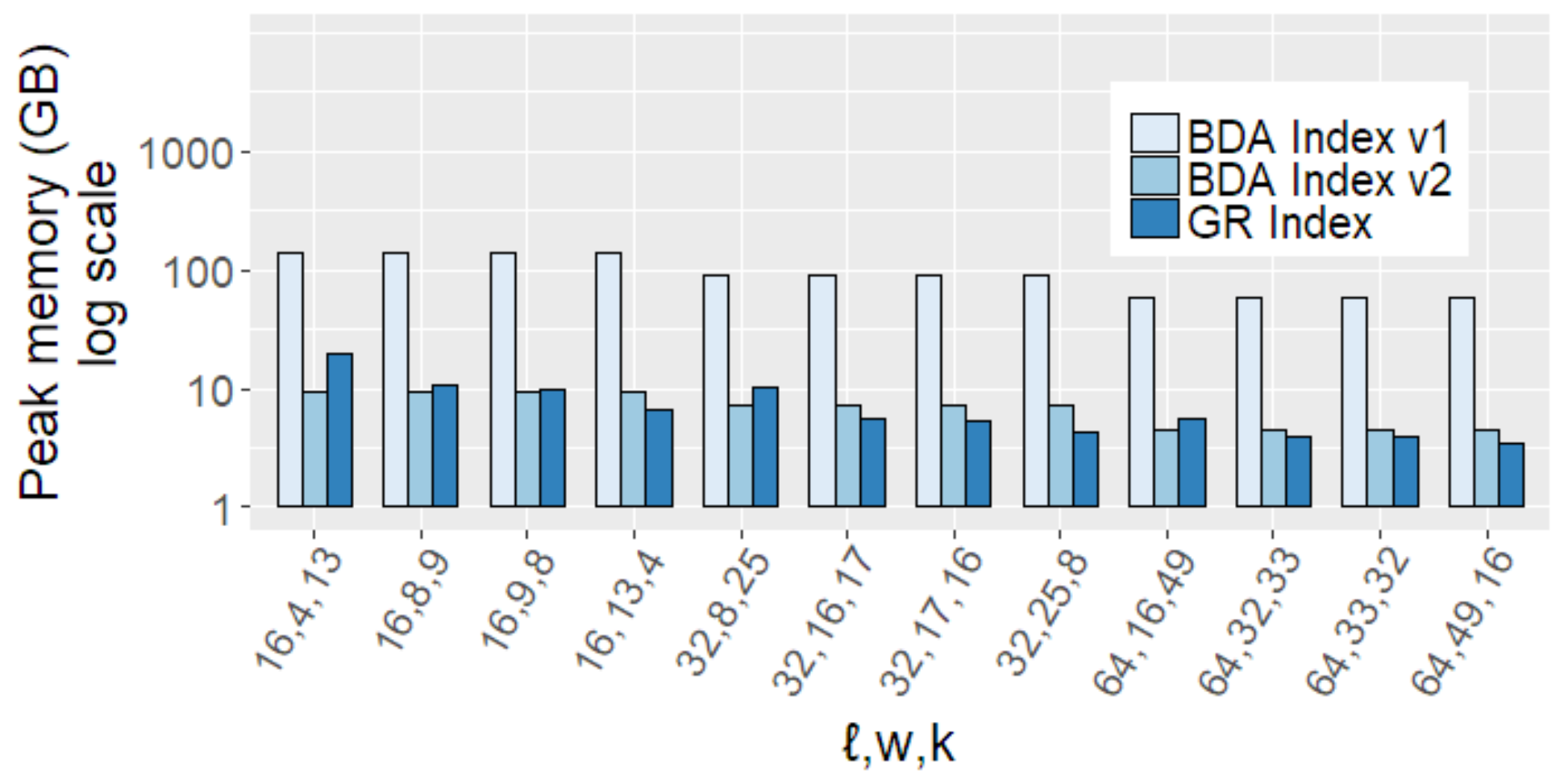}
     \caption{ENGLISH}
     \end{subfigure}\hspace{.2cm}
     \begin{subfigure}[b]{0.453\textwidth}
     \includegraphics[width=1\linewidth]{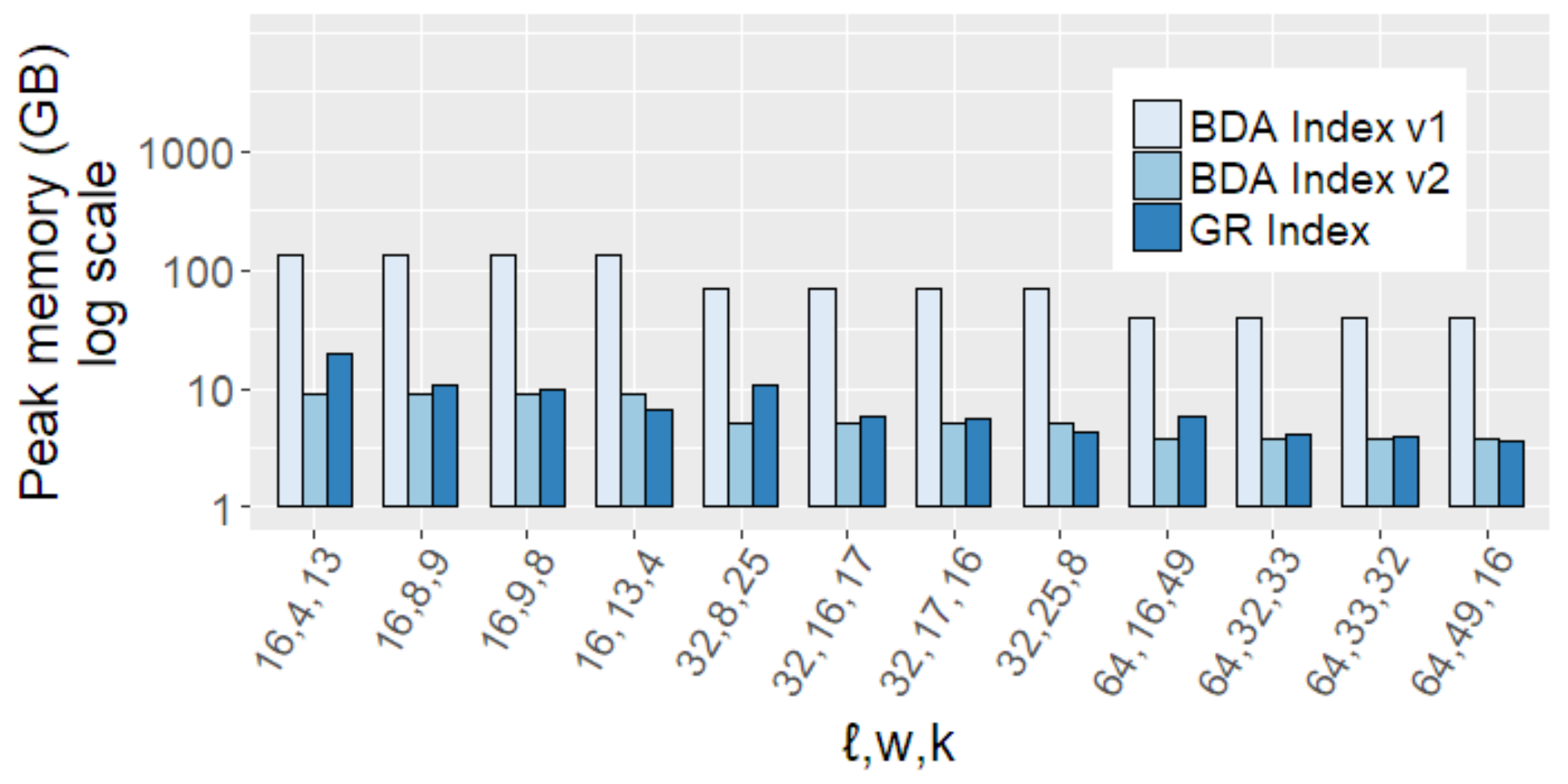}
     \caption{PROTEINS}
     \end{subfigure}\\\centering
     \begin{subfigure}[b]{0.453\textwidth}
     \includegraphics[width=1\linewidth]{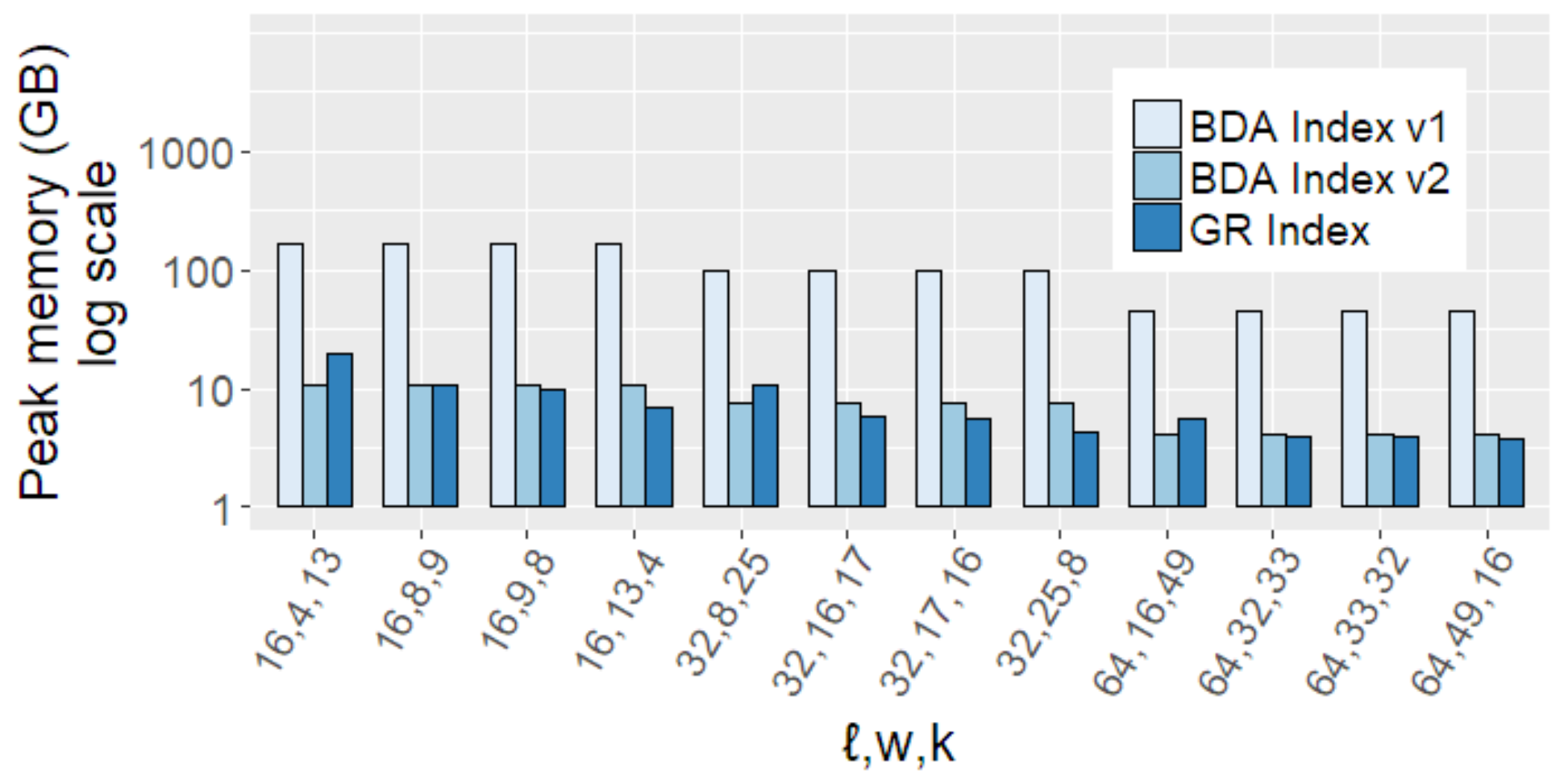}
     \caption{SOURCES}
     \end{subfigure}
     \caption{Peak memory usage (GB) vs.~$w,k$ for $\ell=w+k-1$ and the datasets of Table~\ref{tab:data}.}~\label{fig:experiment2-mem}
\end{figure}

\paragraph{Discussion.}~The proposed $\mathcal{I}_{\ell}(T)$ index, which is based on bd-anchors, has the following attributes:
\begin{enumerate}
    \item \textbf{Construction}: $\G_{\ell}(T)$ is constructed in $\cO(n)$ worst-case time and $\mathcal{I}_{\ell}(T)$ is constructed in $\cO(n +|\G_{\ell}(T)|\sqrt{\log(|\G_{\ell}(T)|)})$ worst-case time. These time complexities are near-linear in $n$ and do not depend on the alphabet $\Sigma$ as long as $|\Sigma|=n^{\cO(1)}$, which is true for virtually any real scenario.
    \item \textbf{Index Size}: By Theorem~\ref{the:main}, $\mathcal{I}_{\ell}(T)$ can occupy $\cO(|\G_{\ell}(T)|)$ space. By Lemma~\ref{lem:reduced}, the size of $\G_{\ell}(T)$ is $\cO(n/\ell)$ in expectation and so $\mathcal{I}_{\ell}(T)$ can also be of size $\cO(n/\ell)$. In practice this depends on $T$ and on the implementation of the 2D range reporting data structure.
    \item \textbf{Querying}: The $\mathcal{I}_{\ell}(T)$ index answers on-line pattern searches in near-optimal time. 
    \item \textbf{Flexibility}: Note that one would have to \emph{reconstruct} a (hash-based) index, which indexes the set of $(w,k)$-minimizers, to increase specificity or sensitivity: increasing $k$ increases the specificity and decreases the sensitivity. Our $\mathcal{I}_{\ell}(T)$ index, conceptually truncated at string depth $k$, is essentially an index based on $(w,k)$-minimizers, which additionally wrap around.  We can thus increase \emph{specificity} by considering larger $\alpha,\beta$ values or increase \emph{sensitivity} by considering smaller $\alpha,\beta$ values. This effect can be realized \emph{without reconstructing} our $\mathcal{I}_{\ell}(T)$ index: we just adapt $\alpha$ and $\beta$ upon querying accordingly. 
\end{enumerate}

\section{Similarity Search under Edit Distance}\label{sec:edit}

We show how bd-anchors can be applied to speed up similarity search under edit distance. This is a fundamental problem with myriad applications in bioinformatics, databases, data mining, and information retrieval. It has thus been studied extensively in the literature both from a theoretical and a practical point of view~\cite{DBLP:conf/vldb/KahveciS01,DBLP:conf/sigmod/ChaudhuriGGM03,DBLP:conf/stoc/ColeGL04,DBLP:conf/vldb/LiWY07,DBLP:conf/aaai/YangYK10,DBLP:conf/sigmod/ZhangHOS10,DBLP:journals/tods/Qin0XLLW13,DBLP:journals/pvldb/WangDTZ13,DBLP:conf/sigmod/WangLF12,DBLP:conf/sigmod/DengLF14,DBLP:journals/tkde/HuLBFWGX16,DBLP:journals/vldb/YuWLZDF17,DBLP:conf/kdd/Zhang020}. Let $\mathcal{D}$ be a collection of strings called \emph{dictionary}. We focus, in particular, on indexing $\mathcal{D}$ for answering the following type of top-$K$ queries: Given a query string $Q$ 
and an integer $K$, return $K$ strings from the dictionary that are closest to $Q$ with respect to edit distance. We follow a typical seed-chain-align approach as used by several bioinformatics  applications~\cite{BLAST,mummer,10.1093/bioinformatics/btw152,DBLP:journals/bioinformatics/Li18}. The main new ingredients we inject, with respect to this classic approach, is that we use: (1) bd-anchors as seeds; and (2) $\mathcal{I}_{\ell}$ to index the dictionary $\mathcal{D}$, for some integer parameter $\ell>0$.

\paragraph{Construction.}~We require an integer parameter $\ell>0$ defining the order of the bd-anchors. We set $T=S_1\ldots S_{|\mathcal{D}|}$, where $S_i\in \mathcal{D}$, compute the bd-anchors of order $\ell$ of $T$, and construct the $\mathcal{I}_{\ell}(T)$ index (see Section~\ref{sec:index}) using the bd-anchors. 

\paragraph{Querying.}~We require two parameters $\tau\geq 0$ and $\delta\geq 0$. The former parameter controls the sensitivity of our filtering step (Step 2 below); and the latter one controls the sensitivity of our verification step (Step 3 below). Both parameters trade accuracy for speed.
\begin{enumerate}
    \item For each query string $Q$, we compute the bd-anchors of order $\ell$. For every bd-anchor $j_Q$, we take an arbitrary fragment (e.g.~the leftmost) of length $\ell$ anchored at $j_Q$ as the \emph{seed}. Let this fragment start at position $i_Q$. This implies a value for $\alpha$ and $\beta$, with $\alpha+\beta=\ell+1$; specifically for $Q[i_Q\dd i_Q+\ell-1]$ we have $Q[i_Q\dd j_Q]=Q[j_Q-\alpha+1\dd j_Q]$ and $Q[j_Q\dd i_Q+\ell-1]=Q[j_Q\dd j_Q+\beta-1]$.
    For every bd-anchor $j_Q$, we query $\overleftarrow{Q[j_Q-\alpha+1\dd j_Q]}$ in $\T^{L}_{\ell}(T)$ and $Q[j_Q\dd j_Q+\beta-1]$ in $\T^{R}_{\ell}(T)$ and collect
    all $(\alpha,\beta)$-hits.
    \item Let $\tau\geq 0$ be an input parameter and let $L_{Q,S}=(q_1,s_1),\ldots,(q_k,s_h)$ be the list of all $(\alpha,\beta)$-hits between the queried fragments of string $Q$ and fragments of a string $S\in \mathcal{D}$. 
    If $h<\tau$, we consider string $S$ as not found. 
    The intuition here is that if $Q$ and $S$ are sufficiently close with respect to edit distance, they would have a relatively long $L_{Q,S}$~\cite{mummer}.
    If $h\geq \tau$, we sort the elements of $L_{Q,S}$ with respect to their first component. 
    (This comes for free because we process $Q$ from left to right.)
    We then compute a \emph{longest increasing subsequence} (LIS) in $L_{Q,S}$ with respect to the second component, which \emph{chains} the $(\alpha,\beta)$-hits, in $\cO(h\log h)$ time~\cite{schensted_1961} per $L_{Q,S}$ list. We use the LIS of $L_{Q,S}$ to \emph{estimate} the  \emph{identity score} (total number of matching letters in a fixed alignment) for $Q$ and $S$, which we denote by $E_{Q,S}$, based on the occurrences of the $(\alpha,\beta)$-hits in the LIS. 
    \item  Let $\delta\geq 0$ be an input parameter and let $E_K$ be the $K$th largest estimated identity score. We extract, as candidates, the ones whose estimated identity score is at least $E_K-\delta$. 
    For every candidate string $S$, we close the gaps between the occurrences of the $(\alpha,\beta)$-hits in the LIS using dynamic programming~\cite{SovietUnion}, thus computing an \emph{upper bound} on the edit distance between $Q$ and $S$ (UB score).
    In particular, closing the gaps consists in summing up the exact edit distance for all pairs of fragments (one from $S$ and one from $Q$) that lie in between the $(\alpha,\beta)$-hits.
    We return $K$ strings from the list of candidates with the lowest UB score. If $\delta=0$, we return $K$ strings with the highest $E_{Q,S}$ score.
\end{enumerate}

\paragraph{Index Evaluation.}~We compared our algorithm, called \textsf{BDA Search}, to \textsf{Min Search}, the state-of-the-art tool for top-$K$ similarity search under edit distance proposed by Zhang and Zhang  in~\cite{DBLP:conf/kdd/Zhang020}. The main concept used in \textsf{Min Search} is the rank of a letter in a string, defined as the size of the neighborhood of the string in which the letter has the minimum hash value. Based on this concept, \textsf{Min Search} partitions each string in the dictionary $\mathcal{D}$ into a hierarchy of substrings and then builds an index comprised of a set of hash tables, so that strings having common substrings and thus small edit distance are grouped into the same hash table. To find the top-$K$ closest strings to a query string, \textsf{Min Search} partitions the query string based on the ranks of its letters and then traverses the hash tables comprising the index. Thanks to the index and the use of several filtering tricks, \textsf{Min Search} is at least one order of magnitude faster with respect to query time than popular alternatives~\cite{DBLP:journals/vldb/YuWLZDF17,DBLP:conf/sigmod/ZhangHOS10,icdecompetitor3}.

We implemented two versions of \textsf{BDA Search}: \textsf{BDA Search v1} which is based on \textsf{BDA Index v1}; and \textsf{BDA Search v2} which is based on \textsf{BDA Index v2}. For \textsf{Min Search}, we used the \texttt{C++} implementation from~\url{https://github.com/kedayuge/Search}.  

We constructed  synthetic datasets, referred to as \SYN, in a way that enables us to study the impact of different parameters and efficiently identify the ground truth (top-$K$ closest strings to a query string with respect to edit distance). Specifically, we first generated $50$ query strings and then constructed a cluster of $K$ strings around each query string. To generate the query strings, we started  from an arbitrary string $Q$ of length $|Q|=1000$ from a real dataset of protein sequences, used in~\cite{DBLP:conf/kdd/Zhang020}, and generated a string $Q'$ that is at edit distance $e$ from $Q$, by performing $e$ edit distance operations, each with equal probability. Then, we treated $Q'$ as $Q$ and repeated the process to generate the next query string.   To create the clusters, we first added each query string into an initially empty cluster and then added $K-1$ strings, each at edit distance at most $e'<e$ from the query string. The strings were generated by performing at most $e'$ edit distance operations, each with equal probability. Thus, each cluster contains the top-$K$ closest strings to the query string of the cluster. We used $K\in\{5,10,15,20,25\}$, $d=\frac{e}{|Q|}\in \{0.1,0.15,0.2,0.25,0.3\}$, and $d'=\frac{e'}{|Q|}=d-0.05$.  We evaluated query answering accuracy using the F1 score~\cite{manning}, expressed as the harmonic mean of precision and recall\footnote{Precision is the ratio between the number of returned strings that are among the top-$K$ closest strings to a query string and the number of all returned strings. Recall is the ratio between the number of returned strings that are among the top-$K$ closest strings to a query string and $K$. Since all tested algorithms return $K$ strings, the F1 score in our experiments is equal to precision and equal to recall.}.
For \textsf{BDA Search}, we report results for $\tau=0$ (full sensitivity during filtering) and $\delta=0$ (no sensitivity during verification), as it was empirically determined to be a reasonable trade-off between accuracy and speed. 
For \textsf{Min Search}, we report results using its default parameters from~\cite{DBLP:conf/kdd/Zhang020}.

We plot the F1 scores and average query time in Figures~\ref{fig:f1} and~\ref{fig:syn-time}, respectively. 
All methods achieved  almost perfect accuracy, in all tested cases. \textsf{BDA Search} slightly outperformed \textsf{Min Search} (by up to $1.1\%$), remaining accurate even for large $\ell$; the changes to F1 score for \textsf{Min Search} as $\ell$ varies are because the underlying method is randomized. However, both versions of \textsf{BDA Search} were \emph{more than one order of magnitude faster} than \textsf{Min Search} on average (over all results of Figure~\ref{fig:syn-time}), with \textsf{BDA Search v1} being 2.9 times slower than \textsf{BDA Search v2} on average, due to the inefficiency of the range tree implementation of CGAL. Furthermore, both versions of \textsf{BDA Search} scaled better with respect to $K$. For example, the average query time for \textsf{BDA Search v1} became 2 times larger when $K$ increased from $5$ to $25$ (on average over $\ell$ values), while that for \textsf{Min Search} became 5.4 times larger on average. The reason is that verification in \textsf{Min Search}, which increases the accuracy of this method, becomes increasingly expensive as $K$ gets larger. The peak memory usage for these experiments is reported in Figure~\ref{fig:edit-syn-mem}. Although \textsf{Min  Search} outperforms \textsf{BDA Search} in terms of memory usage,  \textsf{BDA Search v2} still required a very small amount of memory (less than 1GB).  \textsf{BDA Search v1} required more memory for the reasons mentioned in Section~\ref{sec:index}. 

\begin{figure}[!t]
     \begin{subfigure}[b]{0.45\textwidth}
     \includegraphics[width=1\linewidth]{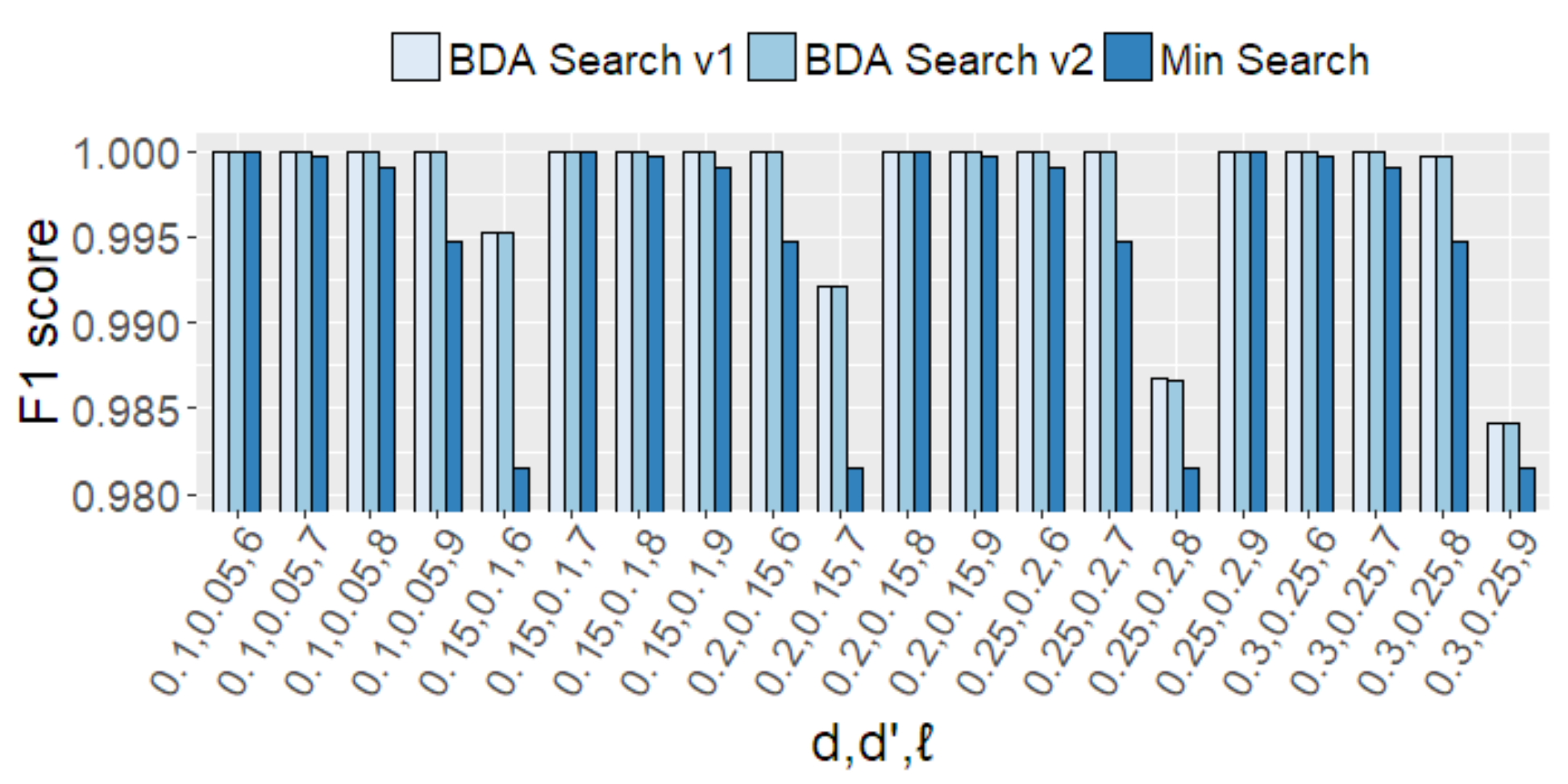}
     \caption{SYN}
     \end{subfigure}\hspace{+2mm}
     \begin{subfigure}[b]{0.45\textwidth}
     \includegraphics[width=1\linewidth]{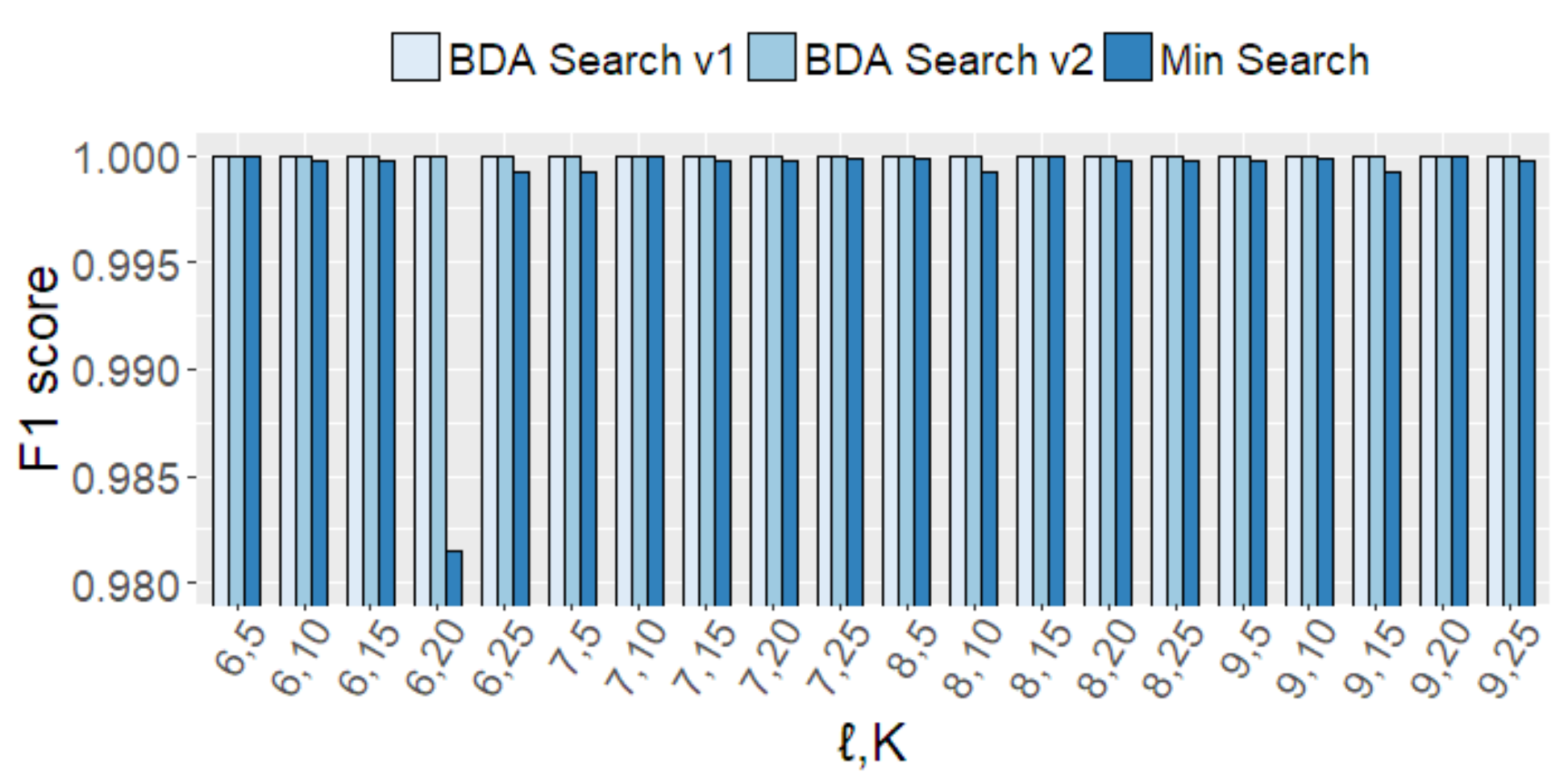}
     \caption{SYN}
     \end{subfigure}
     \caption{F1 score vs.~(a) $d$, $d'$, $\ell$, for $K=20$, and (b) $\ell$, $K$, for $d=0.15$ and $d'=0.1$.}\label{fig:f1}
\end{figure}

\begin{figure}[!t]
     \begin{subfigure}[b]{0.45\textwidth}
     \includegraphics[width=1\linewidth]{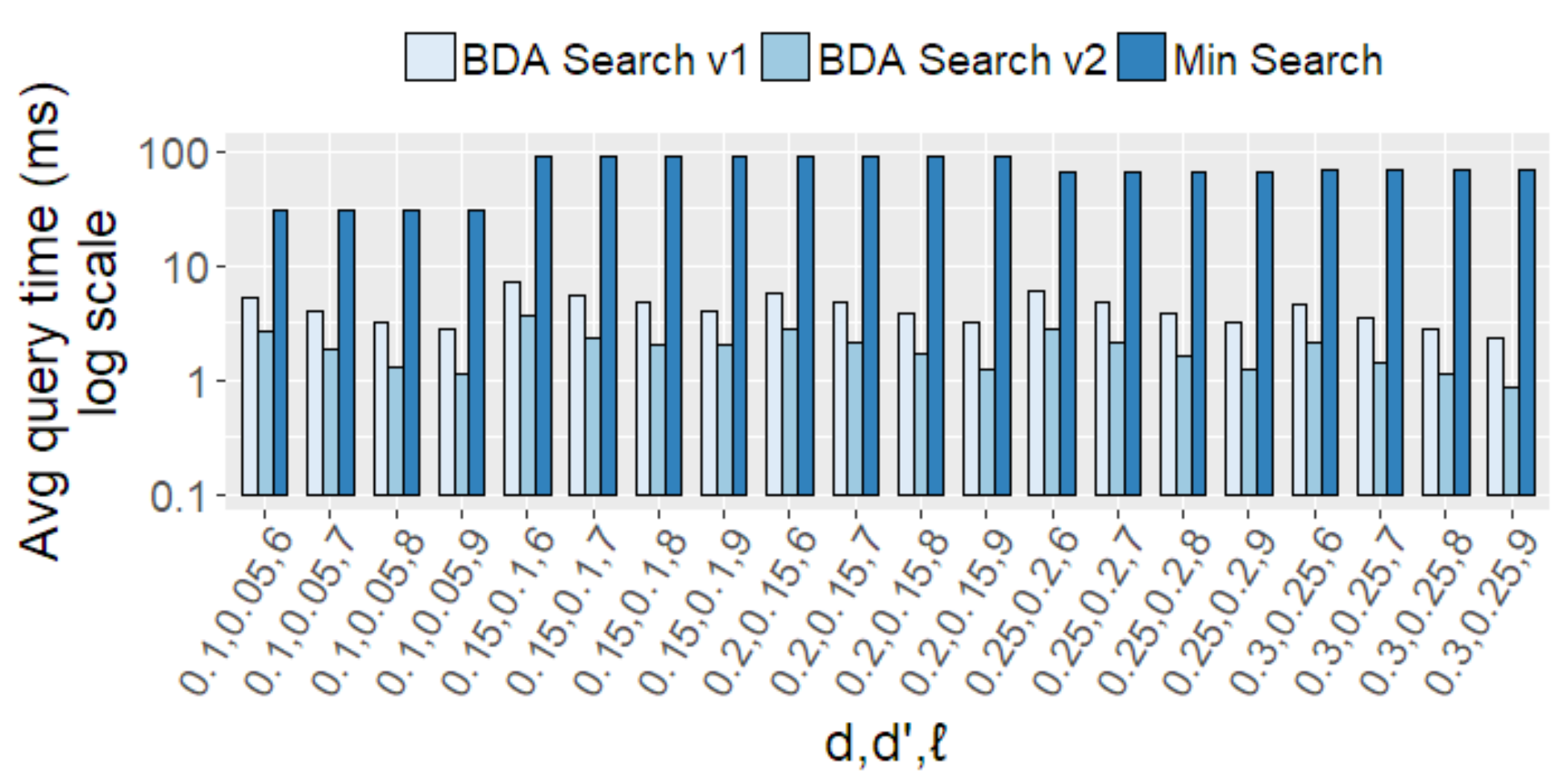}
     \caption{SYN}\label{fig:syn-time-vsd}
     \end{subfigure}\hspace{+2mm}
     \begin{subfigure}[b]{0.45\textwidth}
     \includegraphics[width=1\linewidth]{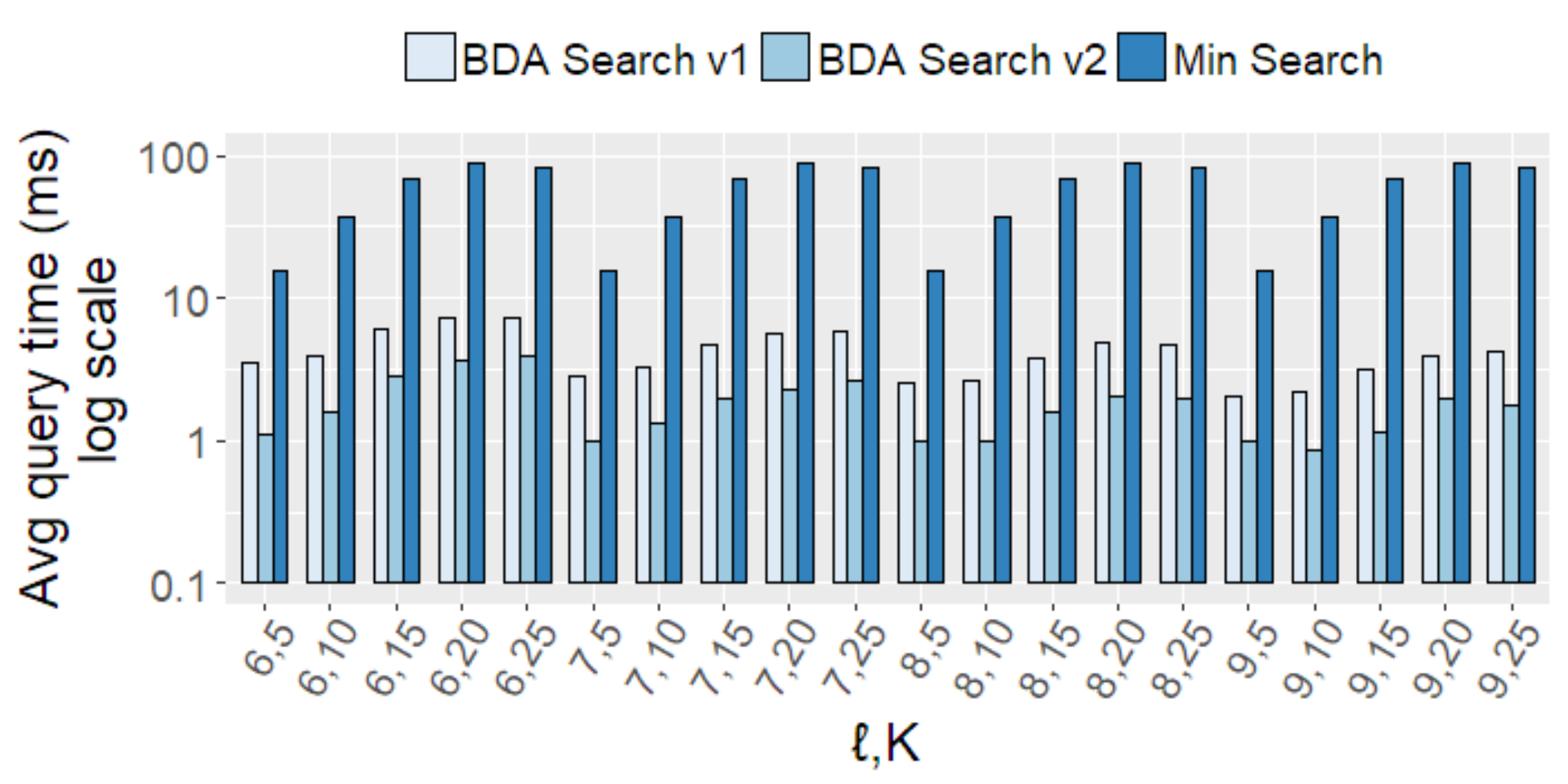}
     \caption{SYN}
     \end{subfigure}\label{fig:syn-time-vsK}
\caption{Average query time (ms) vs.~(a) $d$, $d'$, $\ell$, for $K=20$, and (b) $\ell$, $K$, for $d=0.15$ and $d'=0.1$.}\label{fig:syn-time}
\end{figure}

\begin{figure}[!t]
\begin{subfigure}[b]{0.453\textwidth}
     \includegraphics[width=1\linewidth]{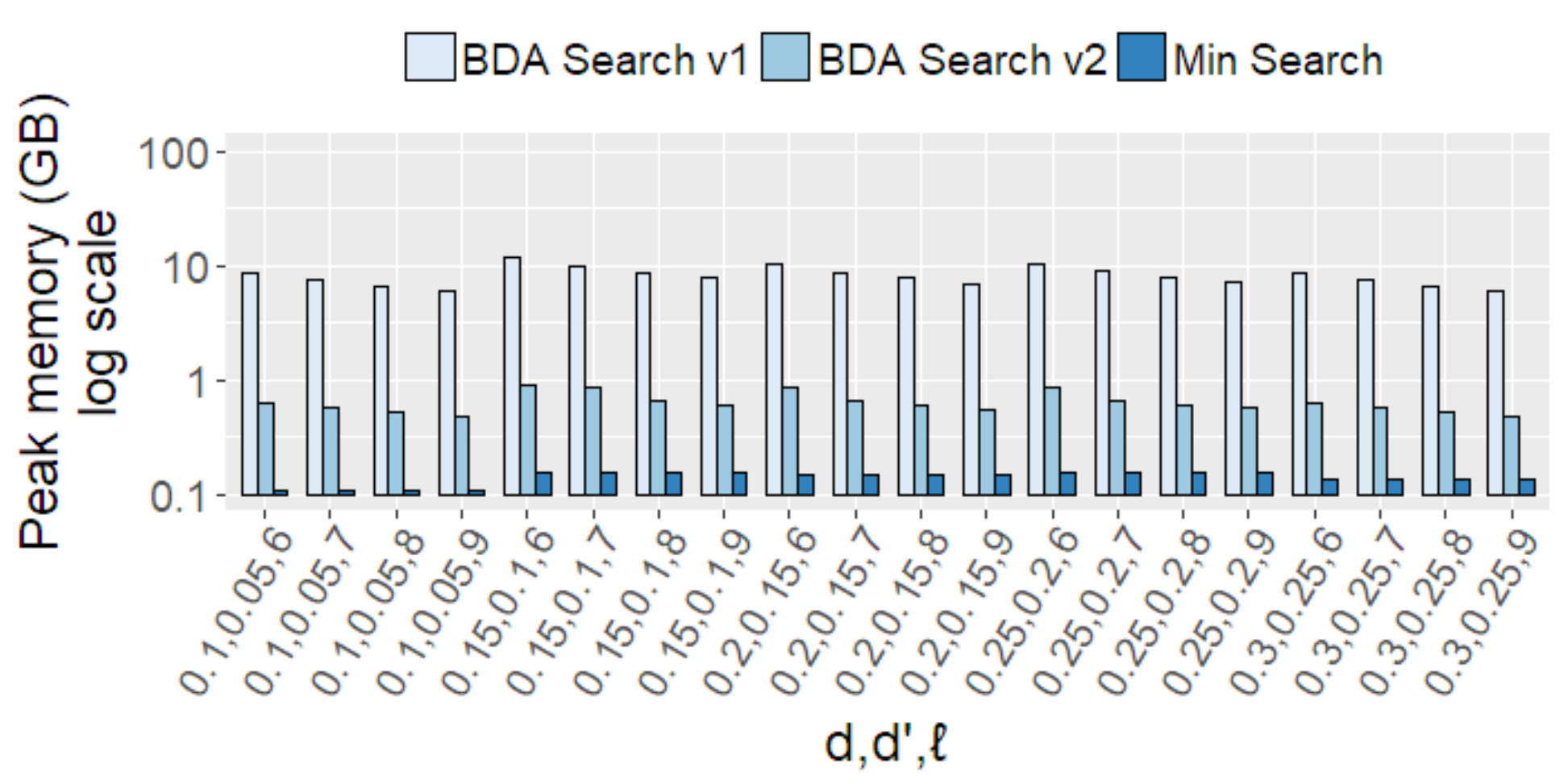}
     \caption{SYN}
     \end{subfigure}\hspace{+1mm}
     \begin{subfigure}[b]{0.453\textwidth}
     \includegraphics[width=1\linewidth]{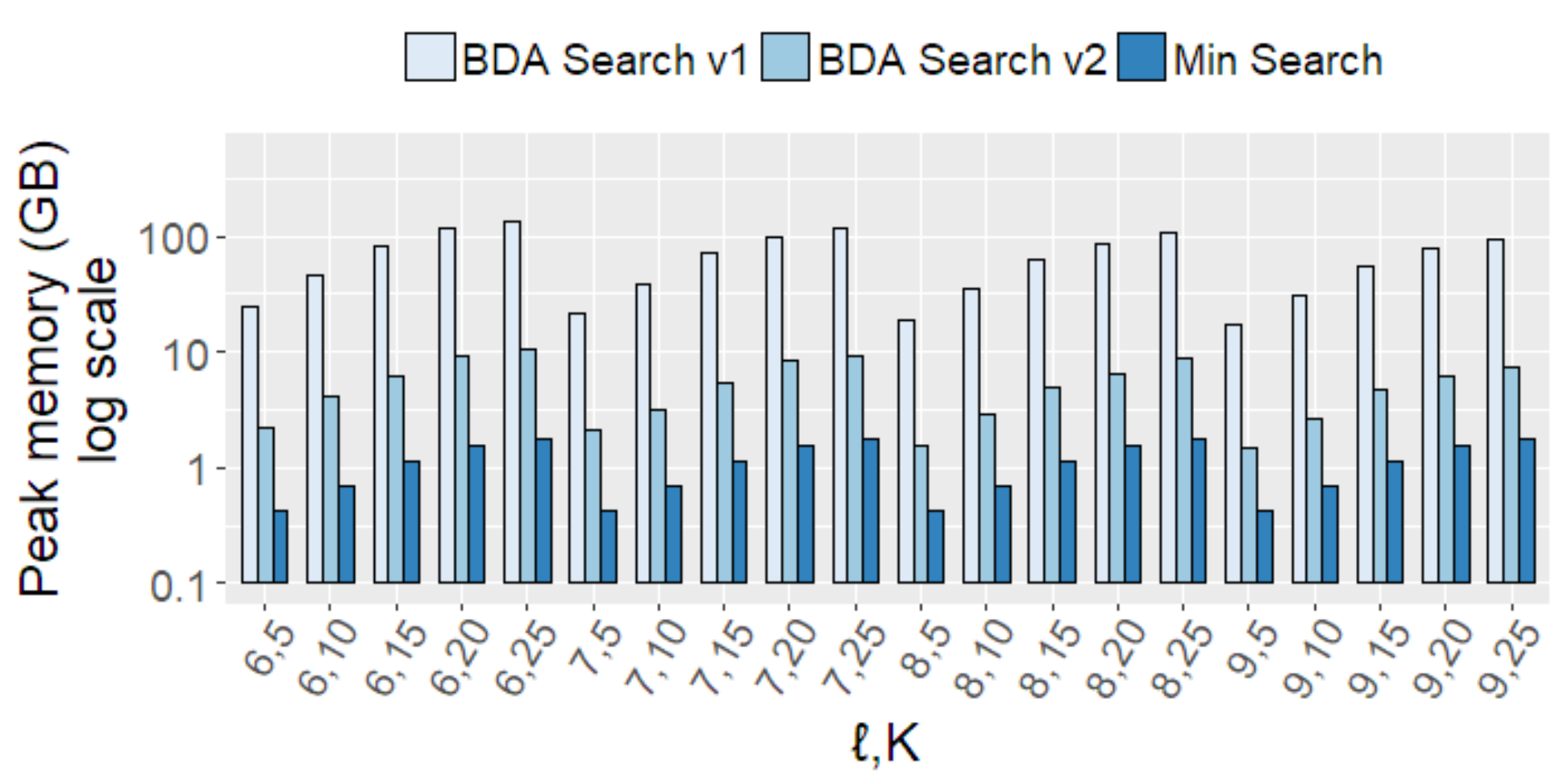}
     \caption{SYN}
     \end{subfigure}
     \caption{Peak memory usage (GB) vs.~(a) $d$, $d'$, $\ell$, for $K=20$, and (b) $\ell$, $K$, for $d=0.15$ and $d'=0.1$.}\label{fig:edit-syn-mem}
\end{figure}

\paragraph{Discussion.}~\textsf{BDA Search} outperforms \textsf{Min Search} in accuracy while being more than one order of magnitude faster in query time. These results are very encouraging because the efficiency of \textsf{BDA Search} is entirely due to injecting bd-anchors and not due to any further filtering tricks such as those employed by \textsf{Min Search}. \textsf{Min Search} clearly outperforms \textsf{BDA Search} in memory usage, albeit the memory usage of \textsf{BDA Search v2} is still quite modest. We defer an experimental evaluation using real datasets to the journal version of our work.

\section{Other Works on Improving Minimizers}\label{sec:related}
Although every sampling mechanism based on minimizers primarily aims at satisfying Properties 1 and 2, different mechanisms employ total orders that lead to substantially different total numbers of selected minimizers. Thus,  research on minimizers has focused on determining total orders which lead to the lowest possible density (recall that the density is defined as the number of selected length-$k$ substrings over the length of the input string). In fact, much of the literature focuses on the \emph{average case}~\cite{DBLP:conf/wabi/OrensteinPMSK16,DBLP:journals/bioinformatics/MarcaisPBOSK17,DBLP:journals/bioinformatics/MarcaisDK18,DBLP:conf/recomb/EkimBO20,DBLP:journals/bioinformatics/ZhengKM20}; namely, the lowest expected density when the input string is random. In practice, many works use a ``random minimizer'' where the order is defined by choosing a permutation of all the length-$k$ strings at random (e.g.~by using a hash function, such as the Karp-Rabin fingerprints~\cite{DBLP:journals/ibmrd/KarpR87}, on the length-$k$ strings). Such a randomized mechanism has the benefit of being easy to implement and providing good expected performance in practice. 

\paragraph{Minimizers and Universal Hitting Sets.}~A \emph{universal hitting set} (UHS) is an unavoidable set of length-$k$ strings, i.e., it is a set of length-$k$ strings that ``hits'' every $(w+k-1)$-long fragment of every possible string. The theory of universal hitting sets~\cite{DBLP:conf/wabi/OrensteinPMSK16,DBLP:journals/bioinformatics/MarcaisDK18,DBLP:conf/stoc/KempaK19,DBLP:conf/recomb/ZhengKM20} plays an important role in the current theory for minimizers with low density on average. In particular, if a UHS has small size, it generates minimizers with a provable upper-bound on their density. However, UHSs are less useful in the string-specific case for two reasons~\cite{Zheng2021}: (1) the requirement that a UHS has to hit every $(w+k-1)$-long fragment of every possible string is too strong; and (2) UHSs are too large to provide a meaningful upper-bound on the density in the string-specific case. 
Therefore, since in many practical scenarios the input string is known and does not change frequently, we try to optimize the density for one particular string instead of optimizing the average density over a random input. 

\paragraph{String-Specific Minimizers.}~In the string-specific case, minimizers sampling mechanisms may employ frequency-based orders~\cite{DBLP:journals/bioinformatics/ChikhiLM16,DBLP:journals/bioinformatics/JainRZCWKP20}. In these orders, length-$k$ strings occurring  less  frequently  in  the string compare  less  than the ones occurring more frequently.  The intuition~\cite{Zheng2021} is to obtain a sparse sampling by selecting infrequent length-$k$ strings which should be spread apart in the string. However, there is no theoretical guarantee that a frequency-based order gives low density minimizers (there are many  counter-examples). Furthermore, frequency-based orders do not always give minimizers with lower density in practice. For instance,~the two-tier classification (very frequent vs.~less frequent length-$k$ strings) in the work of \cite{DBLP:journals/bioinformatics/JainRZCWKP20} outperforms an order that strictly follows frequency of occurrence. 

A different approach to constructing string-specific minimizers  is to  start  from  a UHS  and  to  remove  elements from it, as  long  as  it  still  hits  every $(w+k-1)$-long  fragment  of the input  string~\cite{DBLP:conf/bcb/DeBlasioGKM19}. Since this approach starts with a UHS that is not related to the string, the improvement in density may not be significant~\cite{Zheng2021}. Additionally, current methods~\cite{DBLP:conf/recomb/EkimBO20} employing this approach are computationally limited to using $k\leq 16$, as the size of the UHS increases exponentially with $k$. Using such small $k$ values may not be appropriate in some applications.

\paragraph{Other Improvements.}~When $k\approx w$, minimizers with expected density of $1.67/w+o(1/w)$ on a random string can be constructed using the approach  of~\cite{DBLP:journals/bioinformatics/ZhengKM20}. Such minimizers have  guaranteed expected density less than $2/(w+1)$ and work for infinitely many $w$ and $k$. The approach of~\cite{DBLP:journals/bioinformatics/ZhengKM20} also does not require the use of expensive heuristics to precompute and store a large set of length-$k$ strings, unlike some methods~\cite{DBLP:conf/wabi/OrensteinPMSK16,DBLP:conf/bcb/DeBlasioGKM19,DBLP:conf/recomb/EkimBO20} with low density in practice.  

The notion of \emph{polar set}, which can be seen as complementary to that of UHS, was recently introduced in~\cite{Zheng2021}. While a UHS is a set of length-$k$ strings that intersect with  every $(w+k-1)$-long  fragment at least once, a polar set is a set of length-$k$ strings that intersect with any fragment at most once.  The  construction of a polar set builds upon sets of length-$k$ strings that are sparse in the input  string. Thus, the minimizers derived from these polar sets have provably tight bounds on their density. Unfortunately, computing optimal polar sets is NP-hard, as shown in~\cite{Zheng2021}. Thus, the work of~\cite{Zheng2021} also proposed a heuristic for computing feasible ``good enough'' polar sets. A main disadvantage of this approach is that when each length-$k$ string occurs frequently in the input string, it becomes hard to select many length-$k$ strings without violating the polar set condition.

\bibliography{references.bib}

\end{document}